\newtheorem{observation}{Observation}[section]
\newtheorem{definitionhack}{Definition}[section]
\newenvironment{definition}{\begin{definitionhack}\normalfont}{\end{definitionhack}}
\title{Treetopes and their Graphs}
\author{David Eppstein\thanks{Computer Science Department, University of California, Irvine; Irvine, CA, USA. This material is based upon work supported by the National Science Foundation under Grant CCF-1228639 and by the Office of Naval Research under Grant No. N00014-08-1-1015.}}
\date{ }
\begin{document}
\maketitle

\begin{abstract}
We define treetopes, a generalization of the three-dimensional roofless polyhedra (Halin graphs) to arbitrary dimensions. Like roofless polyhedra, treetopes have a designated base facet such that every face of dimension greater than one intersects the base in more than one point. We prove an equivalent characterization of the $4$-treetopes using the concept of clustered planarity from graph drawing, and we use this characterization to recognize the graphs of $4$-treetopes in polynomial time. This result provides one of the first classes of 4-polytopes, other than pyramids and stacked polytopes, that can be recognized efficiently from their graphs.
\end{abstract}

\section{Introduction}
According to Steinitz's theorem~\cite{Ste-EMW-22}, the graphs of three-dimensional convex polyhedra can be characterized in purely graph-theoretic terms: they are the $3$-vertex-connected planar graphs (with more than three vertices). Based on this result, and a long line of research on algorithmic planarity testing \cite{HopTar-JACM-74,BooLue-JCSS-76,ChiNisAbe-JCSS-85,ShiHsu-TCS-99,BoyMyr-JGAA-04,FraOssRos-IJFCS-06}, it is possible to solve the recognition problem for graphs of $3$-polyhedra in linear time.
However, the situation for higher-dimensional polytopes is quite different.
Recognizing the face lattice of a polytope is complete for the existential theory of the reals, even for the special case of four-dimensional polytopes~\cite{RicZie-BAMS-95}. This puts the problem in a complexity class that, although solvable in polynomial space, is at least as hard as the NP-complete problems~\cite{Sch-GD-09}, and strongly suggests that recognition of the graphs of polytopes is also hard.

A natural response to this hardness result is to search for special classes of polytopes whose recognition problem is easier. One easy case is given by the four-dimensional pyramids (prisms over three-dimensional polyhedra): their graphs are \emph{apex graphs} (the graphs that can be made planar by deleting one vertex), and are easy to recognize by searching for a universal vertex and testing planarity and $3$-connectivity of the remaining graph. (Apex graphs in which the apex is not necessarily universal may also be recognized efficiently~\cite{Kaw-FOCS-09}.) The $d$-dimensional \emph{stacked polytopes}, formed by gluing simplices together on shared faces, have as their graphs the $(d+1)$-trees with the property that each $d$-clique is a subgraph of at most two $(d+1)$-cliques~\cite{KocPer-SW-76}; this characterization allows these polytopes to be recognized in polynomial time regardless of dimension. It is also possible to recognize the graphs of a class of generalized prisms, formed as the Cartesian product of any number of line segments, polygons, and three-dimensional polyhedra, in polynomial time~\cite{FeiHerSch-DAM-85,ImrPet-DM-07}. However, beyond these special cases and their combinations, very little is known.

In this paper we introduce another class of four-dimensional polytopes whose recognition problem is polynomially solvable, generalizing the case of the pyramids discussed above. The polytopes that we study are defined in terms of a designated \emph{base} facet, with the property that every face of dimension greater than one intersects the base in more than one point. We call these polytopes \emph{treetopes}, because the edges of the polytope that do not lie within the base must form a tree. The definition can be applied to polytopes of any dimension; the $3$-treetopes are exactly the polyhedral realizations of Halin graphs, and the treetopes of higher dimensions include all pyramids and  Cartesian products of pyramids with other polytopes.

We provide a graph-theoretic characterization of $4$-treetopes by relating them to a standard problem in graph drawing, \emph{clustered planarity}. In this drawing style, a planar graph is given together with a hierarchical clustering on its vertices. It must be drawn without crossings, representing the clusters as Jordan curves that surround only the parts of the graph in their cluster~\cite{FenCohEad-ESA-95,Dah-LATIN-98,GutJunLei-GD-02,CorDiB-SCG-05,CorDiBFra-JGAA-08}. We define a restricted type of clustering of planar graphs, which we call a \emph{well-connected clustering}, and we show that the graphs of $4$-treetopes are exactly the graphs that can be formed from a well-connected clustering by adding an additional vertex for each cluster. Based on this result, we also characterize the $4$-treetope graphs in terms of certain contraction and expansion operations (replacing a cluster by a single vertex or the inverse operation), and we use these operations to build a realization of any given $4$-treetope. Finally, we describe a polynomial-time recognition algorithm for the graphs of $4$-treetopes, which uses only the graph structure and not its geometry to find a valid sequence of the same contraction operations.

Our primary motivation for starting this research was to investigate the question: what is the higher-dimensional generalization of a Halin graph? We believe that treetopes are the answer to this question. However, the connection to clustered planarity provides us with a second motivation that comes from applications in information visualization. It is still unknown, and a major unsolved problem in graph drawing, whether clustered planar graphs can be recognized in polynomial time. Although the clusterings derived from $4$-treetopes are not difficult instances for the clustered planarity problem, our research on the graph theoretic properties of $4$-treetopes may lead to new insights that help solve this problem.

The rest of this paper is organized as follows.
In \autoref{sec:treetopes} we define treetopes, and prove some structural properties of their face lattices and their graphs that can be stated independently of their dimension.
In \autoref{sec:clustered-planarity} we turn to clustered planar graphs.
We define the cluster graph of a hierarchically clustered graph (a graph augmented by adding a new vertex for each cluster). We also define well-connected clusterings (clusterings that obey certain graph-theoretic properties analogous to the properties of treetopes), and we prove that these clusterings may be obtained by a sequence of operations in which we replace a single vertex of a graph by a new cluster.
In \autoref{sec:realization} we show how to realize each such expansion operation geometrically, proving that the cluster graphs of well-connected clusterings are exactly the graphs of 4-treetopes.
In \autoref{sec:recognition} we use the clustering-based characterization of these graphs to develop an algorithm for recognizing these graphs in polynomial time.
Finally, in \autoref{sec:properties} we discuss the sparsity and minor-containment properties of the graphs of 4-treetopes and of certain related clustered planar drawings.

\section{Treetopes}
\label{sec:treetopes}

\subsection{Definitions}

The following definitions are standard.
\begin{definition}
A \emph{polytope} is the convex hull of a finite set $S$ of points in a Euclidean space.
The \emph{faces} of a polytope are its intersections with halfspaces whose boundaries are disjoint from the relative interior of the polytope. They form a lattice in which the bottom element is the empty set and the top element is the polytope itself. The \emph{dimension} $\dim F$ of a face $F$ is one less than the minimum number of points of $S$ whose affine hull contains the face. In particular, by this definition, the dimension of the empty set is $-1$, and the faces of dimension zero (\emph{vertices}) are individual points that form a subset of $S$. The \emph{edges} of a polytope are its faces of dimension one. These are line segments, and the edges and vertices together form an undirected graph, the \emph{graph} or \emph{$1$-skeleton} of the polytope. The \emph{facets} are the faces of dimension one less than the polytope.
\end{definition}

\begin{figure}[t]
\centering\includegraphics[width=0.45\textwidth]{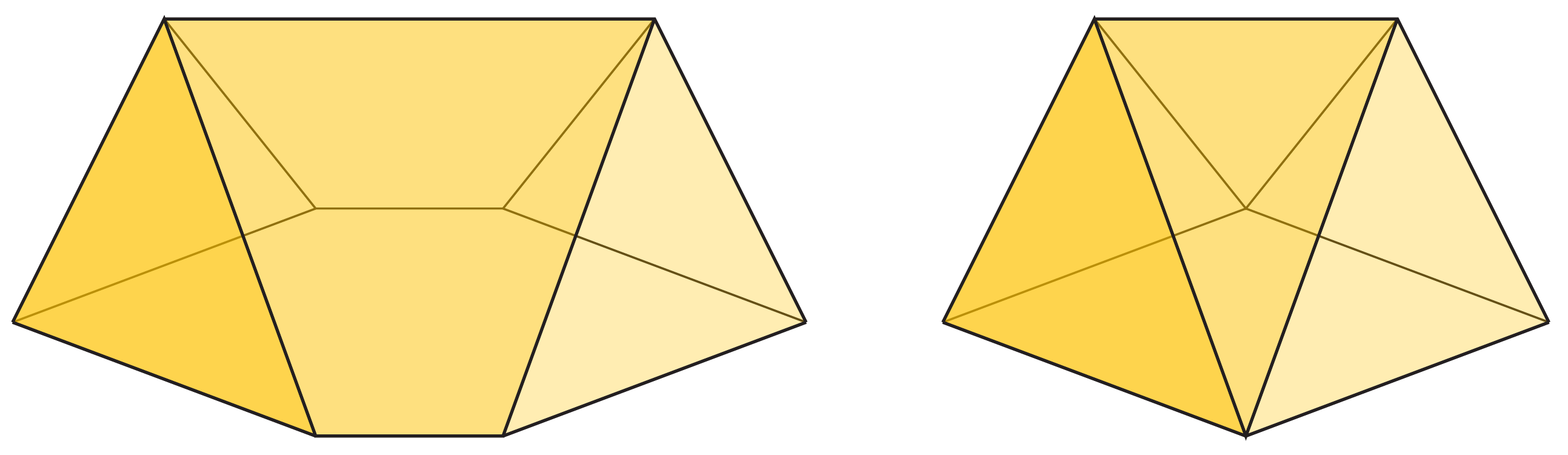}
\caption{Left: a $3$-treetope. Right: a $3$-polytope that is not a treetope (some $2$-faces intersect the base in only one vertex) but in which the faces disjoint from the base all have dimension at most one.}
\label{fig:non-treetope}
\end{figure}

The next definition is our main object of study.
\begin{definition}
We define a \emph{treetope} to be a polytope in which there exists a distinguished facet $B$ (the
\emph{base}) with the property that every face that intersects $B$ in at most one point has dimension at most one. A $k$-treetope is a treetope of dimension $k$.
\end{definition}
In particular we will be interested in $4$-treetopes.

The $3$-treetopes have long been studied~\cite{Kir-PTRSL-56} and have been called \emph{based polyhedra}~\cite{Rad-IJM-65}, \emph{roofless polyhedra}~\cite{CorNadPul-MP-83}, or \emph{domes}~\cite{DemDemUeh-CCCG-13}. Their graphs are the \emph{Halin graphs}, the graphs formed from a planar embedding of a tree without degree-two vertices by adding a cycle that connects the leaves of the tree in the order given by the embedding~\cite{Hal-CMA-71,CorNadPul-MP-83}.
Recently, we made a more general study of the polytopes in which the faces disjoint from a base facet all have bounded dimension~\cite{EppLof-DCG-13}.
As we observed, when the dimension bound is one, the faces that are disjoint from the base form a tree. However, there exist polytopes that meet this definition but are not treetopes (\autoref{fig:non-treetope}).

Examples of treetopes with arbitrary dimension $d$ are given by the \emph{pyramids} over $(d-1)$-dimensional bases, polytopes formed by the convex hull of the base plus one more vertex (the \emph{apex}) that is in general position with respect to the base. \autoref{fig:pyr-prism}, left, gives an example in which the base is a three-dimensional cube. In the figure, this cubical pyramid is projected into a three-dimensional \emph{Schlegel diagram} in which apex is shown as the point in the center of the cube. In a pyramid, every face of dimension greater than one contains two or more base vertices, because there is only one non-base vertex to include, so these shapes necessarily meet the definition of a treetope.

\begin{figure}[t]
\centering\includegraphics[scale=0.35]{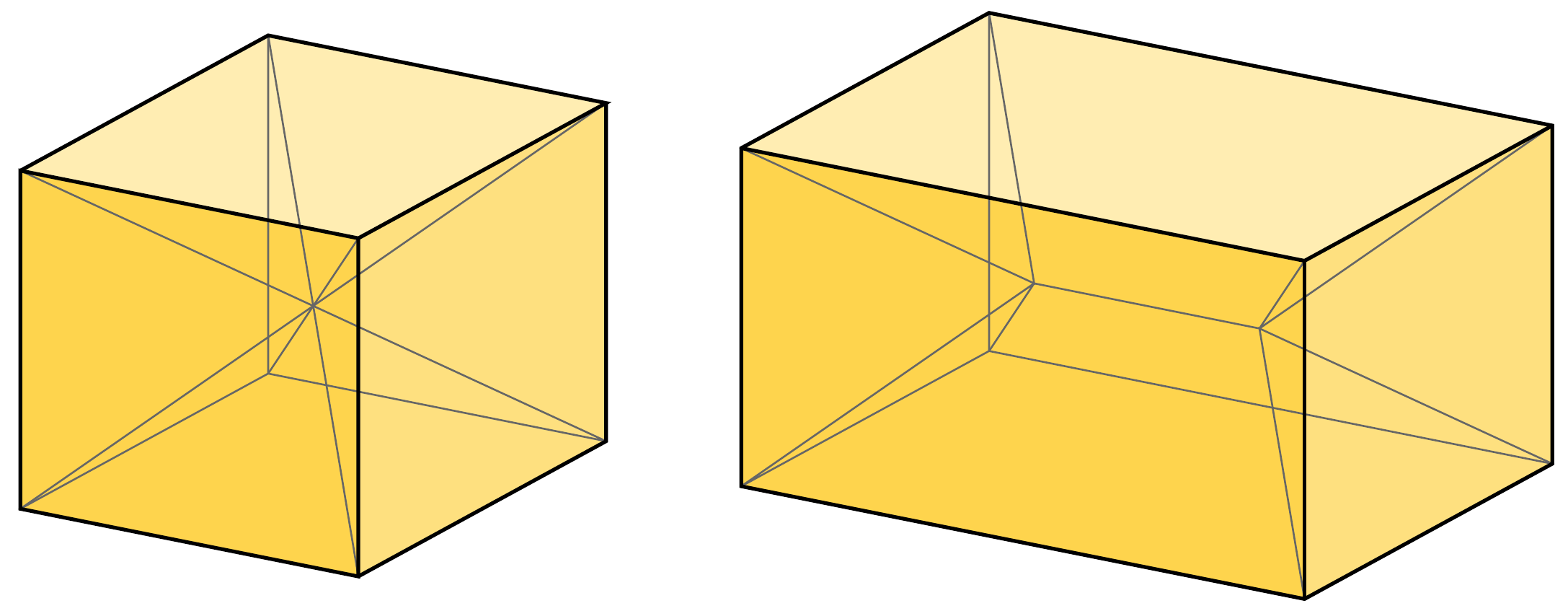}
\caption{Schlegel diagrams of the cubical pyramid (left) and the square-pyramidal prism (right). In both cases the outer face of the diagram can be taken as the base of a treetope.}
\label{fig:pyr-prism}
\end{figure}

 Another class of examples of treetopes is given by the \emph{prisms} over $(d-1)$-dimensional pyramids; that is, the polytopes formed as the Cartesian product $P=Q\times I$ of a pyramid $Q$ with a line segment~$[0,1]$ (\autoref{fig:pyr-prism}, right). If $B$ is the base of $Q$, then $B\times I$ may be taken as the base of $P$; a simple case analysis shows that with this choice of base $P$ is a treetope. The same analysis shows that the Cartesian products of pyramids with any other polytopes are again treetopes. However, it is not necessarily true that prisms over treetopes remain treetopes. For instance, the two-dimensional square is a treetope (as is every two-dimensional polygon) but the prism over the square (the three-dimensional cube) is not a treetope.

\begin{definition}
We say that a treetope is in \emph{general position} if no two vertices have equal nonzero distance from its base.
\end{definition}
For the purposes of understanding the combinatorial structure of treetopes we may assume without loss of generality that any treetope is in general position, for if not it can be perturbed into general position by a projective transformation that does not change its combinatorial structure.

\subsection{Face structure}

In this section we examine the face structure of treetopes. As we will show, the edges that do not lie in the base form a tree, justifying the name. This tree also has a close connection to the rest of the faces.

\begin{definition}
If $v$ is any vertex of a treetope $P$, we define a \emph{parent} of $v$ to be a vertex adjacent to $v$ in the graph of $P$ and farther than $v$ from the base hyperplane of $P$, and we define a \emph{root} of $P$ to be a vertex with no parent. (Shown for a $3$-treetope in \autoref{fig:Topographic}.)
\end{definition}

\begin{definition}
Let $v$ be a vertex of a $d$-dimensional polytope. Then the \emph{link} of $v$ is the $(d-1)$-dimensional polytope formed by intersecting $P$ with any hyperplane that is disjoint from $v$ but passes through the interior of $P$ near enough to $v$ so that all other vertices of $P$ are on the other side of the hyperplane. The precise geometry of the link depends on the choice of hyperplane, but its combinatorial structure does not.
\end{definition}

\begin{figure}[t]
\centering
\includegraphics[width=0.35\textwidth]{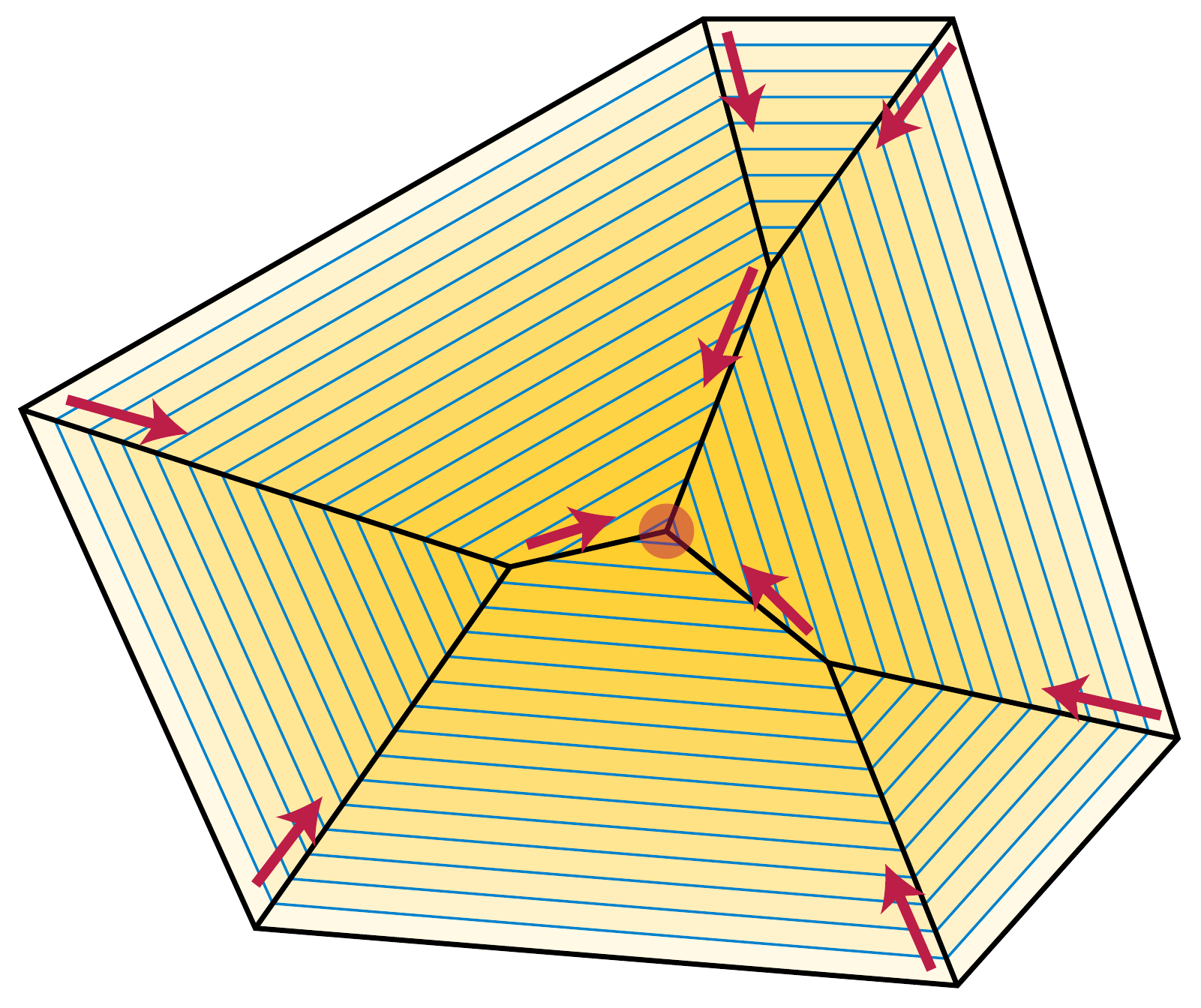}
\caption{Top view of a $3$-treetope, projected perpendicularly onto its base face, and showing level sets for the distance from the base plane. The root vertex (farthest from the base) is marked by a red disk; the red arrows show the unique parent (a vertex farther from the base) for each non-root vertex.}
\label{fig:Topographic}
\end{figure}

\begin{lemma}
\label{lem:parent}
For a treetope in general position, every vertex has at most one parent and there is exactly one root.\end{lemma}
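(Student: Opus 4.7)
The plan is to treat the two assertions separately. For ``at most one parent'' I would use the link $L_v$ of $v$ together with the treetope property applied to $2$-faces of $P$ at $v$. For ``exactly one root'' I would combine general position with the standard linear-programming fact that on the $1$-skeleton of a convex polytope, every vertex other than the maximizer of a linear function has a strictly higher neighbor. Throughout, let $f$ denote distance from the base hyperplane, regarded as a linear functional that vanishes on $B$ and is nonnegative on $P$.

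For the first assertion, fix a vertex $v$ and take $L_v$ to be the link: a $(d-1)$-polytope whose vertices correspond bijectively to the neighbors of $v$ in $P$ and whose edges correspond to the $2$-faces of $P$ at $v$ (an edge of $L_v$ joins the vertices for $u_1, u_2$ iff $v, u_1, u_2$ lie in a common $2$-face $F$, in which case $u_1, u_2$ are the two neighbors of $v$ in the polygon $\partial F$). The functional $f$ restricts to a linear functional on the affine hull of $L_v$, and at the $L_v$-vertex corresponding to neighbor $u$ it exceeds $f(v)$ iff $u$ is a parent of $v$. By the standard fact that the super-level set of a linear function on the $1$-skeleton of a convex polytope induces a connected subgraph (a simplex-method ascent argument), the parents of $v$ induce a connected subgraph of $L_v$. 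So if $v$ had two parents, $L_v$ would contain an edge between two of them, giving a $2$-face $F$ of $P$ in which both polygon-neighbors of $v$ are strictly above $v$. Then $v$ is a strict local minimum of $f$ on $\partial F$, and on a convex polygon this forces $v$ to be the unique global minimum of $f|_F$. But the treetope property requires $F$ to contain at least two base vertices (each at $f = 0$), so some base vertex $b \in F \setminus \{v\}$ satisfies $f(b) = 0 \le f(v)$, contradicting the strict uniqueness.

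For the second assertion, let $v^*$ be a vertex of maximum $f$. Because $P$ is $d$-dimensional and the base is a facet, $f$ takes positive values on $P$, so $v^*$ is non-base; general position among non-base vertices makes $v^*$ unique, and, having no higher neighbor, $v^*$ is a root. For uniqueness, take any $v \ne v^*$: since $v^* \in P$ we have $v^* - v \in C_v$, the tangent cone of $P$ at $v$, whose extreme rays are the edge directions $u - v$ over the neighbors $u$ of $v$; writing $v^* - v = \sum_i \lambda_i (u_i - v)$ with $\lambda_i \ge 0$ gives $\sum_i \lambda_i (f(u_i) - f(v)) = f(v^*) - f(v) > 0$, so some neighbor $u_i$ has $f(u_i) > f(v)$ and is a parent of $v$.

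The main obstacle I anticipate is the passage from ``strict local minimum of $f$ on $\partial F$'' to ``strict global minimum of $f$ on $F$''; the level-set slicing picture for a linear function on a convex polygon makes this standard (the ascending arc and the descending arc meet only at the unique minimum), but without it one might only conclude local minimality, which would not yet contradict the presence of another base vertex at the same height.
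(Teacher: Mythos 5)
Your proposal is correct and follows essentially the same route as the paper: the link of $v$ together with connectedness of the upper level set yields a $2$-face with $v$ as its unique minimizer, contradicting the treetope condition on $F\cap B$; and root uniqueness follows from hill-climbing on the $1$-skeleton (the paper invokes the simplex method where you unwind it via the tangent cone, which amounts to the same thing). The only difference is that you spell out the ``strict local min on $\partial F$ implies unique global min on $F$'' step and phrase the final contradiction via base vertices at height zero rather than via the size of $F\cap B$ -- both cosmetic.
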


\begin{proof}
Let $P$ be a given treetope with base $B$ and canopy~$T$.
The fact that there is only one root follows from the simplex method in linear programming, which can be used to find the maximum of any linear function (such as the function mapping each point in $P$ to its distance from the hyperplane of $B$) by following a path in the graph of $P$ along which the function is monotonically increasing. The only vertex that can be a root is the maximum of~$f$ (unique by the assumption of general position), because for any other vertex the simplex method will find a parent edge as the first edge of its path.

To see that each vertex $v$ in $T$ can have at most one parent, consider the link of $v$. Each vertex of the link corresponds to an edge incident to $v$, and for each such edge all points on the edge lie above $v$ or below $v$ with respect to $f$. Therefore, each vertex of the link may be seen as being above $v$ or below $v$, without regard to the specific hyperplane near $v$ that was used to form the link. Similarly, there are three cases for each face of the link: a face may be entirely above $v$, it may be entirely below $v$, or it may contain vertices of the link that are both above and below $v$. By the same simplex-algorithm argument the faces of the link that are above $v$ with respect to $f$ form a connected complex. If $v$ could have more than one increasing edge, this complex would have more than one vertex, and hence would have at least one edge. This edge in the link would necessarily correspond to a two-dimensional face $F$ of $P$ within which $v$ is the unique minimum point of~$f$. But then $F\cap B$ would either equal $v$ (if $v$ belongs to $B$) or be empt (otherwise), contradicting the assumption that no faces of dimension two or more intersect $B$ in at most one point.
\end{proof}

\begin{corollary}
\label{cor:canopy-is-tree}
Let $P$ be a treetope of dimension $d$ with base $B$, and let $T$ be the set of faces of $P$ that intersect $B$ in at most one point. Then $T$ is an unrooted tree, each leaf of $T$ lies in $B$, and each non-leaf of $T$ has degree at least~$d$.
\end{corollary}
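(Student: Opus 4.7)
The plan is to view $T$ as a subgraph of the $1$-skeleton of $P$: its vertex set is all of $V(P)$ (every $0$-face meets $B$ in at most one point), and its edges are precisely the edges of $P$ not contained in $B$. Any edge not in $B$ intersects $B$ in at most a single vertex, while the defining property of a treetope rules out higher-dimensional faces with this intersection property, so this really captures all of $T$. As justified in the paragraph preceding \autoref{lem:parent}, I may assume $P$ is in general position, so that every vertex has at most one parent and there is a unique root.

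To show $T$ is a tree, I would first check that every parent edge lies in $T$: a parent of $v$ is strictly above the base hyperplane and hence not in $B$, so the edge is not contained in $B$. By \autoref{lem:parent}, each non-root vertex then contributes exactly one parent edge, giving $|V(T)| - 1$ edges in total. For connectivity, starting from any vertex I iteratively follow its parent edge; the distance from $B$ strictly increases along the walk, so it must terminate at the unique root, and every vertex is therefore connected to it. (Acyclicity also falls out directly: a lowest vertex on any cycle in $T$ would have two cycle neighbors above it, giving two parents.)

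For the leaf and degree claims I would split on whether a vertex $v$ lies in $B$. If $v \in B$, every $T$-edge at $v$ goes to a vertex not in $B$ (since otherwise the edge would lie in the convex set $B$), which is strictly above $v$, so every $T$-edge at $v$ is an edge to a parent; \autoref{lem:parent} forces at most one such edge, and because the unique root lies strictly above $B$ at least one such edge exists. Hence every vertex of $B$ has $T$-degree exactly one and is a leaf. If $v \notin B$, no edge at $v$ can lie in $B$, so all $\deg_P(v) \ge d$ incident edges of $v$ belong to $T$, giving $T$-degree at least $d$ and in particular a non-leaf. In particular leaves of $T$ lie in $B$ and non-leaves have degree at least $d$.

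I expect no substantive obstacle, since \autoref{lem:parent} does the heavy lifting. The only points requiring care are (i) verifying that the parent of each non-root vertex is joined to it by an edge of $T$, so that the parent map genuinely witnesses the tree structure of $T$ rather than merely of the full graph of $P$, and (ii) handling vertices of $B$ separately in the degree argument to see that their only $T$-edges are up-edges, which is what forces them to be leaves rather than vertices of higher degree.
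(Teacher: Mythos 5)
Your proof is correct and follows the same route as the paper: both arguments rest on \autoref{lem:parent} for the tree structure and on the fact that the vertex figure of a vertex of a $d$-polytope is a $(d-1)$-polytope (hence has at least $d$ vertices) for the degree bound. The paper's version is terser, leaving implicit the dichotomy between base and non-base vertices and the observation that the $T$-edges are exactly the parent edges, both of which you correctly spell out.
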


\begin{proof}
The fact that $T$ is a tree follows from \autoref{lem:parent}.
The claim about the degrees of the non-leaf vertices of~$T$ follows from the fact that the vertex figure of any vertex in a $d$-polytope is a $(d-1)$-polytope, which necessarily has at least $d$ vertices.
\end{proof}

\begin{definition}
For a treetope $P$ with base $B$ and tree $T$ defined as above, we call $T$ the \emph{canopy} of $P$.
\end{definition}

\begin{lemma}
\label{lem:facebase}
Let $P$ be a treetope with base $B$. Then for every face $F$ of $P$ of dimension greater than one, either $F\subset B$ or $F$ is a treetope with base $F\cap B$. In particular, $\dim(F\cap B)\ge \dim F - 1$.
\end{lemma}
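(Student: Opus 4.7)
The strategy is to dispose of the easy case $F \subseteq B$ at once (then $F \cap B = F$ and both conclusions are immediate), and in the remaining case $F \not\subseteq B$ to verify two things about $F' := F \cap B$: that $F$ satisfies the treetope property with respect to $F'$, and that $F'$ is a facet of $F$. The first is short and the dimension bound falls out of the second.

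For the treetope property of $F$ relative to $F'$: any face $G$ of $F$ is also a face of $P$, and since $G \subseteq F$ we have $G \cap F' = G \cap B$. Hence $G$ meets $F'$ in at most one point precisely when it meets $B$ in at most one point, and by the treetope property of $P$ this forces $\dim G \le 1$.

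The main step, and the principal obstacle, is showing that $F'$ is a facet of $F$. I plan to do this by induction on $\dim F$. The base case $\dim F = 2$ is immediate: applying the previous paragraph with $G = F$ gives $|F'| \ge 2$, so $F'$ is a face of the polygon $F$ with at least two points but distinct from $F$ (since $F \not\subseteq B$), which forces it to be an edge and therefore a facet. For the inductive step with $k := \dim F \ge 3$, consider the facets of $F$. If some facet $G$ of $F$ lies in $B$, then $G \subseteq F'$ and so $\dim F' \ge k-1$, finishing this case. Otherwise every facet $G$ has $\dim G = k-1 \ge 2$ and $G \not\subseteq B$, so the inductive hypothesis applies to $G$: $G \cap B$ is a facet of $G$, of dimension $k-2$, and $G \cap B \subseteq F'$, giving $\dim F' \ge k-2$.

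It remains to rule out $\dim F' = k-2$. In that case, for each facet $G$ of $F$ the inclusion $G \cap B \subseteq F'$ is between two faces of $B$ of the same dimension $k-2$, so $G \cap B = F'$, which gives $F' \subseteq G$; every facet of $F$ therefore contains $F'$. But the facets of $F$ containing $F'$ are in bijection with the vertices of the face figure of $F'$ in $F$, a polytope of dimension $\dim F - \dim F' - 1 = 1$, i.e.\ a line segment with exactly two vertices. So $F$ would have only two facets, contradicting the standard fact that a $k$-polytope with $k \ge 3$ has at least $k+1 \ge 4$ facets. This contradiction completes the induction, and since $F'$ is then a facet of $F$, the ``in particular'' clause follows immediately.
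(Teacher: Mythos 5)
Your proof is correct, but it takes a genuinely different route to the facetness claim than the paper does. The paper establishes $\dim(F\cap B)\ge\dim F-1$ directly by picking a vertex $v\in F\cap B$, noting that $F$ lies in the positive hull of its edges at $v$, and invoking \autoref{lem:parent} to argue that all but at most one of those edges stay inside $F\cap B$; facetness then follows immediately because $F\cap B$ is a proper face. You instead avoid \autoref{lem:parent} entirely and prove facetness by induction on $\dim F$: the base case is a polygon, and in the inductive step you pass through the facets of $F$ and eliminate the borderline case $\dim F'=k-2$ by observing that then every facet of $F$ would contain $F'$, forcing the $1$-dimensional face figure $F/F'$ to account for all facets of $F$, contradicting the fact that a $k$-polytope with $k\ge3$ has at least $k+1$ facets. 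Both arguments are sound; the paper's is shorter because it leans on the already-proved parent lemma, while yours is more self-contained (needing only standard facts about face figures and facet counts) but correspondingly longer. One small point worth making explicit in your first inductive case: $F'$ is automatically a \emph{proper} face of $F$ because $F\not\subseteq B$, so $\dim F'\le k-1$, which together with your lower bound $\dim F'\ge k-1$ pins it down as a facet.
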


\begin{proof}
We prove first the claim about the dimension. By the definition of treetopes, $F\cap B$ contains at least two vertices; let $v$ be one such vertex.
Then $F$ lies within the positive hull of the edges of $F$ incident to $v$. By \autoref{lem:parent} all but at most one of those edges lies within $F\cap B$, so the dimension of the positive hull (and therefore of $F$) is at most one more than the dimension of $F\cap B$.

Now suppose that $F$ is not a subset of $B$. Then, by the dimension claim, $F\cap B$ is a facet of~$F$.
Let $G$ be a face of $F$ such that $G\cap (F\cap B)$ is a single vertex. Then by associativity $(G\cap F)\cap B=G\cap B$ is the same single vertex, and by the assumption that $P$ is a treetope $G$ has dimension at most one. Thus, the faces of $F$ have the defining property of treetopes.
\end{proof}

For instance, every non-base facet of a $4$-treetope must be a roofless polyhedron.

\begin{lemma}
\label{lem:lift}
Let $P$ be a treetope with base $B$, and let $F$ be a nonempty face of $B$. Then there is exactly one face $F'$ of $P$ such that $F'\ne F$ and $F'\cap B=F$.
\end{lemma}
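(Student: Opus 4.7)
The plan is to reduce the statement to showing that the link $L_F$ of $F$ in $P$ is combinatorially a pyramid whose base is the facet $\phi_B$ of $L_F$ corresponding to the facet $B$ of $P$ (under the standard isomorphism between the interval $[F,P]$ in the face lattice of $P$ and the face lattice of $L_F$). In this correspondence, faces $F'$ of $P$ with $F \subsetneq F'$ and $\dim F' = \dim F + 1$ correspond to vertices of $L_F$, and the condition $F' \cap B = F$ (rather than $F' \subseteq B$) is equivalent to the corresponding vertex lying outside $\phi_B$. Combined with \autoref{lem:facebase}, which forces any $F' \ne F$ with $F' \cap B = F$ to have $\dim F' = \dim F + 1$ and not to be contained in $B$, a pyramidal structure on $L_F$ immediately yields both existence and uniqueness of $F'$.

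To show $L_F$ is a pyramid, I fix any vertex $v$ of $F$. Because $F \subseteq B$, $v$ is a vertex of $B$, hence a leaf of the canopy by \autoref{cor:canopy-is-tree}, so \autoref{lem:parent} tells us that $v$ has exactly one edge leaving $B$ (its parent edge), while all other edges at $v$ lie in~$B$. The vertex figure $L_v$ of $v$ in $P$ is therefore itself a pyramid: its vertices are in bijection with the edges at $v$, the facet $\phi_B^{(v)}$ of $L_v$ corresponding to $B$ consists of exactly those vertices arising from base edges at $v$, and the sole remaining vertex, coming from the parent edge, is the apex. The face $F$, viewed inside $L_v$, corresponds to a face $F^{(v)}$ of dimension $\dim F - 1$, and the inclusion $F \subseteq B$ gives $F^{(v)} \subseteq \phi_B^{(v)}$, so $F^{(v)}$ avoids the apex of~$L_v$.

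To finish, I invoke two standard facts about face lattices of polytopes: (i) $L_F(P) \cong L_{F^{(v)}}(L_v)$ (iterated links agree with direct links, since both sides compute the same interval in the face lattice of $P$); and (ii) for any pyramid $Q$ with base $\beta$ and apex $a$, and any face $\sigma$ of $\beta$, the link $L_\sigma(Q)$ is itself a pyramid with base $L_\sigma(\beta)$ and apex corresponding to $\mathrm{conv}(\sigma, a)$ (immediate by splitting the faces of $Q$ containing $\sigma$ into those contained in $\beta$ and their ``caps'' obtained by adjoining $a$). Applying (ii) with $Q = L_v$, $\beta = \phi_B^{(v)}$, $\sigma = F^{(v)}$, and then identifying via (i), shows that $L_F(P)$ is a pyramid with base $\phi_B$, completing the proof. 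The principal obstacle is unpacking the face-lattice correspondence carefully enough to be sure that ``$F' \cap B = F$'' in $P$ really does translate into ``the corresponding vertex of $L_F$ lies outside $\phi_B$''; once this bijection is secured, the treetope-specific input enters only through \autoref{lem:parent} (to make vertex figures at base vertices pyramids), and the remainder is formal polytope theory.
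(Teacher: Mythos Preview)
Your argument is correct and takes a genuinely different route from the paper. The paper proves existence by a descending chain: start at $F'=P$ and repeatedly pass to a facet containing $F$, using \autoref{lem:facebase} to see that each step drops $\dim(F'\cap B)$ by exactly one, so the process lands on a face with $F'\cap B=F$. Uniqueness is handled by a short affine-hull argument: any admissible $F'$ lies in the affine span of $F$ together with the parent edge of a chosen vertex $v\in F$, and since this span already has dimension $\dim F+1=\dim F'$, it \emph{is} the affine hull of $F'$; distinct faces have distinct affine hulls, so $F'$ is determined.

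Your approach instead proves the stronger structural fact that the link $L_F$ is combinatorially a pyramid over its facet $\phi_B$, from which existence and uniqueness of the apex vertex give the lemma. This is more machinery-heavy (iterated links, the pyramid-link lemma) but yields a cleaner conceptual picture, and it makes transparent \emph{why} the lift has dimension exactly $\dim F+1$. The paper's proof, by contrast, is shorter and stays geometric, exploiting affine hulls rather than face-lattice intervals. One small point worth tightening in your write-up: \autoref{lem:facebase} is stated only for faces of dimension greater than one, so when you assert that any $F'\ne F$ with $F'\cap B=F$ must satisfy $\dim F'=\dim F+1$, you should separately dispose of the case $\dim F'\le 1$ (which is immediate: a vertex $F'$ would force $F'=F$, and an edge $F'\not\subseteq B$ meets $B$ in a single vertex, giving $\dim F=0$).
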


\begin{proof}
A face $F'$ meeting the description of the property may be found by starting with $F'=P$ and then, as long as the intersection of $F'$ with $B$ is of too large a dimension, replacing $F'$ by one of its facets. Each such step reduces the dimension of the intersection with $B$ by one unit, by \autoref{lem:facebase}, so it is not possible for this sequence of steps to skip over~$F$.

Let $F'$ be any such face, and let $v$ be any vertex of $F$. Then $F'$ must lie in the affine hull of $F$ and the parent edge of $v$. The dimension of this affine hull equals the dimension of $F'$, so it equals the affine hull of $F'$ itself. However, any two different faces of a polytope must have different affine hulls, so $F'$ is the unique face satisfying the properties required by the lemma.
\end{proof}

\begin{definition}
For the faces $F$ and $F'$ of \autoref{lem:lift}, we say that $F$ is the \emph{base} of~$F'$ and that $F'$ is the \emph{lift} of~$F$.
\end{definition}

\begin{lemma}
\label{lem:lift-ancestor}
Let $F'$ be the lift of a face $F$ in a treetope~$P$ with canopy $T$, choose a root for $T$ at a leaf vertex that does not belong to~$F$, and let $a$ be the lowest common ancestor in $T$ of the vertices of~$F$. Then the canopy of $F'$ is the union of the paths in~$T$ between $a$ and the vertices of~$F$.\end{lemma}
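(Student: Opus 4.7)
The plan is to identify the canopy $T'$ of $F'$ as a subtree of $T$ whose set of leaves is exactly the vertex set of $F$, and then to invoke a standard characterization of such subtrees.

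First, \autoref{lem:facebase} shows that $F'$ is itself a treetope with base $F = F'\cap B$, so by \autoref{cor:canopy-is-tree} its canopy $T'$ (whose edges are the edges of $F'$ not lying in $F$) is a tree. Any such edge cannot lie in $B$, since otherwise it would lie in $F'\cap B = F$; hence every edge of $T'$ is an edge of the ambient canopy $T$, and $T'$ sits inside $T$ as a subtree.

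Next, I would show that the set of leaves of $T'$ is precisely the vertex set of $F$. One direction is given by \autoref{cor:canopy-is-tree} applied to $F'$: every leaf of $T'$ lies in $F$. For the reverse direction, note that $F' \ne F \subseteq B$ forces $F'$ to contain a vertex outside $B$, so the linear function $f$ measuring signed distance from the hyperplane of $B$ attains a strictly positive maximum on $F'$ at a vertex lying outside $F$; this is the unique $f$-root of $F'$. Consequently each $v\in F$ is a non-root vertex of $F'$ and, by \autoref{lem:parent} applied inside $F'$, has a unique parent in $F'$. The corresponding parent edge leaves $F$ and so contributes the only edge of $T'$ incident to $v$: any edge of $F'$ at $v$ other than the parent edge either lies in $F$ (and is excluded from $T'$) or would provide a second parent of $v$, again forbidden by \autoref{lem:parent}. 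Thus $v$ has degree exactly one in $T'$.

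Finally, I would invoke the purely tree-theoretic fact that if $\tau$ is a tree and $T^{*}\subseteq\tau$ is a connected subtree whose leaf set equals a prescribed vertex set $L$, then $T^{*}$ is the minimal subtree of $\tau$ spanning $L$, namely the union of the $\tau$-paths between pairs of vertices of $L$. Connectivity gives one inclusion; for the other, any internal vertex of $T^{*}$ has $T^{*}$-degree at least two and so lies on a $T^{*}$-path between two leaves, which is also a $\tau$-path since $T^{*}\subseteq\tau$. Applying this with $\tau = T$, $L$ the vertex set of $F$, and $T$ rooted at the designated leaf outside $F$, the minimal spanning subtree becomes exactly the union of the paths from the LCA $a$ down to the vertices of $F$, so $T'$ coincides with this union. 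The main obstacle is step two, particularly confirming that the $f$-root of $F'$ lies outside $F$ and that each $v\in F$ has exactly one canopy neighbor in $F'$; step three is then a standard Steiner-subtree argument.
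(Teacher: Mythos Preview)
Your proof is correct and takes a genuinely different route from the paper's. The paper argues via the $2$-faces of $F'$: it observes that the canopy of $F'$ is the union of the canopies of its $2$-faces, identifies each $2$-face as the lift of an edge $uv$ of $F$ (via \autoref{lem:lift}), so that its canopy is the $T$-path from $u$ to $v$, and then runs an induction on path length in the graph of $F$ to show that every edge of the claimed union of paths actually occurs in some such $2$-face canopy. Your argument bypasses the $2$-face decomposition entirely: you pin down $T'$ as the subtree of $T$ whose leaf set is exactly $V(F)$ and then invoke the uniqueness of the Steiner subtree with a prescribed leaf set. This is cleaner and more conceptual, and it makes the role of the LCA transparent; the paper's approach, on the other hand, is more explicit about how the face lattice of $F'$ witnesses each edge of the canopy, which is useful elsewhere (e.g.\ in \autoref{lem:externally-connected}). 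Both arguments implicitly need $\dim F\ge 1$ (equivalently $\dim F'\ge 2$): the paper's first line assumes $F'$ has $2$-faces, and your leaf-set computation breaks down when $F$ is a single vertex, since then $T'$ is a single edge with two leaves, only one of which lies in $F$. This is harmless, as the $0$-dimensional case is trivial, but you might note the assumption explicitly.
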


\begin{proof}
Every edge of the canopy of $F'$ belongs to a $2$-face of $F'$, so the canopy of~$F'$ equals the union of the canopies of the 2-faces of~$F'$. By \autoref{lem:lift} every $2$-face is the lift of an edge $uv$ of $F$. The canopy of this $2$-face consists of the unique path in $T$ from $u$ to $v$. This path belongs to the union of paths described in the lemma, so the canopy of $F'$ is a subset of the union of paths.

To show that it equals the union of paths, let $v$ be any vertex of $F$ and $e$ be any edge on the path from $v$ to $a$. Thus, $e$ is an arbitrary edge in the union of paths, and we must show that $e$ belongs to the canopy of $F'$. Let $w$ be an arbitrary descendant of $a$ in $F$ through a different child than the one leading to~$v$. Then the path in $T$ from $v$ to $w$ passes through $e$. However, $vw$ might not be an edge of $F$ and this path might not be the boundary face of a $2$-face in $F'$.
Nevertheless, because $F$ is connected, there exists a path $\pi$ in $F$ from $v$ to $w$. Choose such a path arbitrarily and let $vu$ be the first edge on this path. If $u$ is connected to $a$ through a path that does not contain $e$, then the path from $u$ to $v$ does contain~$e$, and we have found a $2$-face (the lift of~$uv$) that contains~$e$. If the path from $u$ to $a$ does contain~$e$, then the pair of vertices $u$ and $w$ are connected in~$T$ by a path containing $e$, and are connected in~$F$ by a shorter path than $\pi$. In this case the result follows by induction on the length of~$\pi$.
\end{proof}

We may summarize the results in this subsection as a theorem:

\begin{theorem}
\label{thm:classify-faces}
Let $P$ be a treetope with base $B$. Then the edges of $P$ that do not lie in $P$ form a tree, the \emph{canopy} $T$ of $P$.
The faces of $P$ may be partitioned into three classes:
\begin{enumerate}
\item the faces of $B$,
\item the edges of $T$ that are disjoint from~$B$, and
\item one face $F'$ of dimension $i+1$ for each $i$-dimensional face $F$ of $B$, called the \emph{lift} of $F$.
\end{enumerate}
For each face $F$ of $B$ with lift $F'$, $F'$ is a treetope with base $F$, and the canopy of $F'$ is
the minimal subtree of $T$ that connects all the vertices in $F$.
\end{theorem}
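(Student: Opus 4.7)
My plan is to assemble this summary theorem directly from the lemmas already established in the subsection, so the work is essentially organizational: I verify that the three classes partition the face set, and I identify each class and each of the ancillary claims with an earlier result. The statement that the canopy forms an unrooted tree is immediate from \autoref{cor:canopy-is-tree}, which also records that its leaves lie in $B$.

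For the partition, I would case-split on a face $F$ by its dimension and its relation to $B$. If $\dim F\ge 2$, \autoref{lem:facebase} forces either $F\subseteq B$ (class~1), or else $F\cap B$ is a facet of $F$; in the latter case the same lemma tells us $F$ is a treetope with base $F\cap B$, and \autoref{lem:lift} gives uniqueness of $F$ as the lift of $F\cap B$, placing $F$ in class~3. If $F$ is an edge, the three possibilities are $F\subseteq B$ (class~1), $F$ meets $B$ in exactly one vertex $v$ (so $F$ is the unique lift of $v$ by \autoref{lem:lift}, class~3), or $F$ is disjoint from $B$ and hence an edge of the canopy (class~2). If $F$ is a vertex, it is either a vertex of $B$ (class~1) or a non-base vertex, which lies in the canopy $T$ and must be read as falling into class~2 for the partition to be exhaustive. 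Disjointness is automatic from how the classes interact with $B$: class~1 sits inside $B$, class~2 is disjoint from $B$, and class~3 meets $B$ in a proper face of positive codimension.

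The final claim about each lift $F'$ is a direct invocation of the earlier results: \autoref{lem:facebase} gives that $F'$ is a treetope with base $F$, and \autoref{lem:lift-ancestor} identifies the canopy of $F'$ with the union of paths in $T$ from a lowest common ancestor of the vertices of $F$ out to those vertices, which in a tree is exactly the minimal subtree spanning the vertices of $F$. The main obstacle in writing out a formal proof is just this bookkeeping --- in particular, being careful that non-base vertices (which are vertices, not edges, of $T$) are still accommodated in class~2 so that the partition covers the full face lattice --- since no new mathematical content is needed beyond the lemmas already proved.
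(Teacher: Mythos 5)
Your proposal is correct and takes exactly the approach the paper intends: the paper presents this theorem with the preamble ``We may summarize the results in this subsection as a theorem,'' with no proof given, so the expected argument is precisely the assembly you describe from Corollary~\ref{cor:canopy-is-tree}, Lemma~\ref{lem:facebase}, Lemma~\ref{lem:lift}, and Lemma~\ref{lem:lift-ancestor}. You also correctly identify the union of root-to-leaf paths from Lemma~\ref{lem:lift-ancestor} with the Steiner subtree of the vertices of $F$, which is the last bit of translation needed. One small but genuine point you surfaced, and handled honestly: as literally worded, class~2 consists only of \emph{edges} of $T$ disjoint from $B$, so the non-base vertices (which are $0$-dimensional faces of $P$, yet not faces of $B$, not edges, and not lifts of any face of $B$) are not covered by any class; the natural repair is to read class~2 as ``the vertices and edges of $T$ that are disjoint from $B$,'' which is clearly what the author intends, but noticing that the statement needs this charitable reading is a useful observation, not a defect in your proof.
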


\begin{corollary}
Every $4$-treetope with $n$ vertices or with $n$ facets has $O(n)$ faces.
\end{corollary}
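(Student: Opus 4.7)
The plan is to add up the three classes of faces enumerated in \autoref{thm:classify-faces} and to bound each contribution using the fact that the base $B$ is a 3-polytope. Writing $V_B$, $E_B$, $F_2(B)$ for the numbers of vertices, edges, and 2-faces of $B$, letting $I$ denote the number of non-leaf (interior) vertices of the canopy $T$, and letting $e_T$ denote the number of edges of $T$ disjoint from $B$, \autoref{thm:classify-faces} shows that the total number of faces of $P$ (including improper ones) is
\[
2\bigl(1 + V_B + E_B + F_2(B) + 1\bigr) + e_T,
\]
namely twice the number of faces of $B$ (class~1 and class~3) plus the class-2 contribution. Since $B$ is a 3-polytope, Euler's formula together with the standard inequalities $2E_B \ge 3V_B$ and $2E_B \ge 3F_2(B)$ makes $V_B$, $E_B$, $F_2(B)$ all linearly comparable, so it suffices to control any one of them.

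Next I would bound the canopy contribution. By \autoref{cor:canopy-is-tree} the canopy $T$ is a tree whose leaves all lie in $B$ and whose $I$ non-leaf vertices each have degree at least $d = 4$. Letting $L$ be the number of leaves, the handshake identity applied to the $L + I - 1$ edges of $T$ gives $4I + L \le 2(L + I - 1)$, hence $I \le (L-2)/2$. Since $L \le V_B$ this yields $I = O(V_B)$. Similarly, since the $L$ leaf-incident edges are all distinct, $e_T \le (L + I - 1) - L = I - 1 = O(V_B)$. Combining with the previous display, the entire face count of $P$ is $O(V_B)$.

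It then remains to bound $V_B$ by $n$ in each of the two formulations. If $P$ has $n$ vertices, then its vertex set is the disjoint union of the $V_B$ vertices of $B$ and the $I$ interior canopy vertices, so $V_B \le n$ trivially. If $P$ has $n$ facets, then by \autoref{thm:classify-faces} the 3-dimensional faces of $P$ consist of $B$ together with the lifts of the 2-faces of $B$, giving $F_2(B) = n - 1$, and by the Euler-based comparison above $V_B = O(F_2(B)) = O(n)$. In either case the total face count is $O(n)$.

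The argument is essentially bookkeeping on top of \autoref{thm:classify-faces}; the only step requiring even a small calculation is the handshake bound that converts the degree lower bound $d = 4$ on interior canopy vertices into the linear inequalities $I = O(V_B)$ and $e_T = O(V_B)$, and that is where I expect any minor subtlety (e.g.\ handling the trivial case $I = 0$) to appear.
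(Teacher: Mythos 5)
Your proof is correct and takes essentially the same route as the paper: partition the faces via \autoref{thm:classify-faces}, bound the base contribution by the linear face count of a $3$-polytope, and bound the canopy contribution using the degree-$\ge 4$ constraint on its interior vertices. You simply make explicit (via the handshake identity) the tree-size bound that the paper asserts with the phrase ``a tree with $O(n)$ leaves and no degree-two internal vertices.''
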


\begin{proof}
In any $4$-treetope, the facets of the base correspond one-to-one with the non-base facets of the treetope, so if the treetope has $n$ facets then the base has $n-1$ facets.
The base has $O(n)$ faces, because it is a $3$-polytope and every $3$-polytope has a number of faces that is linear in either its number of vertices or its number of facets. In particular, if the treetope has $n$ facets then the base has $O(n)$ vertices. The canopy has $O(n)$ vertices and edges, because it is a tree with $O(n)$ leaves and no degree-two internal vertices. And there are $O(n)$ remaining faces, because each is the lift of a unique face of the base and there are $O(n)$ base faces.
\end{proof}

\subsection{Branches and slices}

\begin{figure}[b]
\centering
\includegraphics[width=0.35\textwidth]{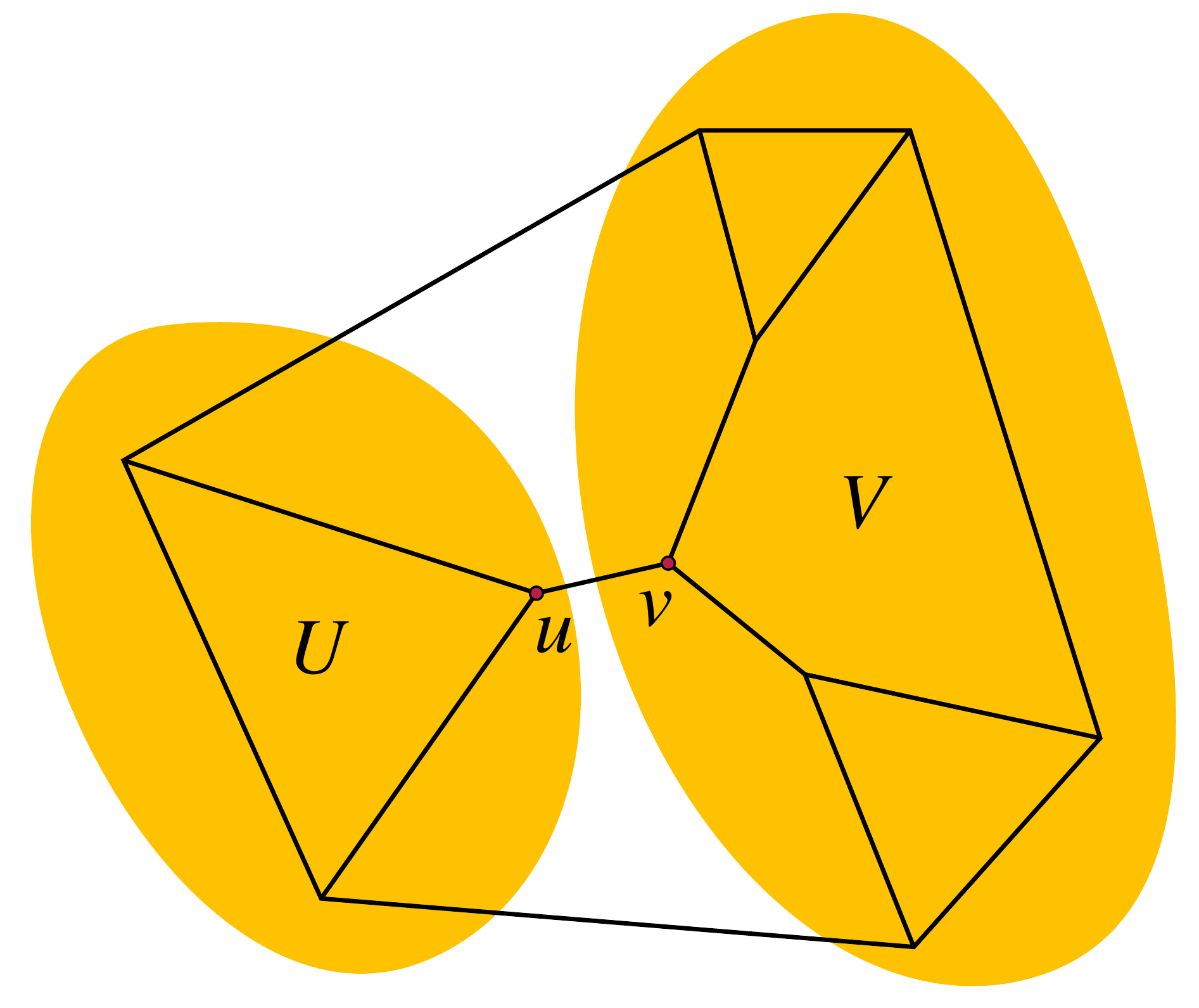}
\caption{A slice of the $3$-treetope of \autoref{fig:Topographic} determined by an edge $uv$, partitioning it into two complementary branches $U$ and $V$.}
\label{fig:Slice}
\end{figure}

\begin{definition}
Suppose that $P$ is a treetope with base~$B$, and $uv$ is an edge of $P$ such that neither $u$ nor $v$ belongs to $B$. Then we can partition the canopy into two subtrees by deleting edge $uv$.
Let $U$ be the subset of the vertices of $P$ in the subtree containing $u$, and $V$ be the subset of the vertices of $P$ in the other subtree containing $v$.
Then we call $U$ and $V$ \emph{branches} of $P$, and we call the partition $(U,V)$ of the vertices of~$P$ into two complementary branches a \emph{slice} of $P$. An example of a slice and its two branches is shown in \autoref{fig:Slice}.
\end{definition}

\begin{observation}
Each branch must have at least $\dim P$ vertices.
\end{observation}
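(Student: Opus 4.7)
The plan is to extract this directly from the degree bound in \autoref{cor:canopy-is-tree}. Since the slice is defined using an edge $uv$ with neither $u$ nor $v$ in $B$, both endpoints are non-leaf vertices of the canopy~$T$, so the corollary guarantees that each of $u$ and $v$ has degree at least $d=\dim P$ in~$T$.

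Now I would reason about the branch $U$ containing~$u$. Deleting the edge $uv$ from $T$ removes exactly one of the canopy edges at~$u$, leaving the other $d-1$ (or more) canopy edges at~$u$ intact; each of these edges connects $u$ to a distinct neighbor that lies in the same subtree as~$u$, and therefore in~$U$. Counting $u$ itself together with these at least $d-1$ neighbors gives $|U|\ge d$. The identical argument applied at $v$ shows $|V|\ge d$.

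I do not foresee a real obstacle here: the only subtlety is verifying that the edges of the polytope incident to a non-base vertex are all canopy edges (they cannot have both endpoints in~$B$ because at least one endpoint is non-base), so the polytope-degree bound coming from the vertex figure translates cleanly into the canopy-degree bound used by \autoref{cor:canopy-is-tree}, and that bound is already in hand. Thus the observation reduces to a one-line consequence of the corollary once the reader has absorbed that both endpoints of a slicing edge are interior nodes of~$T$.
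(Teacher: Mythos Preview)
Your argument is correct and matches the paper's own proof essentially line for line: both count $u$ together with its at least $d-1$ neighbors other than $v$, invoking \autoref{cor:canopy-is-tree} for the degree bound. Your extra remark verifying that all polytope edges at a non-base vertex are canopy edges is a helpful clarification but not a departure from the paper's approach.
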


\begin{proof}
The branch contains one vertex for the endpoint~$u$ of edge $uv$ defining the branch, and another vertex for each neighbor of $u$ other than $v$. A sufficient bound on the number of neighbors is given in \autoref{cor:canopy-is-tree}.
\end{proof}

\begin{definition}
Let $P$ be a polytope with base $B$, and $(U,V)$ be a slice of $P$ defined by canopy edge $uv$.
Then we define the \emph{stems} of branch $U$ to be subsets of $U$, one for each edge $uw$ where $w\ne v$. If $w\in B$, then its stem is the singleton set $\{w\}$. Otherwise, its stem is the branch~$W$ of the slice $(X,W)$ defined by edge $uw$.
\end{definition}

\begin{definition}
We say that a subset $S$ of the vertices of $B$ is \emph{externally $k$-connected}
if the graph formed from the graph of $B$ by contracting all vertices of $B\setminus S$ into a single supervertex is $k$-vertex-connected.
\end{definition}

\begin{lemma}
\label{lem:externally-connected}
Let $P$ be a treetope with base $B$, and $(U,V)$ be a slice of $P$ defined by canopy edge $uv$.
Then $U\cap B$ is externally $(\dim P-1)$-vertex-connected.
\end{lemma}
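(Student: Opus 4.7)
The plan is to show that $U\cap B$ is a 2-face (polygon) of $B$ and then to observe that contracting $V\cap B$ yields a wheel graph, which is 3-vertex-connected.

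The main step is to establish that $U\cap B$ is a 2-face of $B$. By \autoref{thm:classify-faces}, every facet of $P$ other than $B$ is the lift of a 2-face of $B$, and the canopy of such a lift is the minimal subtree of $T$ spanning that 2-face's vertices; so a facet of $P$ with canopy exactly~$T_U$ must be the lift of $U\cap B$ (the leaf set of $T_U$), and its existence is equivalent to $U\cap B$ being a 2-face of $B$. To verify this, I would analyze the link of $uv$ in $P$: since $\dim P=4$ and $uv$ is 1-dimensional, this link is a polygon whose vertices and edges correspond, via \autoref{lem:lift}, to the edges and 2-faces of $B$ that cross the slice. Projected into the planar embedding of the 3-polytope~$B$, these cross-cut edges trace out a separating cycle whose $U$-side bounds exactly one 2-face of~$B$, namely~$U\cap B$.

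Once $U\cap B$ is known to be a 2-face of~$B$, the $B$-edges among its vertices are exactly the edges of that polygon, since a convex 2-face of a 3-polytope admits no chord edges of the ambient polytope. Balinski's theorem gives every vertex of~$B$ degree at least~$3$, so each polygon vertex (having at most $2$ polygon-neighbors) has at least one $B$-edge to a vertex of $V\cap B$. Contracting $V\cap B$ to a single supervertex~$v^*$ therefore yields the wheel graph with hub~$v^*$ and rim~$U\cap B$. Combined with the bound $|U\cap B|\geq \dim P-1=3$ (which follows because \autoref{cor:canopy-is-tree} gives $u$ at least $\dim P-1$ children in~$T_U$ and each child's subtree contains at least one leaf), this wheel has at least $4$ vertices and is 3-vertex-connected, establishing $(\dim P-1)$-vertex-connectivity.

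The principal obstacle is the existence assertion in the second paragraph: one must rule out any additional 2-faces of $B$ on the $U$-side of the planar separating cycle. Any such extra 2-face would lift via \autoref{thm:classify-faces} to a facet of $P$ whose canopy is a proper subtree of $T_U$, but the polygonal structure of the link of~$uv$ accounts for exactly those facets whose canopies cross~$uv$, forcing $U\cap B$ to be the unique 2-face of $B$ on the $U$-side.
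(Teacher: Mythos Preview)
Your proposal rests on a claim that is false: $U\cap B$ need not be a $2$-face of~$B$.  Take the running example from the paper (\autoref{fig:clustergraph}), obtained from an octahedron by expanding a vertex~$v$ using another octahedron~$H$.  The resulting cluster --- which is exactly $U\cap B$ for the slice at the new canopy edge --- is $H-v'$, a five-vertex wheel~$W_4$, not a polygon.  The base polyhedron~$B$ here has many triangular and quadrilateral $2$-faces on the $U$-side of the cross-cut cycle, not just one.  Your link-of-$uv$ argument correctly identifies a separating cycle of cross-cut edges in the planar embedding of~$B$, but nothing forces the interior of that cycle to be a single face; in general it contains several vertices and $2$-faces of~$B$, and $U\cap B$ is their union.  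Consequently the ``contract to a wheel'' step collapses, since the induced subgraph on $U\cap B$ can have interior vertices and non-cycle structure.

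There is also a scope mismatch: the lemma is stated for treetopes of arbitrary dimension, and the $(\dim P-1)$-connectivity bound is used later in that generality, whereas your argument is specific to $\dim P=4$.  The paper's proof avoids both problems by induction on~$|U|$: the stems of~$U$ are smaller branches, so by induction each stays connected after deleting any $d-2$ vertices; to show the stems stay connected to each other it passes to the link of~$u$, a $(d-1)$-polytope whose graph is $(d-1)$-connected by Balinski's theorem, and observes that each deleted vertex damages at most one stem-to-stem connection in that link.
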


\begin{figure*}[t]
\centering\includegraphics[width=0.95\textwidth]{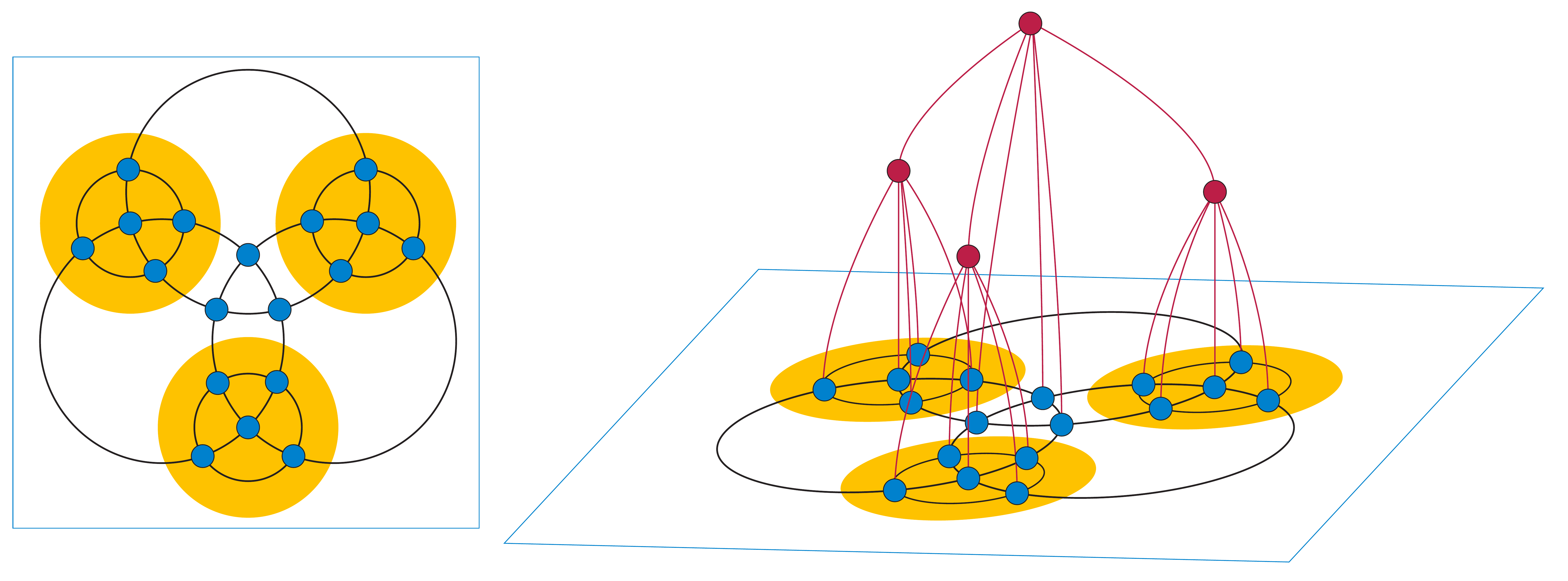}
\caption{A well-connected clustering (left) and its cluster graph (right).}
\label{fig:clustergraph}
\end{figure*}

\begin{proof}
Let $d=\dim P$, and define a graph $G$ by contracting all vertices of $P\setminus (U\cap B)$ into a single supervertex. Let $K$ be a set of at most $d-2$ vertices in~$G$. We prove by induction on the cardinality of~$U$ that deleting $K$ from $G$ leaves a connected graph. By the induction hypothesis, each stem of $U$ is externally $(d-1)$-connected, and each vertex in $K$ corresponds to at most one vertex in the contracted graph for each stem. It follows that the deletion of $K$ cannot disconnect the vertices within any stem. It remains to show that each two stems remain connected to each other; that is, that the graph formed from $G$ by contracting the remaining vertices of each stem into a single supervertex is connected.

Let $L$ be the link of $u$, a polytope formed from $P$ by intersecting it with a hyperplane near $u$. This is a $(d-1)$-dimensional polytope, whose faces are in one-to-one correspondence with the faces of $P$ that are incident to $u$. This correspondence changes the dimension of a face by one, so that an edge of the link corresponds to a $2$-face of $P$, etc. Observe that, if we were not deleting the vertices in $K$, then the graph of $L$ is isomorphic to the graph formed by contracting each stem, and the complementary branch~$V$, to a single vertex. For, each edge between two stems or between a stem and branch~$V$ lifts to a $2$-face of $P$ incident to $u$ (by \autoref{lem:lift-ancestor}) and therefore corresponds to an edge of $L$, and vice versa.

By Balinski's theorem~\cite{Bal-PJM-61}, the graph of $L$ is $(d-1)$-vertex-connected. Deleting a vertex in $K$ may change this graph either by damaging the complementary branch~$V$ (which we cannot assume to remain connected after the deletion because $V$ might not come before $U$ in the induction order) or by removing the endpoint of one of the edges linking two stems. However, as there are only $d-2$ deletions, the graph remains connected after this damage, and therefore no two stems can be separated from each other.
\end{proof}

\begin{corollary}
For any branch $U$, the subgraph of the graph of $B$ induced by $U\cap B$ is connected.
\end{corollary}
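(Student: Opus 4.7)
The plan is to extract the corollary directly from \autoref{lem:externally-connected}. That lemma already asserts that $U\cap B$ is externally $(\dim P-1)$-vertex-connected, i.e., that the graph $G$ obtained from the graph of $B$ by identifying all vertices of $B\setminus U$ into a single supervertex $s$ is $(\dim P-1)$-vertex-connected. Since the corollary is only interesting when $\dim P\ge 3$ — for $\dim P=2$ the base $B$ is a single edge, the canopy is a path, and either branch meets $B$ in at most one vertex, making the claim trivial — I may assume $\dim P-1\ge 2$, so that $G$ is $2$-vertex-connected.

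From here, the first step is to identify the induced subgraph of $B$ on $U\cap B$ with the graph $G\setminus\{s\}$. This is a matter of unpacking the definition of contraction: the edges of $G$ are either edges of the graph of $B$ both of whose endpoints lie in $U\cap B$ (these survive unchanged in $G$), or edges of $B$ with at least one endpoint in $B\setminus U$ (these become loops at $s$ or edges joining some vertex of $U\cap B$ to $s$). Consequently, removing $s$ from $G$ deletes exactly the second kind of edge and leaves precisely the induced subgraph of $B$ on $U\cap B$.

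The second and final step is to invoke $2$-vertex-connectedness of $G$: deleting any single vertex of a $2$-vertex-connected graph leaves a connected graph, so $G\setminus\{s\}$ is connected, which is what needed to be shown.

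There is no real obstacle here — the content is entirely contained in \autoref{lem:externally-connected}, and the only thing to get right is the minor bookkeeping that identifies $G\setminus\{s\}$ with the induced subgraph on $U\cap B$ and that the connectivity bound $\dim P-1$ is at least $2$ in the non-trivial cases.
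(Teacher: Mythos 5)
Your proof is correct and takes the natural route the paper clearly intends: the corollary is stated immediately after \autoref{lem:externally-connected} precisely because deleting the supervertex from the contracted graph recovers the induced subgraph on $U\cap B$, and $k$-connectedness with $k\ge 2$ guarantees that graph remains connected. Your bookkeeping of the contraction and the low-dimensional edge case are both accurate.
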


\begin{lemma}
\label{lem:cross-matching}
Let $(U,V)$ be a slice of $P$, and let $X$ and~$Y$ be two sets that are either stems of $U$ or the set $V$. Then there is at most one edge in $B$ connecting $X$ to $Y$.
\end{lemma}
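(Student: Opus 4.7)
The plan is to assume for contradiction that two distinct edges $e_1,e_2$ of $B$ each connect $X$ to $Y$, and then to derive an impossible multi-edge inside the link of the vertex $u$ that defines the slice.

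First, I would unpack how the stems and the set $V$ sit inside the canopy tree $T$. Removing the edge $uv$ splits $T$ into the subtrees on $U$ and $V$, and the stems of $U$ are exactly the connected components of $T\setminus\{u\}$ that lie inside $U$. Consequently, each stem $X$ is attached to $u$ by a single canopy edge $uw_X$, and similarly $V$ is attached to $u$ by the unique canopy edge $uv$. It follows that any path in $T$ from a vertex of $X\cap B$ to a vertex of $Y\cap B$ must enter $u$ along $uw_X$ and leave $u$ along $uw_Y$ (or along $uv$ in the case $Y=V$), no matter which two endpoints we start from.

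Next, I would lift the hypothetical edges. By \autoref{lem:lift}, each $e_i=a_ib_i$ has a unique lift $F_i'$, a 2-face of $P$ with $F_i'\cap B=e_i$, so $e_1\ne e_2$ forces $F_1'\ne F_2'$. By \autoref{lem:lift-ancestor} applied to an edge (whose two endpoints have as their LCA, under any valid rooting, one of themselves), the canopy of $F_i'$ is the unique path in $T$ from $a_i$ to $b_i$. By the previous paragraph, both canopies pass through $u$ and use precisely the same two edges of $P$ incident to $u$, namely $uw_X$ and $uw_Y$ (respectively $uv$).

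Finally, I would pass to the link $L$ of $u$, a $(\dim P-1)$-dimensional convex polytope. The 2-faces of $P$ incident to $u$ correspond bijectively to the edges of $L$, and the endpoints of such a link edge are the link-vertices associated to the two edges of $P$ at $u$ contained in that 2-face. Thus $F_1'$ and $F_2'$ correspond to two edges of $L$ with the same pair of endpoints; but a convex polytope has at most one edge between any two vertices, so these link-edges coincide and $F_1'=F_2'$. Combined with the uniqueness of lifts this yields $e_1=F_1'\cap B=F_2'\cap B=e_2$, the desired contradiction.

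The main obstacle is the bookkeeping in the first step: one has to observe that the specific endpoints $a_i,b_i$ are irrelevant, and that what truly matters is the pair of canopy edges at $u$ used by the canonical $T$-path, which is dictated solely by the identity of the stems (or the branch $V$) on each side. Once this rigidity is in hand, the contradiction via the link is a routine application of \autoref{lem:lift}, \autoref{lem:lift-ancestor}, and the absence of multi-edges in convex polytopes.
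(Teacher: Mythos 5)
Your argument is essentially the paper's: pass to the link $L$ of $u$, note that a base edge joining $X$ to $Y$ lifts (via \autoref{lem:lift} and \autoref{lem:lift-ancestor}) to a $2$-face through $u$, hence to an edge of $L$ joining the link-vertices for $X$ and $Y$, and conclude from the absence of multi-edges in $L$. One small slip: the LCA of $a_i$ and $b_i$ under a rooting at another leaf is generally an internal vertex of $T$, not one of $a_i,b_i$ (both are leaves); the correct justification that the canopy of the lift is the $a_i$--$b_i$ path is that the union of the two paths from the LCA to $a_i$ and to $b_i$ is exactly that path, which is also how \autoref{lem:lift-ancestor} is used in the paper.
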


\begin{proof}
As in the proof of \autoref{lem:externally-connected}, consider the link $L$ of $u$.
It has one vertex for each stem, and one vertex for~$V$. Every edge in $B$ connecting $X$ to $Y$ lifts to a $2$-face of $P$ that passes through $u$. This $2$-face in turn corresponds to an edge between the two vertices in $L$ that correspond to $X$ and $Y$. Distinct edges in $B$ lift to distinct edges in $L$, but each pair of vertices in $L$ can be the endpoints of at most one edge. Therefore there can be at most one edge from $X$ to $Y$ in $B$.
\end{proof}

For convenience, we again summarize the results of this section in a single theorem, describing the graph-theoretic properties of branches. We will use these properties to characterize the graphs of $4$-treetopes in an algorithmically recognizable way.

\begin{theorem}
\label{thm:slice}
If $P$ is a treetope with base $B$, and $(U,V)$ is a slice of $P$, then both $U$ and $V$ include at least $d-1$ stems. Each pair of stems of $U$ (or one stem and the set $V$) are connected by at most one edge in $B$. Both $U\cap B$ and $V\cap B$ are externally $(\dim P-1)$-vertex-connected.
\end{theorem}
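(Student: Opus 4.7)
The plan is to recognize that the theorem simply gathers three facts already proved or nearly so by the preceding lemmas. The middle claim is exactly \autoref{lem:cross-matching}, which bounds by one the number of base edges between any pair drawn from the stems of $U$ together with the singleton $\{V\}$. The third claim is \autoref{lem:externally-connected} applied to $U$, combined with the same lemma applied with the roles of the two branches interchanged; the slice definition is symmetric in $U$ and $V$, so this interchange is legitimate. These two parts therefore reduce to citation.

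For the stem count, I would argue from \autoref{cor:canopy-is-tree} as follows. By the definition of a slice, neither endpoint of the defining canopy edge $uv$ lies in the base $B$. In particular $u$ is not a leaf of the canopy $T$, so its degree in $T$ is at least $d=\dim P$. Moreover every edge of $P$ incident to $u$ lies in $T$: such an edge is either disjoint from $B$ or meets $B$ in exactly the single vertex at its other endpoint, so in either case its intersection with $B$ contains at most one point and it qualifies as a canopy face. Thus the degree of $u$ in the full graph of $P$ equals its canopy degree, and removing the edge $uv$ leaves at least $d-1$ incident edges $uw$, each producing a stem of $U$ by definition. The count for $V$ follows by applying the same reasoning to $v$.

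At the level of this summary theorem there is no real obstacle; the substantive work has already been done in \autoref{lem:externally-connected}, where the external $(d-1)$-connectivity was extracted from Balinski's theorem applied to the link of $u$ by an induction over stems. All that remains here is to observe that the same argument yields the symmetric statement for $V$, to combine it with \autoref{lem:cross-matching}, and to attach the elementary stem count sketched above.
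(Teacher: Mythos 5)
Your proposal is correct and takes the same route as the paper, which presents \autoref{thm:slice} as an explicit summary of the preceding results (the observation on branch sizes derived from \autoref{cor:canopy-is-tree}, \autoref{lem:cross-matching}, and \autoref{lem:externally-connected}) without a separate proof. Your elaboration of the stem count---showing every edge of $P$ incident to the non-base vertex $u$ intersects $B$ in at most one point, so its canopy degree equals its full degree and is at least $d$ by the corollary---is exactly the reasoning implicit in the paper's observation, and your symmetry observation for $V$ is sound.
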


\section{Clustered planarity}
\label{sec:clustered-planarity}

A long thread of research in the graph drawing community concerns \emph{clustered planarity}: given as input a pair $(C,G)$ where $G$ is a planar graph and $C$ is a nested family of subsets of the vertices of $G$ (a \emph{clustering} of~$G$), find a drawing of $G$ such that each cluster can be drawn as a simple closed curve surrounding its vertices, without crossings between pairs of clusters or between clusters and unrelated edges~\cite{FenCohEad-ESA-95,Dah-LATIN-98,GutJunLei-GD-02,CorDiB-SCG-05,CorDiBFra-JGAA-08}. It remains unknown whether clustered planarity can be tested in linear time, and so researchers have instead sought classes of instances that are general enough to cover the problems that might arise in practice but special enough that they can still be solved efficiently. We will take a different tack: we define a class of instances for the clustered planarity problem that are quite special, special enough to make the clustered planarity problem itself trivial for these instances. Instead we will construct non-planar graphs from these clustered planarity instances by adding a representative vertex for each cluster, and we will use this construction to characterize the graphs of $4$-treetopes.

\subsection{Definitions}

\begin{definition}
A \emph{polyhedral graph} is a $3$-vertex-connected planar graph with four or more vertices. By Steinitz's theorem a graph is polyhedral if and only if it is the graph of a $3$-polyhedron.
\end{definition}

\begin{definition}
If $G$ is a graph, and $C$ is a collection of subsets of the vertices of $G$, we say that $C$ is \emph{nested}, and that $(C,G)$ is a \emph{clustering} of $G$, if for every two sets $X$ and $Y$ in $C$ either $X\subset Y$, $Y\subset X$, or $X\cap Y=\emptyset$.
\end{definition}

Rather than representing clusterings as planar embeddings with the clusters drawn as simple closed curves we instead represent the clusters themselves as vertices in a larger graph. More precisely:

\begin{figure}[t]
\centering
\includegraphics[width=0.35\textwidth]{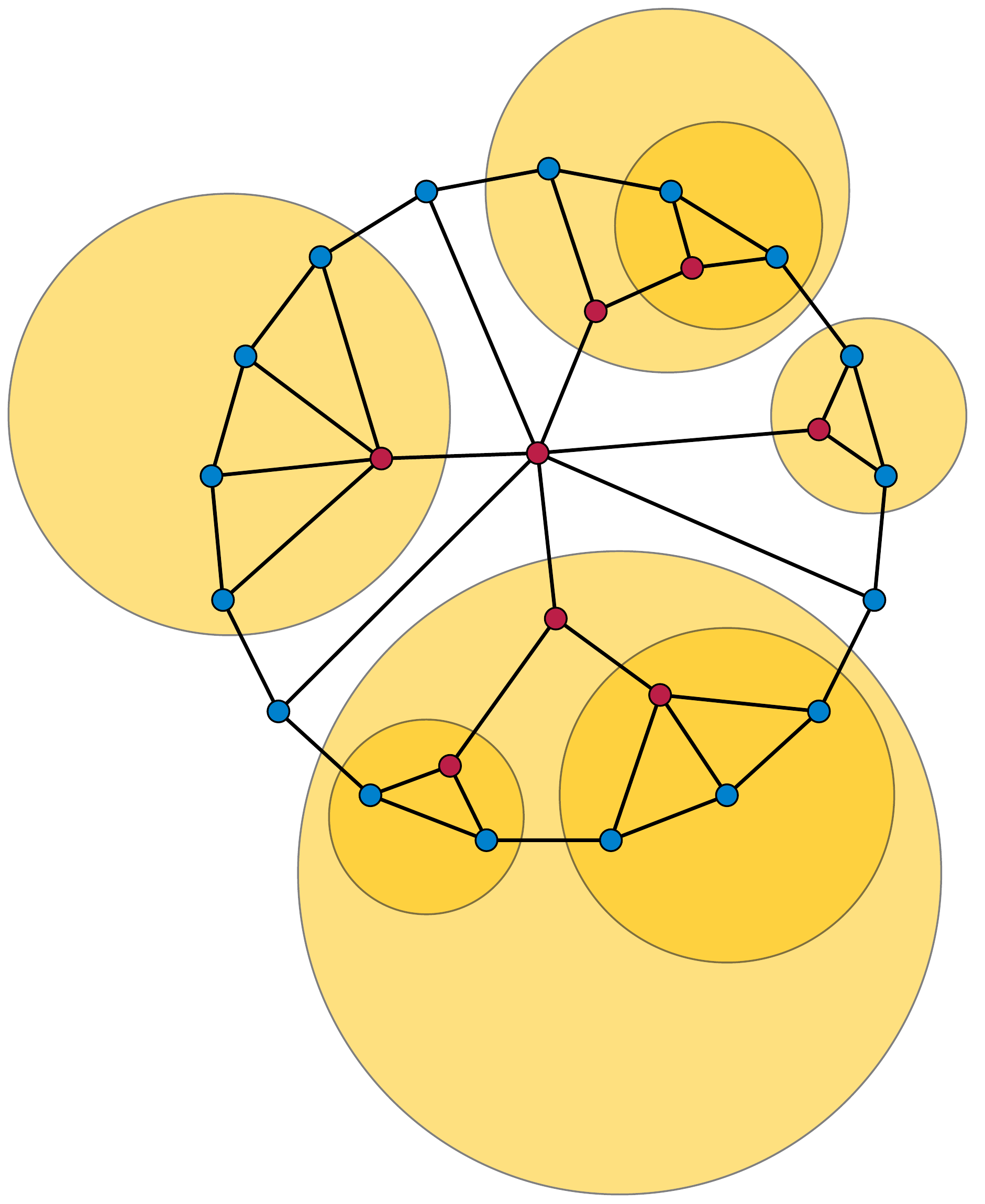}
\caption{For a cycle graph (blue vertices) with connected clusters (yellow disks), the cluster graph (with an added red vertex in each region formed by the circles) is a Halin graph, and any Halin graph can be formed in this way.}
\label{fig:ClusteredHalin}
\end{figure}

\begin{definition}
Let $(C,G)$ be a clustering, without any complementary pair of clusters, and (if it is not already in $C$) add $V(G)$ as a cluster in $C$.
Then we define the \emph{cluster graph} of $(C,G)$ to be a graph
that contains $G$ as a subgraph, and in addition has one vertex $c_X$ for each cluster $X$ in $C$.
Each vertex $v$ in $G$ is connected by an edge to the cluster vertex $c_X$ for the smallest cluster that contains~$v$. There always exists at least one such cluster because of the inclusion of $V(G)$ as a cluster. Each cluster vertex $c_X$ (other than the one for $V(G)$) is connected by an edge to the cluster vertex $c_Y$ for the smallest cluster that forms a strict superset of~$X$.
\end{definition}

The same construction may be represented topologically rather than combinatorially.
Let $(C,G)$ be any clustering. Represent the vertices of $G$ as points in the plane, and the nontrivial clusters of $G$ as Jordan curves disjoint from each other and the points, with each cluster consisting of the points inside the corresponding curve (ignoring whether the edges can be routed to give a valid clustered planar drawing). Then the cluster graph has a cluster vertex for each region into which the plane is divided by these curves, adjacent to the vertices for adjacent regions and to the points within its region. For example, if $G$ is a cycle graph and $C$ is a nested collection of  paths of two or more vertices in $G$, the resulting cluster graph is a Halin graph, and any Halin graph can be formed as a cluster graph in this way (\autoref{fig:ClusteredHalin}).

The following criteria for a more special class of clusterings and cluster graphs   (depicted in \autoref{fig:clustergraph}) are motivated by the properties described in \autoref{thm:slice}.

\begin{definition}
If $(C,G)$ is a clustering of a polyhedral graph, we say that $(C,G)$ is a \emph{well-connected clustering} if it has the following properties:
\begin{itemize}
\item Each cluster vertex $c_X$ in the cluster graph has degree at least four.
\item For each two disjoint sets $X$ and $Y$ that are either clusters, complements of clusters, or singleton vertex sets, and whose union is not the entire vertex set, at most one edge of $G$ has one endpoint in $X$ and one endpoint in $Y$.
\item For each cluster $X$ in $C$, other than the set of all vertices, and for the complementary set $Y=V(G)\setminus X$, both $X$ and $Y$ are externally $3$-vertex-connected in~$G$.
\end{itemize}
\end{definition}

For instance, the clustering shown in \autoref{fig:clustergraph} is well-connected. However, if the three central vertices were grouped into another cluster, the result would not be well-connected, because then there would exist disjoint but non-complementary pairs of clusters connected by more than one edge.
These definitions have been set up in such a way as to make the following observation clear:

\begin{observation}
\label{obs:treetope-cluster}
If $P$ is a $4$-treetope with base $B$, then the graph of $P$ is the cluster graph of a well-connected clustering $(C,G)$ where $G$ is the graph of~$B$.
\end{observation}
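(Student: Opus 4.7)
The plan is to construct an explicit well-connected clustering whose cluster graph is the graph of~$P$, and then verify each of the three defining properties in turn. Take $G$ to be the graph of the base facet~$B$; since $B$ is a $3$-polytope, Steinitz's theorem guarantees that $G$ is polyhedral. Root the canopy~$T$ at the unique root provided by \autoref{lem:parent}, and for each non-root canopy vertex $x$ that is not a base vertex, let the cluster $X_x\subseteq V(G)$ consist of the base leaves of the subtree of $T$ rooted at~$x$. These clusters form a nested family with no complementary pair, and together with the added top cluster $V(G)$ they are in one-to-one correspondence with the non-base vertices of~$P$, with the root matched to $V(G)$. Identifying the cluster graph of $(C,G)$ with the graph of~$P$ is then a routine application of \autoref{thm:classify-faces}: each base vertex is joined to the cluster vertex for its parent in $T$, and each cluster vertex is joined to the cluster vertex for the parent of the corresponding canopy vertex, which exactly reproduces the non-base edges of~$P$.

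The first and third well-connectedness conditions follow almost immediately from the structural results of Section~\ref{sec:treetopes}. The degree of each cluster vertex $c_{X_x}$ equals the degree of $x$ in $T$, which is at least $d=4$ by \autoref{cor:canopy-is-tree}. Each cluster $X_x$ coincides with $U\cap B$ for the slice defined by the canopy edge from $x$ to its parent, with the complement $V(G)\setminus X_x$ equal to the complementary $V\cap B$, so \autoref{lem:externally-connected} applied with $d=4$ yields external $3$-vertex-connectivity of both sides.

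The main obstacle is the single-edge condition, and my plan is to extend the link argument from the proof of \autoref{lem:cross-matching}. For any non-base canopy vertex~$u$, the link of $u$ is a $3$-polytope, so its $1$-skeleton is a simple graph; consequently, the base graph contains at most one edge joining the base leaves of any two distinct subtrees of $T\setminus\{u\}$. It then suffices to produce, for each admissible pair $(X,Y)$ of disjoint clusters, complements, or singletons with $X\cup Y\ne V(G)$, a non-base canopy vertex~$u$ such that $X$ and $Y$ are each contained in the base leaves of two distinct subtrees of $T\setminus\{u\}$. A short case analysis supplies such a $u$: when both $X$ and $Y$ are cluster-like (including the singleton case, treated as a trivial cluster at a leaf), take the lowest common ancestor of the associated canopy vertices; when one of them is the complement of a cluster $X_y$, take $u=y$ itself, so that $Y$ sits in the subtree through $\mathrm{parent}(y)$ while the other set sits in a subtree through a child. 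The only case that requires extra thought is that of two disjoint complements $V(G)\setminus X_x$ and $V(G)\setminus X_y$, but disjointness forces $X_x\cup X_y=V(G)$, which in turn forces the root of $T$ to have only two children, violating the minimum-degree bound of $d=4$ from \autoref{cor:canopy-is-tree}.
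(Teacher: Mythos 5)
Your proof is correct and follows essentially the same route as the paper: the clusters are the base-intersections of branches determined by slices, nestedness is automatic, and the three well-connectedness conditions are pulled from the Section~\ref{sec:treetopes} results on slices (\autoref{cor:canopy-is-tree}, \autoref{lem:externally-connected}, \autoref{lem:cross-matching}, packaged as \autoref{thm:slice}). The only real difference is cosmetic --- you anchor the cluster hierarchy at the canopy root whereas the paper anchors it at an arbitrary base vertex $v\in B$, giving a different but isomorphism-equivalent nesting --- and your explicit case analysis for the single-edge condition (locating the link vertex $u$ for each pair $(X,Y)$, and ruling out two disjoint complements via the degree bound at the root) simply spells out the detail that the paper compresses into a citation of \autoref{thm:slice}.
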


\begin{proof}
To form a well-connected clustering from $P$, we choose an arbitrary vertex $v\in B$ and define a cluster for each slice $(U,V)$, where the cluster is the intersection of $B$ with the branch of the slice that does not contain~$v$. The resulting clusters are  nested and their well-connectedness follows from \autoref{thm:slice}. Each vertex $u$ that is in one of the defined clusters is connected in the cluster graph to the vertex for the smallest cluster that contains it, which corresponds to the parent of $u$ in $P$. Each vertex that is not in any of these clusters (including $v$ itself) is connected to the cluster graph vertex corresponding to the cluster of all vertices in $G$, which again corresponds to its parent. Thus, the cluster graph and the graph of $P$ are isomorphic.
\end{proof}

\begin{definition}
The graph $G$ of every $3$-polyhedron~$P$ has a well-connected clustering with no nontrivial clusters, which represents the $4$-treetope formed as the pyramid over~$P$.
We call this the \emph{trivial clustering} of $G$.
\end{definition}

For some polyhedral graphs, the trivial clustering is the only well-connected clustering. For instance, this is the case for graphs such as the octahedral graph in which each $2$-face is a triangle, for in these graphs every partition of the vertices into two connected subsets (a cluster and its complement) has two edges that cross the partition and share an endpoint. We do not know whether the existence of a non-trivial well-connected clustering for a given polyhedral graph can be tested efficiently.

\subsection{Expansion and contraction}

\begin{definition}
Let $(C,G)$ be a well-connected clustering with at least one nontrivial cluster, and let $X$ be a cluster in $C$ that is not a superset of any smaller clustering. Then the \emph{contraction} of $X$ is the clustering $(C',G')$ in which we remove $X$ from the clustering and replace the vertices of $X$ in $G$ by a single supervertex, keeping all adjacencies to vertices outside $C$. The other clusters containing vertices of $X$ should also be modified in the obvious way, by replacing these vertices by the new supervertex.
\end{definition}

\begin{lemma}
With $C$, $G$, and $X$ as above, the contraction of $X$ is another well-connected clustering.
\end{lemma}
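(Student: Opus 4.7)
The plan is to verify each of the three defining properties of a well-connected clustering for the contracted pair $(C',G')$, where $G'=G/X$ and $C'=C\setminus\{X\}$, with every larger cluster now containing the super-vertex $x$ in place of the vertices of $X$. A key bookkeeping tool is the \emph{lifting} map that sends a set $A\subseteq V(G')$ to $A^*\subseteq V(G)$ by replacing $x$ (when present) with $X$; this lift preserves the set type (cluster, complement of a cluster, or singleton) as well as the number of edges crossing between two such sets.

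First I would confirm that $G'$ is polyhedral. Planarity is preserved under contraction of a connected subset, and $G[X]$ is connected (in fact $2$-connected) as an easy consequence of external $3$-connectivity of $X$; the $3$-vertex-connectivity of $G'=G/X$ is exactly external $3$-connectivity of $V(G)\setminus X$ in $G$, supplied by property~3 for the cluster $X$ in $(C,G)$. Property~1 follows because the only cluster vertex whose neighborhood in the cluster graph changes is the parent $c_Y$ of $c_X$, which loses the edge $c_Yc_X$ but gains the edge $c_Yx$, keeping its degree. Property~2 follows by lifting: edges of $G'$ between disjoint $A,B$ correspond bijectively to edges of $G$ between the disjoint, non-complementary clusters/complements/singletons $A^*,B^*$, so the original property~2 gives the desired one-edge bound.

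Property~3 is the substantive step. Let $A\in C'$ be a cluster with lift $A_0\in C$, and write $B_0=V(G)\setminus A_0$. A direct calculation shows that one of the two required external $3$-connectivity claims for $A$ (the one in which we contract whichever of $A_0$ or $B_0$ already contains $X$) reduces exactly to the $3$-connectivity of $G/A_0$ or $G/B_0$, both of which are given by property~3 of $(C,G)$. The other reduces to showing that $G/X/Z$ is $3$-connected, where $Z\in\{A_0,B_0\}$ is whichever of the two is disjoint from $X$. Note that $Z$ is externally $3$-connected in $G$ (either as a cluster or as the complement of a cluster), and property~2 gives at most one edge of $G$ between $X$ and $Z$. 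So the auxiliary claim---simultaneous contraction of two disjoint, externally $3$-connected sets related by property~2 produces a $3$-connected graph---is the main technical core of the lemma.

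For the auxiliary claim, I would rule out $2$-vertex cuts $\{p,q\}$ in $G/X/Z$, whose vertex set is $W\cup\{x,z\}$ with $W=V(G)\setminus X\setminus Z$. If $\{p,q\}\subseteq W$, the cut lifts to a $2$-cut of $G$, contradicting the $3$-connectivity of $G$. A cut $\{x,q\}$ with $q\in W$ is handled by observing that $G/X\setminus\{x,q\}=G\setminus X\setminus\{q\}$ is connected because $G/X$ is $3$-connected, and further contracting $Z$ preserves that connectivity; the cut $\{z,q\}$ is symmetric using $3$-connectivity of $G/Z$. The delicate case is $\{x,z\}$, whose removal leaves the induced subgraph $G[W]$, so I must show $G[W]$ is connected. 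This is the main obstacle. I would handle it by combining property~2 (at most one edge between $X$ and $Z$; each vertex of $W$ has at most one neighbor in each of $X$ and $Z$) with the $3$-connectivity of $G$ and the external $3$-connectivities of $X$ and $Z$: any would-be partition of $W$ into $G[W]$-components forces one component to have its neighborhood lie entirely in $X\cup Z$ with very restricted incidence, and the resulting separation is incompatible with the sparse cross-edge structure imposed by property~2 together with the Jordan-curve layout of $X$ and $Z$ in the planar embedding of $G$.
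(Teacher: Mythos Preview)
Your write-up is far more detailed than the paper's own proof, which is literally two sentences: it checks that $G'$ is polyhedral via the external $3$-connectivity of $V\setminus X$, and then simply asserts that ``the contraction does not change the required properties of any of the other clusters in $C'$.'' No further verification is given. So everything you do for Properties~1 and~2, and for the ``easy'' half of Property~3, already goes well beyond what the paper supplies; your lifting bookkeeping is correct and is presumably what the paper has in mind.

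Where you diverge is in isolating the double-contraction claim (that $G/X/Z$ is $3$-connected when $Z$ is the side of the old partition disjoint from $X$) as a genuine obstacle. You are right that this does not follow formally from $3$-connectivity of $G/X$ or $G/Z$ alone, and the paper does not address it. Your case split on the potential $2$-cuts is correct for $\{p,q\}\subseteq W$, $\{x,q\}$, and $\{z,q\}$.

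The gap is in your handling of the cut $\{x,z\}$, i.e., in showing $G[W]$ is connected. The sketch you give (``incompatible with the sparse cross-edge structure \dots\ together with the Jordan-curve layout of $X$ and $Z$'') is not yet a proof. Two specific issues: first, Property~2 only bounds edges between sets that are clusters, complements of clusters, or singletons, so it says nothing directly about the edges from an arbitrary component $W_1$ of $G[W]$ into $X$ or $Z$; you need to route the argument through single vertices of $W_1$. Second, appealing to a ``Jordan-curve layout'' of the clusters presupposes a clustered planar drawing, but at this point in the paper no such drawing has been constructed---the realization results that would furnish one come later and rely on this very lemma. If you want to make this step rigorous you should argue directly in the (unique) planar embedding of the $3$-connected graph $G$, using that $G[X]$, $G[Z]$, $G[V\setminus X]$, and $G[V\setminus Z]$ are each $2$-connected (a consequence of external $3$-connectivity), rather than invoking cluster curves.
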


\begin{proof}
$G'$ remains polyhedral: it is $3$-vertex-connected and has at least four vertices by the external connectivity of $V\setminus X$.
The contraction does not change the required properties of any of the other clusters in $C'$.
\end{proof}

We define an \emph{expansion} to be the opposite operation to a contraction. More precisely:

\begin{definition}
Let $(C,G)$ be a well-connected clustering, let $v$ be a designated vertex in $G$, and let $H$ be a polyhedral graph containing a vertex $v'$ of the same degree as $v$. Additionally, suppose that we have identified a one-to-one order-preserving correspondence between the  edges incident to $v$
(in the cyclic order given by the embedding of~$G$) and the edges incident to $v'$ (in the cyclic order given by the embedding of $H$). Then we define the \emph{expansion} of $v$ by $H$ to
be a graph formed from $G$ by deleting $v$, adding $H-v'$ in its place, and reconnecting each of the edges that was incident to $v$ in $G$ to the corresponding neighbor of $v'$ in $H$.
We then add to $C$ another cluster for the vertices in $H-v'$ that were added to the graph, and modify the existing clusters in $C$ in the obvious way, by replacing $v$ in each cluster that contains it by the vertices of $H-v'$.
\end{definition}

For instance, the graph in \autoref{fig:clustergraph} can be formed by starting with the ($4$-regular $6$-vertex) graph of an octahedron and its trivial clustering, and then performing three expansions, each of which uses the graph of the octahedron as~$H$ and creates one of the three nontrivial clusters in the figure.

\begin{observation}
Expansions and contractions are inverse to each other: if we expand a vertex and then contract the resulting new cluster, or if we contract a cluster and then expand the resulting vertex by the graph defining the property of external connectivity of the contracted cluster, the result in either case is the original clustering.
\end{observation}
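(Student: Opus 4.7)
The plan is to verify each direction by unwinding the definitions of expansion and contraction; both operations are specified combinatorially (vertex sets, edge sets, cluster families, and a cyclic edge correspondence), so the claim reduces to a careful identification of what each step adds, removes, or modifies.

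For the expand-then-contract direction, I would start with a well-connected clustering $(C,G)$, a vertex $v$, a polyhedral graph $H$ with a vertex $v'$ of matching degree, and the given edge correspondence. Expansion deletes $v$, installs $H-v'$ in its place, reroutes each edge formerly at $v$ to the corresponding neighbor of $v'$, and adds the new cluster $X:=V(H)\setminus\{v'\}$, replacing $v$ by $X$ in each cluster that previously contained it. No existing cluster becomes a subset of $X$ (clusters that contained $v$ now strictly contain $X$, and clusters that did not remain disjoint from $X$), so the new $X$ is eligible for contraction. Contracting $X$ then collapses $V(H)\setminus\{v'\}$ to a single supervertex that I identify with the original $v$; the well-connectedness condition forbidding more than one edge between $X$ and any outside vertex guarantees that each rerouted crossing edge becomes, after contraction, exactly the original edge incident to $v$. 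The cluster $X$ is then removed from the family and each enclosing cluster reverses its substitution, so $(C,G)$ is recovered in full.

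For the contract-then-expand direction, I would start from $(C,G)$ with a minimal cluster $X$, and take $H$ to be the witness graph for the external $3$-connectivity of $X$, namely the graph formed from $G$ by contracting $V(G)\setminus X$ to a single supervertex $v'$. This $H$ is polyhedral: it is $3$-vertex-connected by hypothesis, and the degree-$\ge 4$ condition on the cluster vertex $c_X$ forces $|X|\ge 3$, so $|V(H)|\ge 4$. Contracting $X$ in $(C,G)$ produces $(C',G')$ with a supervertex $v$; the vertices $v$ and $v'$ have the same degree, since both count the edges of $G$ crossing the cut $(X,V(G)\setminus X)$, and the planar embedding of $G$ induces matching cyclic orderings around them, with a canonical bijection sending each crossing edge of $G$ to its image under contraction. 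Expanding $v$ by $H$ under this correspondence reinstalls the induced subgraph on $X$ and reattaches each crossing edge to its original endpoint in $X$, recovering $G$; on the cluster side, the cluster $V(H)\setminus\{v'\}=X$ is reintroduced and $v$ is replaced by $X$ in every larger cluster, recovering $C$.

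The only real obstacle is bookkeeping around embeddings and edge correspondences: in the second direction one must check that the cyclic orderings of edges around $v$ in $G'$ and around $v'$ in $H$ agree, so that the natural bijection is a valid input to the expansion and produces back the original rotation system of $G$; and in both directions one must invoke the at-most-one-edge-between-disjoint-parts condition to exclude multi-edges appearing after contraction. Both points follow cleanly from the well-connectedness hypotheses together with the fact that planar contraction commutes with the inherited embedding, so no argument beyond the definition-tracking above should be needed.
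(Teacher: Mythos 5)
The paper states this as an unproved \emph{Observation}, treating it as immediate from the definitions, so there is no paper proof to compare against; your write-up supplies the definition-tracking that the paper leaves implicit, and it is essentially correct. Both directions are handled soundly: you correctly identify $H$ in the contract-then-expand direction as the external-connectivity witness graph (the contraction of $V(G)\setminus X$ to $v'$), correctly argue that $|X|\ge 3$ and $|V(G)\setminus X|\ge 3$ so that the matching condition applies on both sides of the cut, and correctly use the order-preserving edge correspondence to see the rotation system is restored. One small superfluity: in the expand-then-contract direction you invoke well-connectedness to show the rerouted edges collapse back to the original edges at $v$, but this actually needs nothing beyond simplicity of the original $G$ (the rerouted edges are in bijection with the original edges at $v$ by construction, they are the only $X$-to-outside edges after expansion, and they have pairwise distinct outside endpoints because $G$ was simple); well-connectedness is not what prevents multi-edges here. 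That does not affect the validity of the argument.
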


\begin{lemma}
With $C$, $G$, $v$, and $H$ as above, the expansion of $v$ by $H$ is another well-connected clustering.
\end{lemma}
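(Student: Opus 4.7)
The plan is to verify the four defining conditions of a well-connected clustering for $(C',G')$: that $G'$ is polyhedral, that every cluster vertex of $C'$ has degree at least four, that every admissible disjoint pair is joined by at most one edge, and that every cluster together with its complement is externally $3$-connected. The underlying trick throughout will be that contracting the complement of a set $Z \subseteq V(G')$ in $G'$ factors through either $G$, $H$, or an expansion of a contraction of $G$ by $H$, so that each condition on $(C',G')$ reduces to the corresponding condition on $(C,G)$ together with polyhedrality of $H$.

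First I would show that $G'$ itself is polyhedral. Planarity follows from the matching cyclic orders at $v$ and $v'$: excise small disks around $v$ and $v'$ in planar embeddings of $G$ and $H$, and glue along their boundaries. For $3$-connectivity I would run a standard removal argument via Menger's theorem: delete any two vertices $x,y$ of $G'$, case-split according to whether each of them lies in $V(G)\setminus\{v\}$ or $V(H)\setminus\{v'\}$, and in each case stitch paths in $G - \{x,y,v\}$ (or $G - \{x,v\}$) to paths in $H - \{v',y\}$ (or $H - v'$) across one of the $\deg(v) \geq 3$ surviving bridge edges $u_i u'_i$ created by the correspondence. The same argument, applied to suitable contractions, will be reused below.

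For the degree condition, the only new cluster vertex is $c_X$ with $X = V(H)\setminus\{v'\}$; its neighbors in the cluster graph are the $|V(H)|-1$ vertices of $H-v'$ together with the parent cluster vertex, giving degree $|V(H)|\geq 4$. Every pre-existing cluster vertex has its single edge to $v$ (or to a subcluster containing $v$) replaced by exactly one edge to $X$ (or to the corresponding modified subcluster), so its degree is preserved. For the ``at most one edge'' property, I would classify any admissible $S \subseteq V(G')$ by how it meets $V(H)\setminus\{v'\}$: either $S$ contains all of $V(H)\setminus\{v'\}$, is disjoint from it, or is a singleton inside it. For an admissible disjoint pair $(X,Y)$ with $X\cup Y \neq V(G')$, the edges of $G'$ between $X$ and $Y$ are in a natural bijection with either the edges of $G$ between ``shadow'' sets $\hat{X},\hat{Y}\subseteq V(G)$ obtained by collapsing $V(H)\setminus\{v'\}$ back to $v$, or with the edges of $H$ between two singletons inside $V(H)\setminus\{v'\}$. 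The former count is at most one by the few-edges property of $(C,G)$, after checking that $\hat{X}\cup\hat{Y} \neq V(G)$ whenever $X\cup Y \neq V(G')$, and the latter is at most one because $H$ is simple.

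Finally, for external $3$-connectivity of a cluster $Z\in C'$ and of its complement, I would identify the graph obtained by contracting $V(G')\setminus Z$ in $G'$ as one of the following: $H$ itself (when $Z = V(H)\setminus\{v'\}$), $G$ itself (when $Z$ is the complement of that new cluster), the external contraction of an unchanged old cluster in $G$, or the expansion by $H$ of the external contraction of an old cluster that contained $v$. In each subcase the source graph is $3$-connected, either by the polyhedrality of $G$ or $H$ themselves or by well-connectedness of $(C,G)$ applied to the appropriate old cluster or complement; and when an expansion is involved, $3$-connectivity of the result follows from the same removal argument used to establish $3$-connectivity of $G'$, applied to the contracted graph in place of $G$. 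The main obstacle I anticipate is precisely this bookkeeping, matching up contractions in $G'$ with expansions of contractions in $G$; once that identification is in hand, the well-connectedness conditions for $(C',G')$ drop out mechanically from the corresponding conditions on $(C,G)$ and the polyhedrality of $H$.
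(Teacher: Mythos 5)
Your proposal is correct and follows the same strategy as the paper's proof — verify each of the well-connectedness conditions for the expanded clustering directly, reducing them to the corresponding conditions on $(C,G)$ and to the polyhedrality (in particular, $3$-connectivity) of $H$. The paper states these verifications as a terse list of assertions; your version carries out the case analysis (classifying admissible sets by how they meet $V(H)\setminus\{v'\}$, and identifying external contractions in $G'$ as contractions of $G$, of $H$, or as expansions of contractions of $G$) that the paper leaves implicit.
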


\begin{proof}
The new cluster has the required degree in the cluster graph, because of the definitional requirement that the polyhedral graph $H$ has at least four vertices. It is externally $3$-connected by $3$-connectivity of both $G$ and $H$. Its addition does not change the cluster graph degree, or external $3$-connectivity of the other clusters, nor can it cause two edges to share endpoints when they did not do so previously. And the overall graph remains $3$-connected, because the change cannot introduce any new $2$-vertex cuts.
\end{proof}

We summarize the results of this section in a theorem:

\begin{theorem}
\label{thm:expansion}
The well-connected clusterings are exactly the clusterings that can be obtained from the trivial clustering of a polyhedral graph by a sequence of expansion operations.
Every expansion operation can be undone by a contraction operation, and vice versa.
Both expansion and contraction preserve the property of being a well-connected clustering.
\end{theorem}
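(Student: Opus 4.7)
The bulk of the theorem is already in hand: the expansion lemma gives that expansions preserve well-connectedness; the contraction lemma gives that contractions preserve well-connectedness; and the observation above records that expansions and contractions undo each other. So the only content remaining is the word ``exactly,'' i.e.\ that every well-connected clustering actually arises as some iterated expansion of a trivial clustering.

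The plan is to prove this by induction on the number of nontrivial clusters in $C$. The base case is when $C$ has no nontrivial clusters: then $(C,G)$ is the trivial clustering of a polyhedral graph, and the empty sequence of expansions witnesses the conclusion. For the inductive step, assume $(C,G)$ is well-connected and has at least one nontrivial cluster. Because $C$ is nested and finite, it contains a cluster $X$ that is minimal under inclusion among the nontrivial clusters, and this $X$ satisfies the hypothesis of the contraction definition (it is a nontrivial cluster that is not a proper superset of any other cluster). Contract $X$ to obtain a well-connected clustering $(C',G')$ with one fewer nontrivial cluster; by induction, $(C',G')$ is obtained from a trivial clustering by a sequence of expansions.

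To finish, observe that the contraction of $X$ replaces the vertices of $X$ in $G$ by a single supervertex $v$, while the external $3$-connectivity of $X$ in $(C,G)$ gives precisely the polyhedral graph $H$ (the subgraph induced by $X$, together with the ``outside'' contracted, which is $3$-connected and planar of size at least four by the well-connectedness hypothesis) together with a cyclic edge-correspondence around $v$ inherited from the planar embedding. Applying the expansion of $v$ by this $H$ to $(C',G')$ reintroduces exactly the vertices and edges of $X$ and reinserts $X$ as a cluster, recovering $(C,G)$ by the inverse observation. Appending this expansion to the sequence given by induction produces $(C,G)$ from a trivial clustering, completing the induction.

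The only delicate point, and the one I would write out most carefully, is verifying that the ``reverse'' expansion really does reproduce the \emph{same} clustering we started from, rather than merely \emph{some} well-connected clustering with one more cluster. This reduces to checking that the data needed to specify an expansion (the graph $H$, the identified vertex $v'$, and the cyclic correspondence of incident edges) is canonically recoverable from the contracted clustering together with the record of which vertices were collapsed. Since the contraction preserves all edges with at least one endpoint outside $X$ and preserves their cyclic order around the new supervertex, and since the induced subgraph on $X$ together with its external connectivity supervertex is precisely the polyhedral graph used for expansion, this identification is forced, and the inverse claim from the earlier observation applies verbatim. No further geometry is needed; everything is combinatorial on the clustering $(C,G)$.
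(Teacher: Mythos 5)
Your proof is correct and follows the same route the paper intends: the paper labels this theorem ``a summary of the results of this section'' and leaves the ``exactly'' direction implicit, relying on the contraction lemma, the expansion lemma, and the observation that the two operations are inverse; your induction on the number of nontrivial clusters, contracting a minimal cluster and then invoking the inverse-operations observation to re-expand, is precisely the argument needed to stitch those ingredients into the ``exactly'' claim, and your note about recovering $H$ and the cyclic correspondence canonically from the contraction matches the paper's phrasing ``expand the resulting vertex by the graph defining the property of external connectivity of the contracted cluster.''
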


\section{Realization}
\label{sec:realization}

In this section we prove that the cluster graphs of well-connected clusterings can always be realized as $4$-treetopes.

\subsection{Three-dimensional analogue}

It is dangerous to reason about higher-dimensional geometry by analogy to lower dimensions. Nevertheless, as an analogue of what we want to prove, consider the proof below of the following proposition, the special case of Steinitz's theorem for Halin graphs.

\begin{proposition}
\label{prop:realize-halin}
Every Halin graph can be realized as a $3$-treetope.
\end{proposition}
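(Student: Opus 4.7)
The plan is to mirror the expansion/contraction structure of \autoref{thm:expansion}, but in dimension three, by inducting on the number of internal (non-leaf) vertices of the tree $T$ underlying the given Halin graph $H$. When $T$ has a single internal vertex, $H$ is a wheel $W_k$, realized trivially as a pyramid over a convex $k$-gon: a $3$-treetope whose base is the $k$-gon.

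For the inductive step I would select an internal vertex $v$ of $T$ whose only internal neighbor is a single vertex $p$. Such a $v$ exists whenever the subgraph of $T$ induced by its internal vertices has at least two nodes; one can simply take any leaf of that subgraph. The remaining $k \ge 2$ neighbors of $v$ in $T$ are leaves $\ell_1,\dots,\ell_k$, and by planarity they appear consecutively along the outer cycle of $H$. I would form a smaller Halin graph $H'$ by removing $v$ and $\ell_1,\dots,\ell_k$ and attaching a single new leaf $v'$ to $p$ that takes the place of these leaves on the outer cycle. Then $H'$ has one fewer internal tree vertex, and by the induction hypothesis is realized by a $3$-treetope $P'$ with convex base polygon $B'$ containing $v'$.

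It remains to recover $H$ from $P'$ by a local geometric expansion at $v'$, analogous to the combinatorial expansion operation defined in the previous section. Within a sufficiently small neighborhood of $v'$ in the base plane, I would remove $v'$ and introduce $\ell_1,\dots,\ell_k$ along a short convex arc that replaces the corner at $v'$, so that the new base $B$ remains a convex polygon agreeing with $B'$ outside this neighborhood. I would then place the new canopy vertex $v$ just above the base plane, slightly into the interior of $P'$ along the direction of the former edge $v'p$, chosen so that each edge $v\ell_i$ slopes down to the base and the edge $vp$ lies along the former line through $v'p$.

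The main obstacle is verifying that this local modification produces exactly the intended face lattice, with no spurious faces (such as a triangle $p\ell_i\ell_j$ for non-adjacent leaves $\ell_i,\ell_j$) and no disturbance to faces of $P'$ disjoint from $v'$. I would handle this by a continuity argument: if $\ell_1,\dots,\ell_k$ are chosen close enough to the position of $v'$ and $v$ close enough to the segment $v'p$, every supporting hyperplane of $P'$ whose face does not contain $v'$ remains supporting for the new convex hull, so the non-incident face structure is preserved. A direct check of the vertex figures at $v$ and at each $\ell_i$, combined with the fact from \autoref{thm:classify-faces} that every non-base $2$-face of a treetope is the lift of a base edge, then confirms that the $2$-faces incident to $v$ are the $k-1$ triangles $v\ell_i\ell_{i+1}$ together with the lifts of the two base edges at $\ell_1$ and $\ell_k$ that reach outside the modified region, giving the correct realization of $H$.
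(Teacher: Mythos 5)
Your proof follows essentially the same route as the paper's: induct on the number of internal tree vertices, contract one extremal internal vertex $v$ together with its leaf neighbors to a single new base leaf $v'$, apply the hypothesis, and then geometrically undo the contraction by replacing the corner at $v'$ with a short convex chain and placing $v$ on the old segment $v'p$. Two small points are worth tightening. First, the paper avoids a ``sufficiently close'' continuity argument by placing the endpoints of the chain \emph{exactly} on the old edges $x'v'$ and $y'v'$ and $v$ exactly on $uv'$, so that the edges $ux$, $x'x$, $y'y$ lie on the same lines as before; this makes the links of all surviving vertices provably unchanged, rather than unchanged for parameters small enough, which is cleaner and sidesteps any worry about new supporting hyperplanes appearing in the modified region. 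Second, your final appeal to \autoref{thm:classify-faces} to pin down the $2$-faces at $v$ is circular: that theorem describes treetopes, and you have not yet established that $P$ is one. The paper instead argues directly that each chain vertex must have some neighbor outside the base (else its link would be degenerate), and $v$ is the only available such neighbor, which forces the intended adjacency structure without presupposing the conclusion.
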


\begin{figure}[t]
\centering\includegraphics[scale=0.35]{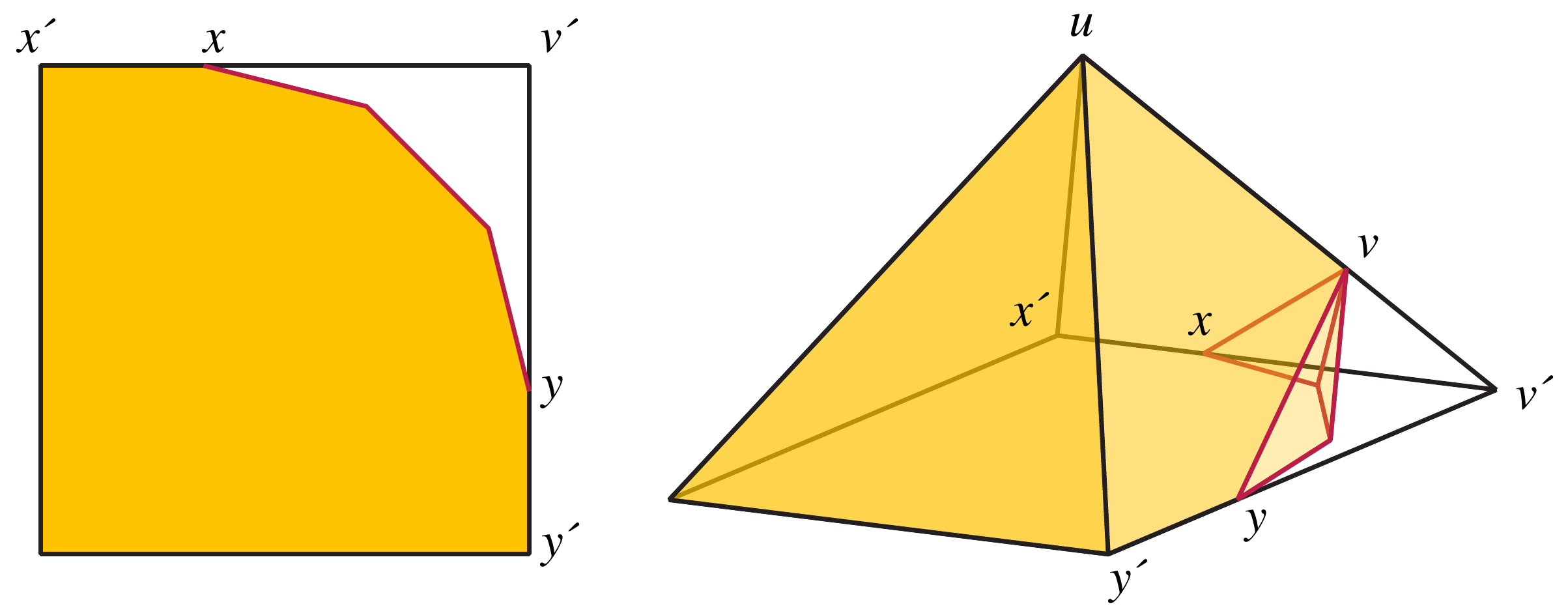}
\caption{Illustration for the proof of \autoref{prop:realize-halin}. Replacing a vertex $v'$ of the base polygon of a $3$-treetope by a convex chain, and adding a new vertex $v$ on the canopy edge $uv'$, produces a $3$-treetope whose canopy has one more internal vertex.}
\label{fig:halin-expansion}
\end{figure}

\begin{proof}
Let $G$ be a Halin graph, formed from a planar-embedded tree $T$ by adding a cycle connecting the leaves of $T$. We prove the result by induction on the number of internal vertices of~$T$.
As a base case, if $T$ has one internal vertex and $k\ge 3$ leaves, then $G$ can be realized as a pyramid over a regular $k$-gon. Otherwise, let $v$ be an interior vertex of $T$ with only one non-leaf neighbor~$u$. Let $\ell\ge 2$ be the number of leaf neighbors of $v$, and let $T'$ be the smaller tree formed from $T$ by replacing $v$ and its leaf neighbors by a single leaf vertex $v'$, adjacent to~$u$. By the induction hypothesis, the Halin graph formed from $T'$ has a realization as a $3$-treetope $P'$, in which $v'$ is a vertex of the base polygon $B'$. Let  $x'$ and $y'$ be the two neighbors of $v$ in this base polygon for $T'$.

Now, form a new base polygon $B$ from $B'$ by removing $v'$ and replacing it by a convex chain of $\ell$ vertices; call the first vertex in the chain $x$ and place it on edge $x'v'$, call the last vertex in the chain $y$ and place it on edge $y'v'$, and place the remaining vertices in convex position within triangle $xyv'$ (\autoref{fig:halin-expansion}, left). Place a new point for $v$ on edge $uv'$,
and let $P$ be the convex hull of the remaining vertices of $P$ together with the newly placed points (\autoref{fig:halin-expansion}, right).

Then, in the new polyhedron~$P$, edges $ux$, $x'x$, and $y'y$ lie on the same lines as the previous edges $uv'$, $x'v'$, and $y'v'$, so the change from $P'$ to $P$ does not change the link of any vertex that belongs to both polyhedra. However, in $P$, each vertex of the convex chain must have a neighbor outside $B$, for otherwise it would have a two-dimensional link. Therefore each of these vertices is connected to $v$ and to its two neighbors in $B$, but to no other vertices.
Thus, we have formed a $3$-treetope whose canopy now includes $v$ with the correct number of leaf neighbors, realizing~$G$ as required.
\end{proof}

A proof along the same lines was used by Aichholzer et al.~\cite{AicCheDev-CCCG-12} to prove that every Halin graph has a convex polyhedral realization in which the base face is horizontal and all other faces have equal slopes, or equivalently that every tree can be realized as the straight skeleton of a convex polygon.

We will realize our $4$-treetopes in the same way, by an inductive process in which we add one canopy vertex in each step. As in the above proof, the geometric placement of this canopy vertex will not be difficult: it can go anywhere along the parent edge of the leaf it replaces, and will automatically have edges connecting it to all the other vertices added in the same step. And, as in the above proof, all continuing vertices will have unchanged links, preventing them from having unwanted edges to newly added vertices. The part of the proof that is tricker is the replacement of $v'$ by a convex chain. In a two-dimensional base polygon, any placement of the correct number of vertices in convex position within the triangle $xyv'$ will work, because these all produce convex polygons with the correct number of sides. However, in the corresponding step for $4$-treetopes, we will need to replace a single vertex $v'$ of the base polyhedron by a set of vertices whose (three-dimensional) convex hull has a predetermined combinatorial structure. So placing the new vertices into convex position is not enough; they also need to be in positions with respect to each other that produce the correct three-dimensional convex hull.

\subsection{Face and cone shape realizability}

To achieve the desired placement of new vertices in each step of the inductive proof, we will use projective duality together with a known method for realizing convex polyhedra with specified face shapes.

\begin{lemma}[Barnette and Gr\"unbaum~\cite{BarGru-PJM-70}]
\label{lem:specified-2-face}
Let $G$ be a $3$-vertex-connected planar graph, $f$ be a $2$-face of its combinatorial embedding, and $B$ be a realization of $f$ as a convex polygon in the $xy$-plane of three-dimensional Euclidean space. Then there exists a three-dimensional polyhedron $P$ whose graph is isomorphic to~$G$, with $B$ as the face corresponding to $f$.
\end{lemma}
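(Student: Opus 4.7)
The plan is to combine Tutte's rubber-band embedding with the Maxwell--Cremona lifting correspondence, both of which allow the shape of a designated outer face to be prescribed from the outset. Since $G$ is $3$-vertex-connected and planar, its combinatorial embedding is unique up to reflection (Whitney's theorem), so there is no ambiguity about which cyclic sequence of vertices bounds~$f$.

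Step one: invoke Tutte's spring theorem. Fix the vertices of $f$ at the positions they take in the given convex polygon $B$ in the $xy$-plane, and let every other vertex float to the barycenter of its neighbors. Tutte's theorem guarantees that this linear system has a unique solution, that the resulting straight-line drawing is planar, and that every bounded face is a strictly convex polygon whose cyclic order of vertices matches the combinatorial embedding. Call the resulting planar drawing $D$.

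Step two: apply the Maxwell--Cremona correspondence. The barycentric condition at each interior vertex is exactly an equilibrium stress on $D$ with positive weight on every interior edge (taking unit weights is the simplest choice). Maxwell--Cremona then produces a piecewise-linear function on the planar subdivision determined by~$D$ whose graph is a polyhedral surface $S$ lying above the $xy$-plane, with the property that a stress edge is a convex (``valley'') crease precisely when its weight is positive. Since every interior edge has positive weight, $S$ is convex upward, and its lower boundary is the polygon~$B$. Closing the region beneath $S$ by adding $B$ itself as a bottom face produces a bounded convex body whose boundary has combinatorial structure isomorphic to $G$, with $B$ as the face corresponding to~$f$; this is the required polytope~$P$.

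The main obstacle I expect is verifying the shape and positivity claims rigorously: one must check that Tutte's theorem really applies to an arbitrary prescribed convex outer polygon (not just a regular one) and that the unit-weight barycentric solution does give a valid equilibrium stress in the Maxwell--Cremona sense, so that the lift is genuinely strictly convex rather than merely weakly convex. Both facts are standard, but they depend on $3$-vertex-connectivity in essential ways, so that is the step where care is needed. Everything else is routine: the combinatorial structure of $P$ is read off from $D$, and the outer face equals $B$ by construction because the boundary vertices of $D$ were pinned there from the start.
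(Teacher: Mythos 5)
The lemma is stated in the paper without proof, as a citation to Barnette and Gr\"unbaum~\cite{BarGru-PJM-70}, whose argument is an inductive/combinatorial construction (the conclusion of the paper even points out that this induction produces doubly-exponential coordinates). Your Tutte-plus-Maxwell--Cremona route is therefore a genuinely different strategy, and if it worked it would be a notable improvement; unfortunately it has a real gap at exactly the step you flag as the ``main obstacle'' and then wave off as standard.

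The Maxwell--Cremona correspondence lifts a drawing to a polyhedral surface only from an \emph{equilibrium self-stress}, i.e.\ a symmetric edge-weight assignment for which the weighted neighbor sum vanishes at \emph{every} vertex, boundary vertices included. The Tutte embedding gives you equilibrium at interior vertices only. You still have to choose stresses on the boundary edges so that the boundary vertices balance as well, and this is where the attempt fails when the outer face is not a triangle. Count degrees of freedom: with an $n$-gon outer face, there are $n$ unknown boundary-edge stresses and $2n$ scalar equilibrium equations at the boundary vertices. Three of these equations are redundant (total force and total torque already vanish), leaving $2n-3$ independent constraints. For $n=3$ this is a square system and the residual forces always resolve into edge stresses -- that is the reason the triangle-outer-face version of this proof of Steinitz's theorem works. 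For $n\ge 4$ the system is overdetermined ($2n-3>n$), and with unit interior weights and a generic convex outer $n$-gon it is inconsistent. This is not a hypothetical worry: already for the graph of a combinatorial cube with a non-special convex outer quadrilateral, the unit-weight Tutte drawing has no self-stress extension, so no Maxwell--Cremona lift exists at all, let alone a convex one. Three-vertex-connectivity has nothing to do with this; it guarantees the strict convexity of the Tutte drawing, but does not repair the overdetermined boundary equilibrium.

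To salvage this route one would have to vary the interior edge weights (there are enough parameters, in principle) to force the $n-3$ consistency conditions at the boundary, while also keeping all interior weights strictly positive so the lift stays convex. Establishing that such weights always exist is a nontrivial theorem in its own right, not a routine corollary of Tutte and Maxwell--Cremona, and indeed is essentially what makes \autoref{lem:specified-2-face} a theorem worth citing rather than a folklore remark. As presented, the proposal silently assumes the hard part.
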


By applying a projective transformation that fixes the plane of $B$ we can additionally ensure that, for a given viewpoint $\alpha$ on the opposite side of that plane from $P$, $f$ is the only point of $P$ visible from $\alpha$. For our purposes we need a projectively dual version of this result, for which we need some more definitions.

\begin{figure*}[t]
\centering\includegraphics[scale=0.45]{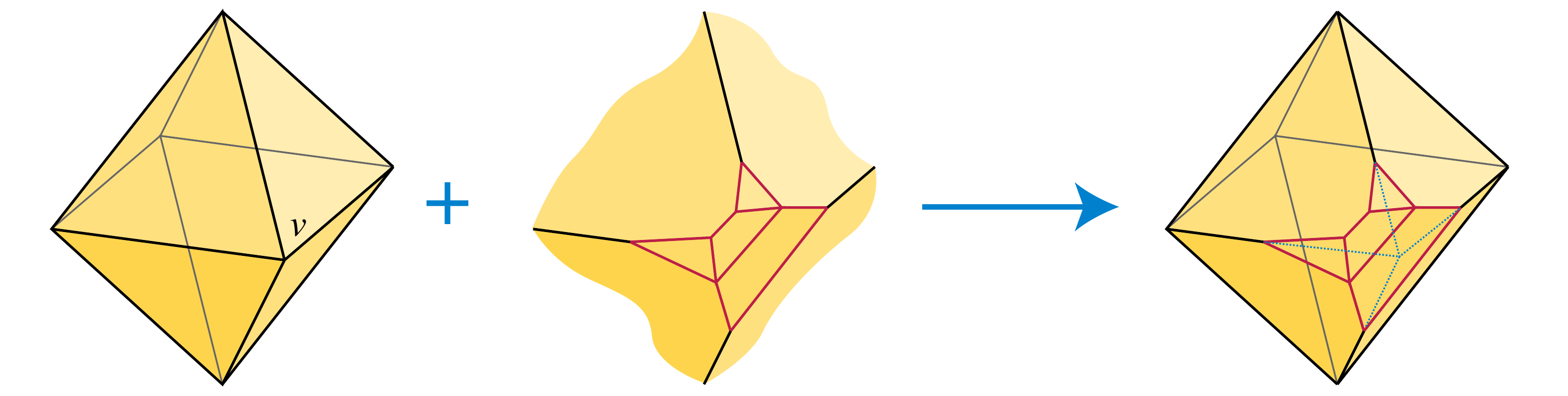}
\caption{Illustration for the proof of \autoref{thm:characterize}: replacing the vertex $v$ of the $3$-polyhedron $B'$ by a cone polyhedron whose cone matches the link of~$v$.}
\label{fig:cone-replace}
\end{figure*}

\begin{definition}
Suppose that finitely many halfspaces all have boundaries that pass through a common point~$p$, and that $p$ is the only point in the intersection of their boundaries. In such a case we call the intersection of the halfspaces a \emph{convex polyhedral cone}, and we call $p$ the \emph{apex} of the cone.
If $Q$ is the intersection of a convex polyhedral cone $C$ with finitely many additional halfspaces (none of which contain the apex~$p$), and every infinite face of $Q$ is a subset of an infinite face of the cone, we call $Q$ a \emph{cone polyhedron}, and we call $C$ the \emph{cone} of~$Q$. The faces of a cone polyhedron may be defined in the same way as for convex polyhedra. Equivalently, a cone polyhedron is an intersection of finitely many halfspaces with the property that the hyperplanes containing unbounded faces of the intersection intersect in a single point, the apex.

The \emph{graph} of a cone polyhedron $Q$ is an undirected graph with a vertex for each $0$-face of $Q$ together with one additional vertex, the \emph{cone vertex} of $Q$. It has an edge for each $1$-face of $Q$, of two types: a $1$-face that is a finite line segment connects two $0$-faces, and a $1$-face that is an infinite ray connects a $0$-face with the cone vertex.
\end{definition}

\begin{lemma}
\label{lem:cone-realization}
Let $C$ be a three-dimensional convex polyhedral cone with $k$ sides ($2$-faces), and let $G$ be a polyhedral graph with a designated vertex $v$ of degree $k$, and with a fixed order-preserving correspondence between the edges incident to~$v$ and the rays of the cone. Then there exists a cone polyhedron $Q$ whose graph is isomorphic to~$G$, such that the isomorphism maps $v$ to the cone vertex and respects the correspondence between edges and rays, and such that each infinite ray of $Q$ is a subset of the corresponding ray of $C$.
\end{lemma}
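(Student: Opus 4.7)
The plan is to apply \autoref{lem:specified-2-face} to the dual graph $G^*$ and then use a projective polarity to convert the resulting bounded polytope into the required cone polyhedron. Since $G$ is polyhedral, $G^*$ is polyhedral as well, and the specified vertex $v$ of $G$ corresponds to a face $F^*$ of $G^*$ with $k$ edges, whose cyclic order matches the cyclic order of the rays of $C$ via the given correspondence.

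First, pick a point $o$ in the interior of $C$ distinct from $p$, and let $\phi$ be the polarity with respect to a sphere of radius $r$ centered at $o$, with $r$ small enough that $o$ will lie outside the polytope constructed below. Under $\phi$, the apex $p$ dualizes to a plane $H$, and each of the $k$ supporting planes of the sides of $C$ (all passing through $p$) dualizes to a point in $H$. By projective duality these $k$ points are the vertices of a convex $k$-gon $B^*$ in $H$, essentially the polar of any cross-section of $C$; after a possible cyclic relabeling, the edges of $B^*$ inherit a cyclic order that matches the prescribed order of the rays of $C$.

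Next, apply \autoref{lem:specified-2-face} to realize $G^*$ as a polyhedron $P^*$ with $F^*$ realized as $B^*$ in $H$, placing the remainder of $P^*$ on the side of $H$ opposite to $o$. The projective-transformation refinement of that lemma, applied with viewpoint $o$, further ensures that $F^*$ is the only face of $P^*$ visible from $o$, which guarantees that $Q=\phi(P^*)$ is a simple cone polyhedron rather than a more complicated unbounded region. Because combinatorial polarity reverses face lattices, $Q$ has graph $G$. A direct calculation then shows that the bounding half-spaces of $Q$ arising from the vertices $q_i$ of $B^*$ are exactly the side half-spaces of $C$, so $Q\subseteq C$, while the remaining bounding half-spaces (arising from the other vertices of $P^*$, all on the $p$-side of $H$) exclude $p$. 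Hence $Q$ is a cone polyhedron with cone $C$; the graph isomorphism sends $v$ to the pole of $H$---namely $p$---which becomes the cone vertex of $Q$; and each ray of $Q$ incident to the cone vertex lies along the corresponding ray of $C$ by construction.

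The main obstacle is translating the geometry of $C$ into the right polygon $B^*$ and polarity $\phi$ so that polarity applied to the Barnette--Gr\"unbaum realization produces a cone polyhedron with cone exactly $C$ (not merely a combinatorially equivalent one) and with the prescribed edge-to-ray correspondence. The essential computation is that the polar half-space of each $q_i$ coincides with the $i$th side half-space of $C$; this is immediate from the definition $q_i=\phi(\pi_i)$, since polarity is involutive on planes and half-spaces.
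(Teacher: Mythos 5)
Your argument matches the paper's proof in essentially every step: you apply Barnette--Gr\"unbaum to the dual graph with the designated face realized as the polar image of the cone's side planes, and then dualize back; your sphere polarity $\phi$ centered at an interior point $o$ is exactly the paper's projective duality $\tau$ sending $\omega$ to the plane at infinity, with your $o$ playing the role of the viewpoint $\alpha=\tau(\text{plane at infinity})$. One minor wording slip: $p$ is not a point of $Q$ (indeed your own calculation shows the extra half-spaces exclude it), so it is the apex of the cone $C$ of $Q$ and the abstract cone vertex of the graph of $Q$, not a literal $0$-face of $Q$.
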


\begin{proof}
Let $\omega$ be a point interior to $C$ and let $\tau$ be a projective duality transformation that maps $\omega$ to the plane at infinity. Then $\tau$ maps the apex of $C$ to a non-infinite plane $\pi$, and it maps the planes through the sides of $C$ to points in convex position in $\pi$, forming the vertices of a convex polygon $B$. Additionally, $\tau$ maps the plane at infinity into a non-infinite point $\alpha$ that does not belong to plane $\pi$.

Apply \autoref{lem:specified-2-face} to realize the dual graph of $G$ as a polyhedron $P$ in which the face dual to $v$ is realized as polygon $B$, and additionally (by performing a projective transformation of $P$) arrange the realization in such a way that $P$ is on the other side of $\pi$ from $\alpha$ and $B$ is the only face of $P$ visible from $\alpha$. Then $\tau^{-1}(P)$ (a cell in the projective arrangement of hyperplanes dual to the vertices of $P$) has one vertex $v$ separated from all the others by the plane at infinity. That is, when viewed as a subset of Euclidean space, this cell in the arrangement has two connected components, one of which is a cone polyhedron and the other of which is a polyhedral cone (either $C$ or its reflection through the apex). The cone polyhedron component, reflected if necessary to lie within $C$, gives the desired realization $Q$.
\end{proof}

\subsection{Characterizing $4$-treetopes}

\begin{theorem}
\label{thm:characterize}
A graph $G$ is the graph of a $4$-treetope $P$ with base $B$ if and only if $G$ is the cluster graph of a well-connected clustering  $(C,F)$ of a polyhedral graph $F$, with $F$ forming the graph of $B$.
\end{theorem}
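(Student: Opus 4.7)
The forward implication is already established by \autoref{obs:treetope-cluster}, which shows that the graph of any $4$-treetope is the cluster graph of a well-connected clustering of the graph of its base. The substance of the theorem is therefore the reverse direction, which I would prove by induction on the number of expansion operations needed to build $(C,F)$ from the trivial clustering of $F$, using the constructive characterization in \autoref{thm:expansion}.

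For the base case, $(C,F)$ is the trivial clustering of a polyhedral graph $F$; by Steinitz's theorem $F$ is the graph of some $3$-polyhedron $B$, and the pyramid over $B$ is a $4$-treetope whose graph is exactly the cluster graph of the trivial clustering (a single apex joined to all vertices of $B$).

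For the inductive step, suppose $(C,F)$ arises from $(C',F')$ by expanding a vertex $v \in F'$ of degree $k$ by a polyhedral graph $H$ via a prescribed correspondence of incident edges. By induction, the cluster graph of $(C',F')$ is the graph of a $4$-treetope $P'$ with base $B'$ realizing $F'$. I would apply \autoref{lem:cone-realization} to the tangent cone $C_v$ of $B'$ at $v$ (a three-dimensional convex polyhedral cone whose $k$ rays are along the edges of $B'$ incident to $v$ and whose $k$ sides correspond to the $2$-faces of $B'$ at $v$), to the polyhedral graph $H$, and to its designated vertex $v'$, using the edge correspondence supplied by the expansion. The lemma produces a cone polyhedron $Q$ with apex at $v$ whose graph is $H$ and whose unbounded rays lie along the rays of $C_v$. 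After shrinking $Q$ about its apex by a sufficiently small positive factor, I would define the new base $B$ as the convex hull of the $0$-faces of $Q$ together with the vertices of $B'$ other than $v$. For the new canopy vertex (corresponding to the cluster freshly created by the expansion), I would place a single point $\hat v$ on the open segment from $v$ to its canopy parent $u$, very close to $v$, and take $P$ to be the convex hull of $\hat v$, the vertices of $B$, and the canopy vertices of $P'$.

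The main obstacle is to verify that these convex hulls realize the intended combinatorics: that $B$ has exactly the prescribed vertex set and its face lattice agrees with $F$, and that $P$ is a $4$-treetope with base $B$ whose graph matches the cluster graph of $(C,F)$. This is the $4$-dimensional analogue of \autoref{prop:realize-halin}, where the ``convex chain'' replacing a base polygon vertex is here replaced by the more intricate $3$-dimensional shape of $Q$; the reason \autoref{lem:cone-realization} is needed, rather than a free placement of points in convex position, is precisely to enforce this three-dimensional combinatorial type. Once $Q$ is small enough and $\hat v$ is close enough to $v$, the edges of $B'$ leaving $v$ agree with the rays of $Q$ near $v$ and the segment $u\hat v$ lies along the old edge $uv$ near $u$, so the links of all unchanged vertices of $P'$ remain combinatorially the same. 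The links of the new vertices are then forced: each $0$-face of $Q$ must acquire $\hat v$ as a neighbor (otherwise its link would be two-dimensional), and $\hat v$'s link is the cone polyhedron structure of $Q$ closed off by~$u$. A final check that every face of $P$ meeting $B$ in at most one point has dimension at most one then confirms that $P$ is a $4$-treetope with base $B$ realizing the cluster graph of $(C,F)$.
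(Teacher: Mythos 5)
Your proposal follows essentially the same route as the paper's own proof: the forward direction via \autoref{obs:treetope-cluster}, the reverse direction by induction on expansion operations (equivalently, on the number of clusters) using \autoref{thm:expansion}, the base case as a pyramid over a Steinitz realization, and the inductive step by applying \autoref{lem:cone-realization} to the tangent cone of $B'$ at $v$, shrinking the resulting cone polyhedron, adding the new canopy vertex along the old parent edge, and arguing from unchanged links of continuing vertices. The only differences are cosmetic (the paper places $c$ anywhere on the parent edge rather than insisting it be near $v$, and it asserts the final treetope verification rather than flagging it as a remaining check), so I have nothing substantive to add.
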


\begin{proof}
One direction, the claim that every $4$-treetope graph is a cluster graph, is \autoref{obs:treetope-cluster}. In the other direction, let $G$ be the cluster graph of a well-connected clustering $(C,F)$; we will prove by induction on the number of clusters that $G$ can be realized as a $4$-treetope. As a base case, if there is only one cluster (the set of all vertices of the base graph) then we may realize the base graph $F$ as a $3$-polytope $B$ by Steinitz's theorem, and then realize $G$ itself as the pyramid over~$B$.

Otherwise, by \autoref{thm:expansion}, let $(C,F)$ be obtained from a smaller well-clustered graph $(C',F')$ by an expansion operation. This operation replaces a vertex $v$ of $F'$ by a new cluster, whose cluster vertex may be called $c$. Let $H$ be the polyhedral graph used to form the expansion, and let $v'$ be the vertex that is removed from $H$ as part of the expansion (with $v$ and $v'$ having equal degrees). By induction, the cluster graph $G'$ of $(C',F')$ can be represented as a $4$-treetope $P'$ with base $B'$, with $F'$ isomorphic to the graph of $B'$.

Let $C$ be a polyhedral cone in the three-dimensional affine hull of $B'$, formed by the intersection of the $2$-faces of $B'$ that are incident to $v$.
By \autoref{lem:cone-realization} we may find a realization of $H$ as a cone polyhedron $K$, with $v'$ as the cone vertex and with $C$ as the cone of $K$, respecting the correspondence between neighbors of $v$ and neighbors of $v'$.
Scale this cone polyhedron to be small enough so that all of its infinite rays have starting points that lie within the edges of $B'$ incident to~$v$. Create a new base polyhedron $B$ by adding the vertices of the scaled cone polyhedron to $B'$ and removing~$v$ (\autoref{fig:cone-replace}).
Add another vertex representing $c$ anywhere on the edge from $v$ to its parent in $P'$, and compute $P$ as the convex hull of the set of vertices obtained in this way.

Then, in the new polytope~$P$, for each vertex $u$ that was a neighbor of~$v$ there exists a new vertex within line segment $uv$, so the change from $P'$ to $P$ does not change the link of any vertex that belongs to both $P$ and $P'$. However, in $P$, each vertex of $K$ must have a neighbor outside $B$, for otherwise it would have a two-dimensional link. Since all vertices with changed links belong to $B\cup\{c\}$, the only choice for a vertex outside $B$ to connect to is~$c$. Therefore each vertex of~$K$ has an edge to $c$, but to no other vertices outside~$B$.
Thus, we have formed a $4$-treetope whose canopy now includes~$c$, and where the base vertices reached from $c$ have the correct topology, realizing~$G$ as required.
\end{proof}

\section{Recognition}
\label{sec:recognition}

Our recognition algorithm for the graphs of $4$-treetopes is based on the idea of repeatedly finding and contracting a cluster in the clustering corresponding to the treetope. To this end, we seek the vertices that represent contractible clusters.

\subsection{Extremal vertices and extremal clusters}

\begin{definition}
Let $G$ be the graph of a $4$-treetope $P$ with base $B$. Then a vertex $v$ of~$G$ is \emph{extremal} if $v$ is disjoint from $B$ and has exactly one neighbor in $G$ that is also disjoint from $B$. An \emph{extremal cluster} of $G$ is the set of vertices consisting of $v$ and its neighbors in~$B$.
\end{definition}

Because the vertices and edges of a treetope that are disjoint from the base form a tree, whose leaves are the extremal vertices, we have:

\begin{observation}
\label{obs:pyramid-is-base-case}
Every $4$-treetope that is not a pyramid contains at least two extremal vertices.
\end{observation}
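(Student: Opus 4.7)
The plan is to analyze the canopy $T$ given by \autoref{cor:canopy-is-tree} and identify the extremal vertices with the leaves of a natural subtree of $T$. First I would observe that the non-base vertices of $P$ are exactly the internal (non-leaf) vertices of~$T$. \autoref{lem:parent} says every vertex of $P$ has at most one parent, so in particular every base vertex (which lies in the base hyperplane and hence at minimum height) has at most one incident canopy edge, since every canopy neighbor of a base vertex must be a parent of it. Combined with the fact from \autoref{cor:canopy-is-tree} that every leaf of $T$ lies in $B$, this shows that the base vertices are precisely the leaves of $T$ together with possibly some vertices not appearing in $T$ at all. Moreover, because every edge of $G$ incident to a non-base vertex $v$ is a canopy edge (a base edge has both endpoints in $B$), the neighbors of $v$ in $G$ coincide with its neighbors in~$T$.

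Next I would form the subgraph $T^\ast$ of $T$ induced by the non-base vertices. Any path in $T$ between two non-leaf vertices avoids all leaves, so $T^\ast$ is connected, hence a subtree of $T$. Under the correspondence above, a vertex $v$ of $G$ is extremal precisely when $v$ lies in $T^\ast$ and has exactly one neighbor there, i.e.\ when $v$ is a leaf of $T^\ast$. Since every tree with at least two vertices has at least two leaves, it suffices to show $|V(T^\ast)| \ge 2$ whenever $P$ is not a pyramid.

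Finally I would dispose of the degenerate cases. The set $V(T^\ast)$ is nonempty because $P$ is four-dimensional and cannot lie inside its base hyperplane, so some vertex of $P$ is off the base. If $V(T^\ast) = \{a\}$, then $a$ is the unique non-base vertex; applying \autoref{lem:lift} to each vertex of $B$ produces a parent edge for that vertex, and the other endpoint of every such edge must be~$a$. Hence $a$ is adjacent to every base vertex, so $P$ is the pyramid with apex $a$ over~$B$. Contrapositively, if $P$ is not a pyramid then $T^\ast$ has at least two vertices and therefore at least two leaves, giving the required two extremal vertices of $G$. The main subtlety is not any single deep step but the bookkeeping that translates the graph-theoretic notion of extremality into the combinatorics of the canopy; once non-base vertices are identified with internal vertices of $T$ and non-base neighbors in $G$ with $T$-neighbors, the conclusion reduces to the elementary fact that a tree on two or more vertices has at least two leaves.
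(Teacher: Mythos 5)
Your proof is correct and takes essentially the same route as the paper: the paper's one-line justification is that the vertices and edges disjoint from the base form a tree whose leaves are the extremal vertices, which is precisely your $T^\ast$; you simply unpack each claim (internal vertices of $T$ are the non-base vertices, $G$-neighbors of a non-base vertex coincide with its $T$-neighbors, the induced subgraph on internal vertices is a subtree) and handle the single-vertex pyramid case explicitly. One minor inaccuracy: every vertex of $P$ does appear in the canopy $T$, since a vertex always intersects $B$ in at most one point, so your hedge about base vertices ``possibly not appearing in $T$ at all'' is vacuous, though harmless.
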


\begin{observation}
Let $G$ be the graph of a $4$-treetope $P$ with base $B$, and let $v$ be an extremal vertex of~$G$. Then $G$ is the cluster graph of a well-connected clustering $(C,H)$ where $H$ is the graph of~$B$, in which the neighbors in $B$ of $v$ form a minimal cluster in $H$.
\end{observation}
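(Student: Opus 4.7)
The plan is to apply the construction from the proof of \autoref{obs:treetope-cluster} with a strategically chosen base vertex $w$. Since $v$ is extremal, its unique non-base neighbor $u$ yields a canopy edge $uv$, which defines a slice $(U,V)$ of $P$. By \autoref{lem:parent} and \autoref{cor:canopy-is-tree}, every base vertex of $P$ is a leaf of the canopy, so the subtree of the canopy on the $v$-side of $uv$ consists of $v$ together with its base-leaf neighbors; hence $U = \{v\} \cup N_B(v)$. If we can choose a base vertex $w \notin N_B(v)$, the construction in the proof of \autoref{obs:treetope-cluster} assigns the slice $(U,V)$ the cluster $U \cap B = N_B(v)$ inside the resulting well-connected clustering $(C,H)$.

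To see that such a $w$ exists, I would argue by contradiction: suppose every base vertex is adjacent to $v$. Since each base vertex has exactly one non-base canopy neighbor, no non-base vertex other than $v$ would have any base neighbor. Let $T^*$ be the subtree of the canopy induced by the non-base vertices; then every non-base vertex except $v$ has all of its $\ge 4$ canopy neighbors inside $T^*$, while $v$ has only $u$ as a non-base neighbor, so $v$ is the unique leaf of $T^*$. But a tree with only one leaf is a single vertex, contradicting the fact that $u \ne v$ both lie in $T^*$.

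For minimality, I would note that the clusters produced by the construction are in bijection with the slices of $P$, each of which is defined by a canopy edge having both endpoints non-base. A cluster strictly inside $N_B(v)$ would have to come from a slice whose $w$-avoiding branch is a proper subset of $U$ in the canopy, which requires a defining canopy edge lying strictly inside the subtree $U$. But the only canopy edges in $U$ connect $v$ to base leaves, and such edges have a base endpoint and therefore do not define slices. Hence no smaller cluster exists in $C$, and $N_B(v)$ is minimal.

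The main obstacle is the edge case handled in the second paragraph: one has to rule out the possibility that $v$ is adjacent to \emph{every} base vertex before the construction of \autoref{obs:treetope-cluster} can be invoked with $w \notin N_B(v)$. Everything else follows mechanically from the tree structure of the canopy and from what we have already established in \autoref{thm:slice}.
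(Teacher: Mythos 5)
Your proof is correct and takes the same approach as the paper's, which simply says to choose a base vertex $w$ of $B$ not adjacent to $v$ and apply the slice-based cluster construction of \autoref{obs:treetope-cluster}. You fill in two details the paper leaves implicit: the verification that such a $w$ always exists (your leaf-counting contradiction on the non-base subtree of the canopy is valid, given that base vertices are leaves with a unique parent and non-base vertices have degree at least four) and the check that $N_B(v)$ is indeed minimal among the resulting clusters because every canopy edge interior to the $v$-branch has a base endpoint and so defines no slice.
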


\begin{proof}
Choose a vertex $w$ of $B$ that is not a neighbor of $v$, and for each split of $P$ form a cluster in $B$ consisting of the base vertices in the branch of the split that does not contain~$w$.
\end{proof}

\begin{observation}
Let $G$ be the graph of a $4$-treetope $P$ with base $B$,  let $v$ be an extremal vertex of~$G$,
and let $G$ be the cluster graph of a well-connected clustering $(C,H)$ in which the extremal cluster of $v$ is one of the minimal clusters. Then the operation in $G$ of contracting the cluster of~$v$ into a single supervertex produces the cluster graph of the clustering formed by contracting the extremal cluster of~$v$.
\end{observation}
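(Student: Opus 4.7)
The plan is to identify both graphs in the claimed equality explicitly and match their vertex and edge sets one piece at a time. Let $N$ denote the set of base neighbors of $v$, so that the extremal cluster of $v$ is $\{v\}\cup N$ and, by hypothesis, $N$ itself is a minimal cluster in $C$. Under the correspondence between $G$ and the cluster graph of $(C,H)$ supplied by \autoref{obs:treetope-cluster}, the first step is to argue that $v$ must coincide with the cluster vertex $c_N$. By minimality of $N$, the cluster vertex $c_N$ is adjacent in the cluster graph to precisely the vertices of $N$ together with one further cluster vertex $c_M$, where $M\in C$ is the smallest cluster strictly containing $N$ (with the convention $c_M=c_{V(H)}$ if no such cluster exists). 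The extremal vertex $v$ has the matching neighborhood in $G$: its base neighbors are exactly $N$, and its unique non-base neighbor is its parent in the canopy of $P$. Thus the identification $v\leftrightarrow c_N$ is forced, and the ``cluster of $v$'' contracted in $G$ is the set $\{v\}\cup N$.

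Next I would compute the neighborhood of the supervertex $s$ produced by contracting $\{v\}\cup N$ in $G$. The edges leaving $\{v\}\cup N$ split into two types: edges from $N$ to $V(H)\setminus N$, which are exactly the edges of $H$ from $N$ to vertices outside, and edges from $v=c_N$ to cluster vertices, of which there is only one, namely the edge to $c_M$. Here I would use the minimality of $N$ once more to ensure that each element of $N$ has $c_N$ as its unique cluster-vertex neighbor, so no edge from $N$ to another cluster vertex $c_Z$ can exist. Consequently $s$ is adjacent to precisely the $H$-neighbors of $N$ outside $N$, together with $c_M$.

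Then I would compare this with the cluster graph of the contracted clustering $(C',H')$ obtained by removing $N$ from $C$ and merging $N$ into a supervertex $s'$ of $H'$. By the definition of contraction, $s'$ is adjacent in $H'$ to precisely the $H$-neighbors of $N$ outside $N$; and since $N$ is no longer in $C'$, the smallest cluster of $C'$ containing $s'$ is now $M$, giving $s'$ a single additional edge to $c_M$ in the cluster graph. Identifying $s\leftrightarrow s'$ therefore matches neighborhoods exactly.

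Finally, I would check that the adjacencies of all remaining vertices and cluster vertices agree in the two graphs. For a vertex $w\in V(H)\setminus N$, its $H$-neighborhood is the same on both sides (with any neighbor in $N$ collapsing to the merged vertex), and the smallest cluster containing $w$ is unaffected by removing $N$ from $C$. For a cluster vertex $c_Z$ with $Z\neq N$, the nesting property forces either $Z\cap N=\emptyset$ or $N\subsetneq Z$, so $c_Z$ is never directly connected to a vertex of $N$ in $G$, and its chain to the next cluster vertex above is unaltered. The main point that requires genuine care is the identification $v\leftrightarrow c_N$ in the first step; once that is secured, the rest is a routine bookkeeping comparison.
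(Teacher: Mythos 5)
The paper states this as an \emph{Observation} with no accompanying proof --- it is treated as an immediate consequence of the constructions in the surrounding definitions and the preceding Observation (which already fixes the identification of $v$ with the cluster vertex for its set of base neighbors). Your proposal is a correct, careful unwinding of exactly that implicit argument: you pin down the identification $v \leftrightarrow c_N$, compute the neighborhood of the contracted supervertex on each side, and verify that the remaining base-vertex and cluster-vertex adjacencies are unaffected by removing a minimal cluster from $C$. The one small note is that the clause ``with the convention $c_M=c_{V(H)}$ if no such cluster exists'' is vacuous, since the definition of cluster graph in the paper always includes $V(H)$ as a cluster, so a smallest strict superset always exists; this does not affect correctness.
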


\subsection{Candidate vertices}

Intuitively, the overall outline of our algorithm will be to repeatedly identify and contract extremal clusters until reaching the graph of a pyramid, which is easily recognized. We would like to do this by using the properties of cluster graphs to identify their extremal vertices. However, these vertices cannot be uniquely identified, as the example of a tetrahedral prism demonstrates. This $4$-polytope has two tetrahedral facets and four triangular-prism facets; it can form a treetope in four different ways, with any one of the triangular-prism facets as its base and with the two remaining vertices that are outside this facet as its extremal vertices. Thus, in this polytope, every vertex is extremal, but not all choices of extremal vertices are compatible with each other. In other, larger $4$-treetopes, there can also exist vertices that are necessarily part of the base of the treetope, but whose local neighborhoods look like the neighborhoods of extremal vertices. Therefore, we define a broader class of vertices, the \emph{candidate vertices}, that include the extremal vertices and possibly some other non-extremal vertices.

\begin{definition}
Let $G$ be an arbitrary graph. We define a \emph{candidate vertex} to be a vertex $v$ of $G$ with the following properties:
\begin{itemize}
\item $v$ has at least four neighbors.
\item The graph induced in $G$ by the neighbors of $v$ is planar, and has exactly two connected components, one of which is an isolated vertex.
\item If $v$ is deleted from $G$, the nontrivial component of the neighbors of $v$ induces an externally $3$-vertex-connected subgraph of the remaining graph.
\item The set of edges connecting the vertices in the nontrivial component of the neighbors of $v$ to vertices (other than~$v$) outside this component forms a matching in $G$, with no two of these edges sharing an endpoint.
\end{itemize}
\end{definition}

\begin{observation}
The conditions for being a candidate vertex are checkable in polynomial time and are satisfied by every extremal vertex.
\end{observation}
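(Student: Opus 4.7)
My plan has two halves: showing the conditions are checkable in polynomial time, and showing each is satisfied by every extremal vertex. The algorithmic half is routine. The degree bound is a constant-time check; planarity of the subgraph induced on $N(v)$, along with the structure of its connected components, can be decided in linear time by the Hopcroft--Tarjan planarity algorithm together with a depth-first search; external $3$-vertex-connectivity of a designated vertex set reduces to running a standard $3$-connectivity test after contracting its complement into a single supervertex; and the matching condition is a single scan over the edges that leave the nontrivial component.

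For the structural half, suppose $v$ is an extremal vertex of the graph $G$ of a $4$-treetope $P$ with base $B$. Let $u$ be its unique non-base neighbor and let $v_1,\ldots,v_k$ be its base neighbors. Each edge $vv_i$ is a canopy edge, so by \autoref{lem:parent} the vertex $v$ is the unique parent of each $v_i$, and no $v_i$ has any other non-base neighbor. The link of $v$ in $P$ is a $3$-polytope with $k+1$ vertices, so $\deg(v) = k+1 \geq 4$. No edge $uv_i$ can exist in $G$, since such an edge would have to be a canopy edge and would close a $3$-cycle $u,v,v_i$ in the canopy tree (\autoref{cor:canopy-is-tree}); hence $u$ is isolated in the subgraph induced on $N(v)$, and the second component is the subgraph induced on $\{v_1,\ldots,v_k\}$, which is planar as a subgraph of the graph of $B$ and connected by the corollary to \autoref{lem:externally-connected}. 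External $3$-vertex-connectivity of $\{v_1,\ldots,v_k\}$ in $G-v$ then follows from \autoref{lem:externally-connected} applied to the branch $U=\{v,v_1,\ldots,v_k\}$: because no $v_i$ has canopy neighbors other than $v$, contracting the complement of $\{v_1,\ldots,v_k\}$ in $G-v$ produces the same adjacencies at the supervertex as contracting $B\setminus\{v_1,\ldots,v_k\}$ in $B$, and the latter graph is $3$-connected.

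For the matching condition I would argue in two steps. First, \autoref{lem:cross-matching} applied to the singleton stem $\{v_i\}$ and the opposite branch gives at most one cross edge leaving each $v_i$, ruling out endpoint collisions inside $\{v_1,\ldots,v_k\}$. Second, if two cross edges $v_i w$ and $v_j w$ shared an outside endpoint $w$, their unique lifts (\autoref{lem:lift}) would be $2$-faces of $P$ whose canopies both traverse $v$, $u$, and the unique canopy path from $u$ to $w$; both $2$-faces would then contribute the same edge of the link of $u$ (joining the vertex for edge $uv$ to the vertex for the first step along that canopy path), contradicting the bijection between $2$-faces through $u$ and edges of the link's polyhedral graph. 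This second step is the main obstacle, because \autoref{thm:slice} only bounds edges between stems of one branch or between a stem and the full complementary branch, so shared outside endpoints must be excluded by descending to the link of $u$ rather than by a direct appeal to \autoref{lem:cross-matching}.
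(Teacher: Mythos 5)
Your proof is correct; the paper states this observation without proof, treating it as an immediate consequence of the structural lemmas of \autoref{sec:treetopes}, and your write-up supplies those details accurately. The degree bound via the link, the isolation of $u$ via cycle-freeness of the canopy tree, connectivity of $\{v_1,\ldots,v_k\}$ via the corollary to \autoref{lem:externally-connected}, and the transfer of external $3$-connectivity from $B$ to $G-v$ (using that each $v_i$ has no canopy neighbor other than $v$) are all right.

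However, your claimed obstacle in the matching step is not a real one, and the detour through the link of $u$ is unnecessary. You say that \autoref{lem:cross-matching} cannot rule out two cross-edges $v_iw$ and $v_jw$ sharing an outside endpoint $w$; it can, if you apply it on the other side of the slice. Take the slice defined by the canopy edge $uv$, with $V=\{v,v_1,\ldots,v_k\}$ the small branch and $U$ the remainder. Every base vertex of $U$ lies in some stem of $U$ (the stems of $U$ partition $U\setminus\{u\}$, and $u\notin B$), so $w\in W$ for some stem $W$ of $U$. Then $v_iw$ and $v_jw$ are two edges of $B$ joining $W$ to the complementary branch $V$, which \autoref{lem:cross-matching} forbids. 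The statement of \autoref{lem:cross-matching} is phrased with ``stems of $U$ or the set $V$,'' but the two branches of a slice are interchangeable, so this is a legitimate application. Your link-of-$u$ argument is in fact just a re-derivation of the proof of \autoref{lem:cross-matching} for this special case, so it is not wrong, just redundant. One small gap you should fill regardless: you should state explicitly that \emph{every} edge leaving $\{v_1,\ldots,v_k\}$ in $G-v$ has its outside endpoint in $B$. This is because each $v_i$ is a canopy leaf whose unique parent is $v$, so $v_i$ has no non-base neighbor besides $v$; without this remark, \autoref{lem:cross-matching}, which only speaks of edges in $B$, would not on its face cover all the edges in the candidate-vertex condition.
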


Despite candidate vertices not necessarily being extremal vertices, they can be used to identify extremal clusters:

\begin{definition}
Let $G$ be the graph of a $4$-treetope $P$ with base $B$, and
let $v$ be a candidate vertex in~$G$. Then we define the cluster of~$v$ to be the set of vertices consisting of $v$ and the nontrivial connected component in the neighbors of~$v$.
\end{definition}

\begin{lemma}
\label{lem:candidates-are-good}
Let $G$ be the graph of a $4$-treetope $P$ with base $B$,
let $v$ be a candidate vertex, and let $Q$ be the cluster of~$v$. Then $Q$ is an extremal cluster for $G$ and~$B$.
\end{lemma}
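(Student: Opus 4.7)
My plan is to split on whether the candidate vertex $v$ lies in the base $B$ of the treetope $P$ or in its canopy, and in each case to exhibit an extremal vertex of $P$ whose extremal cluster equals $Q$. Let $p$ denote the isolated neighbor of $v$ from the candidate definition and $S$ the nontrivial component, so $Q=\{v\}\cup S$.

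If $v$ is a canopy vertex, then by \autoref{lem:parent} it has a unique canopy parent $p_v$ and some (possibly empty) set of canopy children. No two canopy neighbors of $v$ are adjacent in $G$ because the canopy is a tree; and no canopy neighbor of $v$ is adjacent to a base neighbor of $v$, because the base neighbors of $v$ have $v$ itself as their canopy parent, while each canopy neighbor of $v$ draws its own base neighbors from a disjoint part of the base. Hence every canopy neighbor of $v$ is isolated in the subgraph induced on $N(v)$, and the candidate condition of a single isolated vertex forces $v$ to have no canopy children. Then $v$ is extremal, $S=N_B(v)$, and $Q$ is the extremal cluster of $v$ itself.

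Now suppose $v\in B$, so the only canopy neighbor of $v$ is $p_v$. I first rule out $p=p_v$. In that case no $u\in N_B(v)$ would have $p_v$ as canopy parent, so applying the matching condition to each $u$ shows that its only external edge is the canopy parent edge $u\,p_u$ with $p_u\ne p_v$; in particular $u$ has no base edges leaving $N_B(v)\cup\{v\}$. Connectivity of $B$ then implies $V(B)=N_B(v)\cup\{v\}$, so $B$ is a pyramid with apex $v$. The matching condition applied at each canopy descendant of $p_v$ allows it at most one direct base neighbor in $N_B(v)$, but \autoref{lem:externally-connected} requires the cluster of any such descendant to contain at least three base vertices; propagating this to a leaf of $p_v$'s canopy subtree yields a contradiction. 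Hence $p=w$ for some $w\in N_B(v)$ and $p_v\in S$. Applying the matching condition at $p_v$, its single allowed external edge is used by its own canopy parent, so $p_v$ has no canopy children (hence is extremal) and $N_B(p_v)\subseteq(N_B(v)\cup\{v\})\setminus\{w\}$. Finally, apply \autoref{lem:cross-matching} to the slice at the canopy edge just above $p_v$: the stem $\{v\}$ can have at most one base edge into the complementary branch, and the edge $v\,w$ already fills this quota, so no $u\in N_B(v)\setminus\{w\}$ can lie in the complementary branch. Hence every such $u$ has canopy parent $p_v$, $N_B(p_v)=\{v\}\cup(N_B(v)\setminus\{w\})$, and $Q=\{p_v\}\cup N_B(p_v)$ is the extremal cluster of the extremal vertex $p_v$.

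The main obstacle is the degenerate subcase $p=p_v$ in the base-vertex case: none of the candidate conditions rule it out singly, and the contradiction requires a nontrivial interplay between the matching condition and the external $3$-vertex-connectedness of clusters, iterated down the canopy subtree rooted at $p_v$ until a leaf cluster is forced to have both at least three members and at most one. Once that subcase is eliminated, the cross-matching lemma applied to the slice at the canopy edge above $p_v$ produces a clean one-line argument for the remaining steps, provided one is careful to identify the correct stem on each side.
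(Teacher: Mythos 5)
Your proof follows the same overall structure as the paper's: a case split on whether the candidate $v$ lies in the canopy or in the base $B$, using the ``too many isolated vertices'' observation for the canopy case, and then analyzing the parent $p_v$ of a base candidate $v$ to show that $Q$ coincides with the extremal cluster of $p_v$. The main difference is that you explicitly identify and rule out the degenerate sub-case where the isolated neighbor $p$ equals $p_v$; the paper never separates this out, and in fact its fourth and fifth bullet points (``the neighborhood of $u$ contains exactly one cluster vertex'' and ``\ldots does not contain any base vertex $z$ outside $Q$'') are argued assuming $u\in Q$, i.e.\ implicitly assuming $p\ne p_v$. So your extra case is not superfluous bookkeeping but a genuine tightening, and your iterated argument (matching condition forces $B$ to be a pyramid over $v$, then propagates down the subtree of $p_v$ to an extremal leaf that cannot satisfy both the matching bound and external $3$-connectivity) is a valid way to close it. Where the paper instead derives the contradiction implicitly (one can check that its facts~1--4, applied when $p=p_v$, force $N_B(u)\supseteq\{v\}\cup N_B(v)$ while $u=p_v$ is isolated in $N(v)$, so $N_B(v)=\emptyset$, contradicting $\deg v\ge 4$), your route is more transparent. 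For the remaining steps you use \autoref{lem:cross-matching} on the slice above $p_v$, while the paper uses the well-connectedness conditions of the clustering directly; these are essentially the same fact in dual form.

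One small point to tighten: when you apply the matching condition at $p_v$ to conclude it is extremal, you write ``its single allowed external edge is used by its own canopy parent,'' which tacitly assumes $p_v$ is not the root of the canopy. If $p_v$ were the root, that external edge could instead be spent on a canopy child, and your chain would not immediately conclude that $p_v$ has no canopy children. The paper covers this via its trivial-clustering reasoning: if $p_v$ has no cluster-vertex neighbor then $P$ is a pyramid and $p_v$ is universal, so $N(v)$ induces a connected graph and $v$ is not a candidate. You should either add that observation or note that your pyramid-elimination argument in the $p=p_v$ sub-case can be reused. This is a minor omission, not a flaw in the approach.
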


\begin{proof}
We first observe that $v$ cannot be a non-base vertex that is not extremal, for every such vertex has a neighborhood that induces a graph with at least two isolated vertices (the canopy neighbors of~$v$).
And if $v$ is extremal, the result is true by definition. So the remaining case is that $v$ is a candidate vertex but that it belongs to $B$. In this case, let $u$ be the parent of~$v$, and let $w$ be the isolated vertex in the neighborhood of~$v$.

Then we have the following facts about $u$ and its cluster:
\begin{itemize}
\item \emph{The cluster of $u$ contains $v$.} This follows because otherwise $u$ would not be the parent of $v$.
\item \emph{The cluster of $u$ contains at least one neighbor $x$ of $v$ in $Q$.} For otherwise the only possible neighbor of $v$ in the cluster of $u$ would be its isolated neighbor~$w$, and the cluster could not be externally $3$-connected.
\item \emph{The cluster of $u$ contains every neighbor of~$v$ in $Q$.} For, if not, by the connectivity of the neighborhood, there would exist some two adjacent vertices $y$ and $z$ in $Q$ such that the cluster of $u$ contained $y$ but not $z$. But then there would be two edges from $z$ to the cluster of $u$ (one to $y$ and one to $v$), violating the requirement that no two edges into the cluster can share an endpoint.
\item \emph{The neighborhood of $u$ contains exactly one cluster vertex.} There must be at least one cluster vertex neighbor, for otherwise we would have a trivial clustering, and the neighbors of $v$ would form a single component connected through~$u$. And $u$ cannot have two cluster vertex neighbors, for that would violate the condition that the edges connecting neighbors of $v$ to the rest of the graph form a matching.
\item \emph{The neighborhood of $u$ does not contain any base vertex $z$ outside~$Q$.} For, if it did, $u$ would belong to $Q$ but would have two neighbors outside $Q$ (the vertex~$z$ and one cluster vertex neighbor), violating the requirement on the candidate vertex~$v$ that the edges from the cluster of $v$ to the rest of the graph form a matching.
\end{itemize}
We conclude from this chain of reasoning that $Q$ is  the cluster of~$u$ and that it is an extremal cluster.
\end{proof}

\begin{lemma}
\label{lem:contraction-reversal}
Let $G$ be an arbitrary graph, let $v$ be a candidate vertex, and let $Q$ be the cluster of $v$.
Let $G'$ be the graph formed by contracting $Q$ to a single supervertex $v'$, and suppose that $G'$ is the graph of a $4$-treetope in which $v'$ belongs to the base. Then $G$ is also the graph of a $4$-treetope in which $Q$ is an extremal cluster.
\end{lemma}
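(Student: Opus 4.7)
The plan is to apply Theorem~\ref{thm:characterize} in both directions, bridging them with a single expansion operation from Theorem~\ref{thm:expansion}. First, applying Theorem~\ref{thm:characterize} to $G'$, we obtain a well-connected clustering $(C', F')$ whose cluster graph is $G'$, where $F'$ is the graph of the base of the realizing $4$-treetope; since $v'$ lies in the base, $v'$ is a vertex of $F'$.

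Next, I would build the polyhedral graph $H$ that will serve as the expansion gadget at $v'$. Let $Q_B = Q \setminus \{v\}$. Define $H$ on the vertex set $Q_B \cup \{v''\}$, with $v''$ a fresh vertex, whose edges are all edges of $G$ with both endpoints in $Q_B$, together with one edge from $v''$ to each vertex of $Q_B$ that has a neighbor in $G$ lying outside $Q$. By the candidate-vertex matching condition, the degree of $v''$ in $H$ equals the number of edges of $G$ leaving $Q$ other than the edge to $v$'s isolated neighbor, and this matches the degree of $v'$ in $F'$, since contracting $Q$ to the supervertex $v'$ produced exactly these as the edges at $v'$, and the matching property ensures no multi-edges arose.

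The main obstacle is confirming that $H$ is polyhedral. Planarity follows from the candidate condition that the neighborhood of $v$ induces a planar subgraph, from which the induced subgraph on $Q_B$ is planar; the matching property then ensures that $v''$ can be placed into a face of that embedding with its incident edges routed without crossings, yielding an embedding of $H$ whose cyclic order at $v''$ can be made to match the cyclic order at $v'$ in $F'$ (possibly after reflecting the essentially unique embedding of $H$). Three-vertex-connectivity of $H$ is immediate from the external $3$-vertex-connectivity of $Q_B$ in $G - v$, which by definition asserts that contracting all vertices of $G - v$ outside $Q_B$ to a single supervertex---identified with $v''$---produces a $3$-vertex-connected graph, namely $H$ itself. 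The four-vertex requirement follows from $\deg(v) \geq 4$ in $G$ combined with the fact that exactly one of $v$'s neighbors lies in the isolated component, giving $|Q_B| \geq 3$ and hence $|H| \geq 4$.

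Having verified that $H$ is polyhedral, we perform the expansion of $v'$ by $H$ in $(C', F')$. By the Expansion Lemma (Theorem~\ref{thm:expansion}) this yields a well-connected clustering $(C, F)$. A direct inspection confirms that its cluster graph is isomorphic to $G$: the newly introduced cluster vertex for $Q_B$ is identified with $v$, whose edges in $G$ are precisely those to $Q_B$, supplied by the cluster graph construction, together with the edge to $v$'s isolated neighbor, which is identified with the next-larger cluster vertex containing $Q_B$---namely the old parent of $v'$ in the clustering of $G'$. Applying Theorem~\ref{thm:characterize} in the reverse direction, $G$ is the graph of a $4$-treetope with base graph $F$, and by construction $Q = Q_B \cup \{v\}$ is an extremal cluster of this treetope.
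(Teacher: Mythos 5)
The overall architecture of your argument matches the paper's (very terse) proof: interpret the inverse of the contraction as an expansion operation, invoke \autoref{thm:expansion} to show the result is a well-connected clustering, and then apply \autoref{thm:characterize} to realize $G$ as a $4$-treetope. Your construction of the expansion gadget $H$ as the contraction of $(G-v)\setminus Q_B$ to a single supervertex $v''$ is also the right object: its $3$-connectivity is exactly the external $3$-connectivity condition in the definition of a candidate vertex, and the degree count for $v''$ versus $v'$ (using the matching condition and the uniqueness of $v'$'s parent in $G'$) is handled correctly.

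The gap is in your argument that $H$ is \emph{planar}. You write that ``the matching property then ensures that $v''$ can be placed into a face of that embedding with its incident edges routed without crossings.'' This does not follow. The matching condition only says that each vertex of $Q_B$ has at most one edge leaving $Q$; it says nothing about whether the set of $Q_B$-vertices that have such an edge (i.e.\ the set of vertices to which $v''$ must be joined) all lie on the boundary of a single face of any planar embedding of $G[Q_B]$. That cofaciality is precisely what is needed to insert $v''$ without crossings, and it is not a consequence of $G[N(v)]$ being planar together with the matching condition. Concretely, $G[Q_B]$ could be a $3$-connected $3$-regular planar graph such as the cube graph, with every vertex of $Q_B$ carrying one matching edge; then $H$ is the cube with a vertex joined to all eight corners, which is $3$-connected but not planar, because no face of the cube is incident to all eight vertices. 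So external $3$-connectivity plus planarity of $G[Q_B]$ plus the matching condition do not by themselves force $H$ to be polyhedral, and some additional argument — presumably exploiting the hypothesis that $G'$ has a treetope realization with $v'$ in the base, which you invoke only to match cyclic orders \emph{after} assuming planarity — would be needed to close this step. As written, the deduction that $H$ is polyhedral, which is the crux of the whole proof, is not established.
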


\begin{proof}
The reversal of the contraction operation can be interpreted as an expansion operation in a well-connected clustering whose cluster graph is~$G'$, taking it to a well-connected clustering whose cluster graph to~$G$. The result follows by \autoref{thm:characterize}.
\end{proof}

\subsection{The algorithm}

Based on the analysis of the previous sections, we can test whether a given graph $G$ is the graph of a $4$-treetope as follows:
\begin{itemize}
\item Initialize a set $K$ of known base vertices of $G$ to be the empty set
\item While $G$ contains a candidate vertex $v$ that does not belong to~$K$:
\begin{itemize}
\item Contract the cluster of $v$ into a single supervertex $v'$.
\item Add $v'$ to $K$.
\end{itemize}
\item If the remaining graph contains a universal vertex $u$ that does not belong to $K$, and the vertices other than $u$ induce a polyhedral graph, return yes. Otherwise, return no.
\end{itemize}

\begin{theorem}
The algorithm described above correctly tests whether its input is the graph of a $4$-treetope, in polynomial time.
\end{theorem}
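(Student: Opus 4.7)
The plan is to treat the algorithm as producing a sequence of graphs $G = G_0, G_1, \ldots, G_T$ and to prove correctness by maintaining the invariant that each $G_t$ is the graph of a 4-treetope in which every vertex of the current set $K$ is a base vertex. Soundness will then follow by reversing the contractions from $G_T$ back to $G_0$ via \autoref{lem:contraction-reversal}, while completeness will follow by showing that as long as $G_t$ is not a pyramid, the algorithm can always find a candidate vertex outside $K$ to contract.

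For the completeness direction, assume $G$ is the graph of a 4-treetope. At the start $K = \emptyset$ and the invariant holds. At each step, if $G_t$ has a candidate vertex $v \notin K$, then by \autoref{lem:candidates-are-good} the cluster of $v$ is an extremal cluster of a 4-treetope realization of $G_t$ in which $K$ still sits in the base, so contracting it corresponds to undoing an expansion of the associated well-connected clustering (\autoref{thm:expansion}) and produces a smaller 4-treetope $G_{t+1}$ in which the new supervertex is a base vertex and every previously marked vertex of $K$ remains a base vertex. If instead no candidate vertex outside $K$ exists, the invariant forces $G_t$ to be a pyramid: by \autoref{obs:pyramid-is-base-case} a non-pyramid 4-treetope has extremal vertices, every extremal vertex satisfies the candidate conditions, and such vertices cannot lie in $K$ because $K$ contains only base vertices. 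In the resulting pyramid the apex is universal, lies outside $K$, and its removal leaves a polyhedral graph, so the algorithm returns yes.

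For soundness, suppose the algorithm returns yes at step $T$. Then $G_T$ has a universal vertex $u \notin K$ whose removal leaves a polyhedral graph, and by Steinitz's theorem this is the graph of a 4-pyramid with apex $u$, in which every vertex of $K$ lies in the base (since $K \subseteq V(G_T) \setminus \{u\}$). Walking the contractions in reverse, at each step the supervertex added to $K$ lies in the base of the current realization, so \autoref{lem:contraction-reversal} promotes the 4-treetope realization of $G_t$ to one of $G_{t-1}$, ensuring by induction that $G = G_0$ itself is realized as a 4-treetope.

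For the runtime, each contraction strictly reduces the vertex count (a candidate has at least four neighbors, so its cluster has at least four vertices collapsing to one), giving at most $O(n)$ iterations of the loop. Each iteration checks the candidate conditions at each vertex: a planarity test of the induced neighborhood subgraph, a connectivity check on the neighborhood, an external 3-vertex-connectivity test via max-flow in the contracted graph, and a matching condition, all polynomial. The final pyramid check is polynomial via standard planarity and 3-connectivity testing. The main obstacle I expect is verifying the invariant in the completeness step, because a candidate vertex need not be an extremal vertex of the tracked realization but may instead be a base vertex that merely looks extremal, as in the tetrahedral-prism example; this difficulty is exactly what \autoref{lem:candidates-are-good} is designed to resolve, and the proposal rests on applying it carefully at each step so that the cluster contracted by the algorithm is always a genuine extremal cluster compatible with the existing choice of base.
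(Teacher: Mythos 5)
Your proof is correct and follows essentially the same approach as the paper: it uses \autoref{lem:candidates-are-good} for the completeness direction, \autoref{obs:pyramid-is-base-case} to identify the pyramid terminal case, and \autoref{lem:contraction-reversal} to reverse the contractions in the soundness direction, with the same vertex-count argument for the polynomial time bound. Your writeup is a bit more explicit about the loop invariant (that $K$ always lies in the base of the current intermediate realization), which the paper states only implicitly when it says each step "will correctly mark the resulting supervertex as part of the base," but the substance and structure of the argument are the same.
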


\begin{proof}
Each step involves testing graph properties such as planarity that are already known to be polynomial, and each iteration of the loop reduces the size of the graph by at least one vertex, so the polynomial time bound for the algorithm is clear.

If $G$ is the graph of a $4$-treetope $P$ with base $B$, then by \autoref{lem:candidates-are-good} each iteration will correctly perform a contraction of an extremal cluster in $G$, and will correctly mark the resulting supervertex as part of the base of the contracted graph. Therefore, in this case, the algorithm will eventually reach a $4$-treetope that has no extremal vertex, which by \autoref{obs:pyramid-is-base-case} must be a $4$-pyramid. In such a graph, there does exist a universal vertex whose neighborhood is polyhedral, and the algorithm will correctly answer yes.

Conversely, suppose that the algorithm does answer yes. Then it will have found a sequence of contractions that reduce the given graph to the graph of a $4$-pyramid, whose apex does not belong to the set~$K$.
Then by \autoref{lem:contraction-reversal} each contraction made by the algorithm can be reversed to produce a $4$-treetope whose canopy is disjoint from~$K$. Therefore, the algorithm's "yes" answer is correct.
\end{proof}

We remark that recognizing the graphs of pyramids over arbitrary $4$-polytopes, and therefore also recognizing the graphs of $5$-treetopes, is as difficult as recognizing the graphs of arbitrary $4$-polytopes, which we expect to be complete for the existential theory of the reals.

\section{Sparsity of $4$-treetope graphs}
\label{sec:properties}

As unions of planar graphs and trees, the graphs of $4$-treetopes are necessarily sparse graphs.
But although the graphs of $3$-treetopes (the Halin graphs) have bounded treewidth, the same is not true for $4$-treetopes, because they include the graphs formed by adding an apex to arbitrary planar graphs. More strongly, as we show below, the $4$-treetopes are not contained in any nontrivial minor-closed graph family. Nevertheless, they obey stronger forms of sparsity than merely having a low ratio of edges to vertices. In particular, they have bounded expansion in the sense described by Ne{\v{s}}et{\v{r}}il and Ossona de Mendez~\cite{NesOss-Sparsity-12}.

\subsection{Knotted embeddings and arbitrary minors}

To prove that $4$-treetope graphs do not belong to any nontrivial minor-closed family, we study the knots, links, and graphs that can be embedded on the boundary of a $4$-polytope (topologically a $3$-sphere) using the vertices and edges of the polytope. For $4$-polytopes that are pyramids, the graph of the polytope cannot contain any nontrivial knots or links. For, in this case, every cycle either remains entirely on the base of the pyramid or it forms a path on the base together with two edges connecting the path endpoints to the apex. Thus, with respect to the $2$-sphere boundary of the base, it forms at most a $1$-bridge knot, which must therefore be the unknot. However, this does not extend to treetopes:

\begin{observation}
Let $K$ be an arbitrary knot or link in a topological $3$-sphere.
Then there exists a $4$-treetope $P$ whose graph contains a knot or link that is embedded into the boundary of $P$ in a way that is topologically equivalent to $K$.
\end{observation}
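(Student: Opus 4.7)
The plan is to represent $K$ as a polygonal curve in ``bridge'' form, with some of its arcs lying on a 2-sphere that will become the boundary of the base polyhedron and the remaining arcs lying above that 2-sphere as a set of ``humps'' to be realized by the canopy of a 4-treetope. I will then invoke \autoref{thm:expansion} and \autoref{thm:characterize} to convert this layout into a convex 4-polytope whose 1-skeleton contains~$K$.

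First, by an ambient isotopy of $S^3$ I would put $K$ into the form $\alpha_1 \cup \beta_1 \cup \cdots \cup \alpha_m \cup \beta_m$, where the $\beta_i$ are disjoint polygonal arcs lying on a 2-sphere $S\subset S^3$ and the $\alpha_i$ are disjoint polygonal arcs in the upper 3-ball bounded by $S$, alternating with the $\beta_i$ around $K$. This is always possible: any regular planar projection of $K$ yields a diagram whose under-strands can be placed on the (compactified) projection plane, while its over-strands are lifted slightly into the upper half-space. Next, I would use Steinitz's theorem to realize $S$ as the boundary of a convex 3-polytope $B$ whose 1-skeleton~$F$, drawn on~$S$, contains every $\beta_i$ as a simple path and every endpoint of every $\alpha_i$ as a distinct vertex, adding enough auxiliary vertices and edges to keep $F$ a 3-connected planar graph of high enough local degree.

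I would then construct a well-connected clustering $(C,F)$ through a sequence of expansion operations (\autoref{thm:expansion}): for each upper arc $\alpha_i$, introduce a new cluster $X_i$ whose cluster vertex $c_i$ is adjacent to the two endpoints of $\alpha_i$ together with enough neighboring base vertices to satisfy the degree-four, external 3-connectivity, and matching conditions. The $X_i$ are nested in an order that mirrors the over-crossing hierarchy of the diagram so that each expansion preserves well-connectedness. Applying \autoref{thm:characterize} to this clustering produces a convex 4-treetope $P$ in $\mathbb{R}^4$ with base $B$; in $\partial P \cong S^3$, the base fills one closed 3-ball while all canopy vertices lie in the complementary 3-ball, sharing the 2-sphere $\partial B = S$. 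The inductive geometric construction in the proof of \autoref{thm:characterize} positions each new canopy vertex $c_i$ along the canopy edge descending from its parent, so that $c_i$ can be chosen to sit over the interior of $\alpha_i$. The cycle, or disjoint cycles in the link case, consisting of the base arcs $\beta_i$ together with the two-edge canopy paths through each $c_i$, is then a subgraph of the 1-skeleton of $P$ that is isotopic in $\partial P$ to~$K$: each canopy two-edge path is isotopic rel endpoints (within the upper 3-ball) to the corresponding $\alpha_i$, and the collection is disjoint by construction.

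The hard part will be the combinatorial bookkeeping in the cluster-construction step: the expansions must be legal, the cumulative clustering must remain well-connected, and the nested cluster structure must be realizable by \autoref{thm:characterize} in a way that respects the over--under information of the diagram. Because the axioms of well-connected clustering are local conditions on each cluster and its complement, I expect that sufficient padding in $B$, together with a careful ordering of expansions that mirrors the layering of over-strands in the knot diagram, will suffice. A crucial sanity check is that the argument ``push each canopy arc into the base 2-sphere'' does not collapse $K$ to the unknot: individual canopy arcs are trivial in the upper 3-ball, but their collection is a nontrivial tangle whose closure by the $\beta_i$ reproduces~$K$, exactly as in a bridge presentation.
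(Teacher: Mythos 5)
The core idea here matches the paper's: route the under-parts of $K$ along the base sphere and realize each over-arc as a two-edge detour through a canopy vertex. But the bridge-presentation framing introduces two genuine confusions that the paper's more local construction avoids. First, you write that ``the $X_i$ are nested in an order that mirrors the over-crossing hierarchy of the diagram.'' In a bridge presentation there is no such hierarchy: the over-arcs $\alpha_i$ all live at the same level in the upper ball, disjoint from one another, and the natural clustering makes the $X_i$ pairwise disjoint siblings (all children of the root in the canopy tree), not nested. The paper achieves exactly this by surrounding each crossing point with a tiny disjoint circle, so each cluster is a local five-vertex gadget and the disjointness and the ``at most one edge between disjoint clusters'' condition are immediate. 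If instead you take your $\alpha_i$ to be maximal over-arcs (which is what ``bridge form'' usually connotes), the endpoints of $\alpha_i$ can be far apart and $X_i$ must be a long connected path of base vertices between them; it is then not at all obvious that you can choose all the $X_i$ pairwise disjoint while still meeting the matching condition against the $\beta_j$ paths snaking between them. That step deserves an explicit argument, or you should retreat to one small hump per crossing, in which case you have essentially re-derived the paper's construction.

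Second, your sanity check claims that the collection of canopy two-edge arcs is ``a nontrivial tangle.'' This is exactly backwards. In a $4$-treetope whose canopy is a star (root $c_0$ plus one leaf $c_i$ per cluster), the family of arcs $p\to c_i\to q$ forms a \emph{trivial} tangle in the complementary $3$-ball --- and this is precisely what a bridge presentation requires: both the upper and lower tangles are trivial, and the knot is encoded entirely in how they are glued along the bridge sphere (i.e.\ where the endpoints lie and how the $\beta_j$ connect them). As stated, your explanation of why the construction does not collapse to the unknot is wrong, even though the intuition that the $\beta_j$ carry the essential information is right. The paper sidesteps all of this by replacing strands one crossing at a time, which is visibly a local, over/under-preserving move and needs no global tangle argument.
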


\begin{proof}
Draw a diagram of $K$ as a self-crossing curve in the plane, such that only two strands of the curve meet at each crossing point. Draw a circle surrounding each crossing point, small enough that it does not cross or contain any other such circle and intersecting the diagram of $K$ in exactly four crossing points. Add additional subdivision vertices along the strands of $K$ outside these circles, and edges between these vertices, as necessary so that the result becomes $3$-vertex-connected. Form a well-clustered graph by adding a cluster consisting of each original crossing point of $K$ and the four points where the circle surrounding it crosses $K$, and realize this clustering as a $4$-treetope. Then, at each crossing of $K$, one of the two strands of $K$ may be replaced by a two-edge path through the cluster vertex of the crossing, separating it from the other strand.
\end{proof}

A similar construction shows that, more generally, every embedding of a graph into three-dimensional space has a topologically equivalent embedding as a subgraph of a $4$-treetope within the boundary $3$-sphere of the treetope. In particular, this is true for embeddings of arbitrarily large complete graphs. Therefore, there are no forbidden minors for the graphs of $4$-treetopes. In this, again, the $4$-treetopes differ from the $4$-pyramids, for which the seven graphs of the Petersen family, and many others, are known forbidden minors~\cite{Pie-Chico-14}.

\subsection{Separators and bounded expansion}

A \emph{separator} of an $n$-vertex graph is a subset of vertices the removal of which partitions the remaining subgraph into connected components whose number of vertices is at most a constant fraction of~$n$. As is well known, planar graphs have separators of size $O(\sqrt n)$; this property forms the basis for many efficient algorithms for these graphs. We will use a stronger form of this planar separator theorem:

\begin{lemma}[Miller~\cite{Mil-JCSS-86,AloSeyTho-SJDM-94}]
\label{lem:simple-cycle-separator}
Every maximal planar graph has a separator of size $O(\sqrt n)$ that forms a simple cycle,
such that at most $2n/3$ vertices are inside the cycle and at most $2n/3$ vertices are outside the cycle.
\end{lemma}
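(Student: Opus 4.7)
The plan is to prove the simple cycle separator theorem by adapting the classical Lipton--Tarjan layering argument and then upgrading the resulting vertex separator to a single simple cycle, using the fact that maximal planar graphs are triangulations.

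First I would fix a planar embedding of $G$ and pick an arbitrary root vertex $r$. A breadth-first search from $r$ partitions $V(G)$ into levels $L_0,L_1,\dots,L_k$. Let $m$ be the median level, i.e. the smallest index for which $|L_0\cup\dots\cup L_m|>n/2$. By pigeonhole there exist two levels $L_i$ and $L_j$ with $i<m\le j$, $j-i=O(\sqrt n)$, and $|L_i|+|L_j|=O(\sqrt n)$, since otherwise the total vertex count would exceed $n$. Classical Lipton--Tarjan reasoning shows that removing $L_i\cup L_j$ leaves no component larger than $2n/3$, so we already have a balanced vertex separator of the correct size; the remaining work is to route it along a \emph{single simple cycle}.

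To perform that routing I would use the triangulated structure. In the BFS tree $T$, consider any non-tree edge $e=uv$. Its fundamental cycle $C_e$ consists of $e$ together with the two tree paths from $u$ and $v$ up to their lowest common ancestor, and because the plane embedding is fixed, $C_e$ partitions the remaining vertices into an ``inside'' set and an ``outside'' set. I would restrict attention to non-tree edges both of whose endpoints lie between levels $L_i$ and $L_j$: such edges exist because $G$ is a triangulation, and the corresponding cycle $C_e$ has length $O(\sqrt n)$ (at most $2(j-i)+1$). Thus every such candidate cycle is short; the question is whether some candidate is also balanced.

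The main obstacle, and the core of the proof, is producing one candidate edge whose fundamental cycle achieves the $2n/3$ balance. I would handle this by a discrete sweep: order the non-tree edges in the strip between $L_i$ and $L_j$ by the natural cyclic order induced by the planar embedding, and track how the inside-count of $C_e$ changes as $e$ moves through the order. Because consecutive edges in this order share a triangle, the inside-count changes by at most $1$ at each step, so a discrete intermediate-value argument yields an edge $e^*$ with inside-count within $|L_i|+|L_j|=O(\sqrt n)$ of $n/3$. That single cycle $C_{e^*}$ is the desired simple cycle separator: it has length $O(\sqrt n)$ and splits $V(G)\setminus V(C_{e^*})$ into an inside and an outside of size at most $2n/3$ each. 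A final cleanup verifies that $C_{e^*}$ is simple (no repeated vertices), which follows from $T$ being a tree and $e^*$ contributing a single chord.
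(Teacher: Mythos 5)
This lemma is cited by the paper (to Miller and to Alon--Seymour--Thomas) rather than proved, so there is no in-paper argument to compare against; what follows is an assessment of your proposal on its own terms.

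Your plan has the right ingredients (BFS levels plus a fundamental-cycle sweep), but two of the load-bearing claims are false as stated. First, removing $L_i\cup L_j$ does \emph{not} ``leave no component larger than $2n/3$'': the pieces below $L_i$ and above $L_j$ are bounded by $n/2$ because $i<m\le j$, but the middle slab $L_{i+1}\cup\dots\cup L_{j-1}$ is completely uncontrolled and can contain nearly all $n$ vertices. In Lipton--Tarjan this is exactly the hard case, and the fundamental-cycle step is what produces balance there, not a cosmetic ``routing'' of an already-balanced separator. Second, for a non-tree edge $e=uv$ with both endpoints between levels $i$ and $j$, the fundamental cycle $C_e$ need not have length $2(j-i)+1$: the lowest common ancestor of $u$ and $v$ in the BFS tree can lie far above level $i$ (as shallow as the root), so $C_e$ can have length $\Theta(n)$. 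The standard fix is to first contract $L_0\cup\dots\cup L_i$ to a single vertex (and delete everything beyond $L_j$), re-triangulate, and only then run the fundamental-cycle argument on the contracted graph, whose BFS tree has radius $O(j-i)=O(\sqrt n)$; your sweep would then have to be carried out there, and it also needs more care than ``consecutive edges share a triangle, so the inside count changes by at most $1$'' (two fundamental cycles sharing a face differ by that face, but the resulting change in inside-count is nonzero precisely because a vertex passes from the boundary to the interior or vice versa, and one must verify the triangle in question is actually a face, not just a $3$-cycle). Finally, even after all of this you obtain a separator of the form ``two BFS levels \emph{plus} a fundamental cycle,'' which is the Lipton--Tarjan separator but is not yet a single simple cycle; upgrading it to a simple cycle is the actual content of Miller's theorem and requires a genuinely different argument (Miller reweights and iterates, or one can follow Alon--Seymour--Thomas), not a post-processing step.
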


As we will show, the graphs of $4$-treetopes also obey a similar separator theorem. To prove this, we define a superclass of the clustered planar drawings used to define $4$-treetopes.

\begin{definition}
We define a \emph{sparse clustering} to be a clustered planar drawing with the property that for each two disjoint vertex sets $X$ and $Y$ that are clusters, complements of clusters, or singleton vertex sets, and whose union is not the entire vertex set, at most one edge of $G$ has one endpoint in $X$ and one endpoint in $Y$. We define a \emph{sparse cluster graph} to be the cluster graph of a sparse clustering.
\end{definition} 

That is, we keep one of the main requirements of a well-connected clustering, but we forgo the requirements of minimum degree four per cluster vertex and external 3-vertex-connectivity of each cluster.

\begin{theorem}
\label{thm:separator}
Every $n$-vertex subgraph of a sparse cluster graph has a separator of size $O(\sqrt n)$. In particular this is true for the graphs of $4$-treetopes.
\end{theorem}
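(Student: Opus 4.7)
The plan is to reduce the separator problem for subgraphs of sparse cluster graphs to a weighted version of Miller's cycle-separator theorem applied to the planar base graph $G$, handling cluster vertices through a combination of weight distribution on base vertices and a charging argument that exploits the sparse clustering condition.

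Given an $n$-vertex subgraph $H$ of a sparse cluster graph, I will partition $V(H)$ into base vertices $H_G = V(H) \cap V(G)$ and cluster vertices $H_C = V(H) \setminus V(G)$. Since $H[H_G]$ is a subgraph of the planar graph $G$, it inherits a planar embedding, and I can triangulate it in this embedding to obtain a maximal planar graph on at most $n$ vertices. I will weight each vertex of $H_G$ by $1$ plus a contribution from cluster vertices: for each $c_X \in H_C$, one unit of total weight is distributed among its $H$-neighbors in $H_G$, split so that no individual vertex receives more than $O(\sqrt{n})$ weight (enough representatives exist inside large clusters to make this possible). Applying a weighted variant of Lemma~\ref{lem:simple-cycle-separator} then yields a simple cycle $\sigma$ of length $O(\sqrt{n})$ whose inside and outside each carry weight at most $2n/3$.

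Next, I will extend the vertex set of $\sigma$ to a separator of $H$. Each cluster vertex $c_X \in H_C$ whose $H$-neighbors all lie on one side of $\sigma$ (together with the on-cycle vertices) is assigned to that side; cluster vertices whose neighbors span both sides are added to the separator. Any remaining parent--child edges of the cluster tree that still cross between sides after this assignment are resolved by applying a centroid-style balancing argument within the cluster tree, which adds at most $O(\sqrt{n})$ further cluster vertices to the separator.

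The main obstacle is bounding the number of \emph{straddling} cluster vertices --- those whose $H$-neighbors appear on both sides of $\sigma$ --- by $O(\sqrt{n})$. The key topological observation is that if $c_X$ straddles, then the Jordan curve representing $X$'s region in the plane must cross $\sigma$ at least twice, and each such crossing is realized by an edge of $\sigma$ having one endpoint in $X$ and one outside $X$. The sparse clustering condition caps the number of $G$-edges from $X$ to any given outside vertex at one, and a charging scheme that assigns each straddling cluster to two \emph{extremal} boundary edges of $\sigma$ along its cycle order, together with a laminarity argument that controls the multiplicity of charges received by each $\sigma$-edge, should bound the straddling cluster count by $O(|\sigma|) = O(\sqrt{n})$. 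Since every $4$-treetope graph arises as the cluster graph of a well-connected, hence sparse, clustering by Observation~\ref{obs:treetope-cluster}, the resulting separator bound applies in particular to the graphs of $4$-treetopes.
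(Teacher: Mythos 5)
Your plan is genuinely different from the paper's: you triangulate the base planar graph alone, run a weighted cycle separator there, and then try to fold the cluster vertices in afterwards via side assignment, straddling charges, and a centroid step. The paper instead builds an auxiliary planar graph $H$ that \emph{already contains} the cluster structure: it subdivides each base edge at its crossings with cluster boundaries, adds a cycle of edges along each cluster boundary through those crossing points, triangulates, and only then applies the cycle separator. Because every boundary is a cycle in $H$, any cycle separator of $H$ interacts with boundaries only at crossing vertices, and the sparse-clustering condition gives $O(n)$ crossing vertices in total (one per parent--child pair in the cluster tree), so $H$ has $O(n)$ vertices and the final lifted separator has size $O(\sqrt n)$ automatically. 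There is no separate ``straddling'' case to control.

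The gap in your version is precisely the step you flag as the main obstacle and then wave at: bounding the number of straddling cluster vertices by $O(\sqrt n)$. Your charging argument attaches each straddler $X$ to edges of $\sigma$ that cross $X$'s boundary curve, but a single $\sigma$-edge can be crossed by arbitrarily many nested boundary curves, and these need not be edges of $G$ at all (they may be triangulation edges, to which the sparse-clustering bound on $G$-edges says nothing). The ``laminarity argument that controls the multiplicity of charges'' is stated as a hope, not proved, and I do not see how to prove it at this level of generality: for it to work you would need each straddling cluster to own a $\sigma$-vertex in its own annulus, but $\sigma$ can pass through an annulus along a chord without touching any vertex there. Two secondary soft spots: (i) your weight distribution assumes each cluster vertex has enough base-vertex neighbours in the subgraph $H$ to spread its unit of weight, which can fail in a subgraph where a cluster keeps only cluster-vertex neighbours; (ii) the final ``centroid-style balancing'' is an independent operation on the cluster tree that can undo the weight balance established by $\sigma$, and you do not reconcile the two. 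To make your route airtight you would essentially have to reintroduce the crossing vertices and boundary cycles into the graph before running the separator — which is the paper's construction.
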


\begin{proof}
Let $G$ be a subgraph of a cluster graph of a sparse clustering.
We can assume that the whole cluster graph of the same clustering does not include any additional vertices, for if we had a clustering with additional vertices in the underlying graph or additional curves defining more cluster vertices than the ones in $G$, we could delete those vertices or curves from the clustering and obtain another clustering of which $G$ is also a subgraph. And since additional edges only make it more difficult to obtain a small separator, we may assume without loss of generality that $G$ is the whole cluster graph rather than a proper subgraph.

We define a planar graph $H$ from the clustering by the following steps:
\begin{itemize}
\item Augment the underlying planar graph by a new vertex at each crossing point of an edge and a cluster boundary (subdividing the edge at that point)
\item Add a cycle of edges that follow the curve surrounding each cluster, connecting the crossing points on that cluster. If there are only one or two crossing points, add a constant number of additional vertices on the curve so that it can be completed to a cycle of edges without forming a multigraph.
\item Complete the resulting planar embedded graph to a maximal planar graph, preserving its embedding.
\end{itemize}
The number of crossing points added on one of the curves of the clustering is at most proportional to the number of children of the corresponding cluster in the cluster hierarchy, from which it follows that $H$ has $O(n)$ vertices. By repeatedly applying \autoref{lem:simple-cycle-separator} we can find a collection of $O(1)$ cycles that together partition $H$ into connected subgraphs of at most $n/3$ vertices, and that each have length $O(\sqrt n)$.

We construct a separator that includes each vertex of the underlying planar graph that belongs to one of these separating cycles. We also include in the separator a cluster vertex for each region of the clustered drawing that is crossed by one of the separating cycles. Thus, the number of cluster vertices included in the separator is proportional to the number of crossing vertices of $H$ included in the separating cycles in $H$.
This separator partitions $G$ into subgraphs that correspond to the connected components of $H$. 

Each vertex in one of the connected components of $H$ that remain after removing the cycles either directly corresponds to a vertex of $G$ (if it is a vertex of the underlying planar graph) or to two cluster vertices in $G$ (if it is a crossing vertex in $H$). Thus, each of the remaining connected subgraphs of $G$ after the separator vertices are removed has at most $2n/3$ vertices.
\end{proof}

\begin{definition}
A $t$-\emph{shallow minor} of a given graph $G$ is another graph obtained from $G$ by contracting a collection of vertex-disjoint connected subgraphs of radius at most $t$, and then performing an arbitrary sequence of edge and vertex deletions on the result. A family of graphs has \emph{bounded expansion} if there exists a function $f$ such that every $t$-shallow minor of a graph in the family has a ratio of edges to vertices that is at most $f(t)$~\cite{NesOss-Sparsity-12}. More strongly, a family of graphs has \emph{polynomial expansion} if it has bounded expansion with a function $f$ that is bounded by a polynomial in~$t$~\cite{DvoNor-SSS-15}.
\end{definition}

These properties are important in some algorithmic applications. In particular, subgraph isomorphism is fixed-parameter tractable (parameterized by subgraph size) for graph families of bounded expansion~\cite{NesOss-Sparsity-12}, and several graph optimization problems including maximum independent set and minimum dominating set have polynomial-time approximation schemes on graphs of polynomial expansion~\cite{HarQua-ESA-15}.

\begin{theorem}
Subgraphs of sparse cluster graphs, and in particular the graphs of treetopes, have polynomial expansion.
\end{theorem}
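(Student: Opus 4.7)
The plan is to invoke the characterization of polynomial expansion in terms of separators due to Dvořák and Norin~\cite{DvoNor-SSS-15}, which asserts that a hereditary graph class has polynomial expansion if and only if every $n$-vertex member of the class has a balanced separator of size $O(n^{1-\varepsilon})$ for some constant $\varepsilon>0$. Since the statement concerns subgraphs of sparse cluster graphs, and treetope graphs are themselves (cluster graphs of) well-connected clusterings, which are a special case of sparse clusterings, it suffices to establish polynomial expansion for the broader class.

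First I would observe that the class $\mathcal{C}$ of graphs that arise as subgraphs of sparse cluster graphs is hereditary by definition: any subgraph of a subgraph of a sparse cluster graph is again a subgraph of that same sparse cluster graph. Second, by \autoref{thm:separator}, every $n$-vertex graph in $\mathcal{C}$ admits a balanced separator of size $O(\sqrt{n})$, so the class satisfies the strongly sublinear separator condition with exponent $\varepsilon = 1/2$. Applying the Dvořák–Norin theorem to $\mathcal{C}$ then yields that $\mathcal{C}$ has polynomial expansion; that is, there is a polynomial $f(t)$ such that every $t$-shallow minor of a graph in $\mathcal{C}$ has edge-to-vertex ratio at most $f(t)$.

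Finally, since the graph of any $4$-treetope is a cluster graph of a well-connected clustering by \autoref{obs:treetope-cluster}, and since every well-connected clustering is in particular a sparse clustering (the matching condition in the definition of well-connected clusterings is precisely the defining condition of sparse clusterings), treetope graphs lie in $\mathcal{C}$ and so inherit polynomial expansion as a special case.

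The main obstacle is really only a bookkeeping one: making sure the separator bound of \autoref{thm:separator} applies to every subgraph (not merely the full cluster graph) and that the hereditary hypothesis required by Dvořák–Norin is satisfied. Both of these are already built into the formulation of \autoref{thm:separator} and into the way $\mathcal{C}$ is defined, so once those observations are made explicit, the conclusion follows directly from the cited characterization without any further combinatorial work on shallow minors.
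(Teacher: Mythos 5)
Your proposal is correct and follows exactly the same route as the paper's proof: note the class is hereditary, invoke the $O(\sqrt{n})$ separator bound from Theorem \ref{thm:separator}, and apply the Dvořák–Norin characterization of polynomial expansion for hereditary classes with strongly sublinear separators. The extra remark that well-connected clusterings are sparse clusterings and hence treetope graphs lie in the class is a reasonable bit of bookkeeping that the paper leaves implicit.
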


\begin{proof}
The subgraphs of sparse cluster graphs form a hereditary graph class: a subgraph of a subgraph of a sparse cluster graph is itself a subgraph of a sparse cluster graph. As Dvo{\v{r}}{\'a}k and Norin~\cite{DvoNor-SSS-15} show, every hereditary graph class obeying a separator theorem with separator size $O(n^{1-\epsilon})$ for constant $\epsilon>0$ has polynomial expansion. The result follows from \autoref{thm:separator}.
\end{proof}

\section{Conclusions}

We have defined an interesting class of polytopes, generalizing the Halin graphs to higher dimensions. We have characterized the graphs of four-dimensional polytopes in this class in terms of the cluster graphs of certain clustered planar graph drawings, and used this characterization to develop polynomial time recognition algorithms for these graphs. We have also begun a preliminary graph-theoretic investigation of the properties of these graphs, showing that (unlike the graphs of Halin graphs and three-dimensional polyhedra) they do not have bounded treewidth or forbidden minors, but they do have bounded expansion.

Although our algorithms can be used to recognize the graphs of $4$-treetopes quickly, they do not immediately lead to a polynomial-time algorithm for constructing a realization of these polytopes. The problem is in our dependence on the Barnette--Gr\"unbaum realization of a three-dimensional polytope with a pre-specified face shape. Even if this specified face has small integer coordinates, the use of induction by Barnette and Gr\"unbaum may cause their realization to have doubly exponential coordinates, requiring an exponential number of bits to represent precisely. This issue naturally raises several questions: Can $4$-treetopes always be realized with integer coordinates (as $3$-polytopes can and general $4$-polytopes cannot)? If so, can we represent those coordinates using a polynomial number of bits per coordinate? And if that is also true, can we construct a realization in polynomial time?

\bibliographystyle{abuser}
\bibliography{treetope}

\smallskip 

\end{document}